\theoremstyle{plain}
\newtheorem{theorem}{Theorem}
\newtheorem{lemma}{Lemma}
\theoremstyle{definition}
\newtheorem{definition}{Definition}
\newtheorem{remark}{Remark}
\newcommand{\secref}[1]{\S\ref{#1}}
\newcommand{\thmref}[1]{Theorem~\ref{#1}}
\newcommand{\lemref}[1]{Lemma~\ref{#1}}
\newcommand{\appref}[1]{Appendix~\ref{#1}}
\newcommand{\ie}{i.e.,\xspace}
\newcommand{\growingmid}{\mathrel{}\middle|\mathrel{}}
\newcommand{\RR}{\ensuremath{\mathbb{R}}}
\newcommand{\e}{{\mathrm e}}
\newcommand{\numblocks}{q}
\newcommand{\indbound}{p}
\renewcommand{\Pr}[1]{\mbox{\rm\bf Pr}\left[#1\right]}
\newcommand{\Ex}[1]{\mbox{\rm\bf E}\left[#1\right]}
\newcommand{\p}{\pi}
\newcommand{\sig}{\sigma}
\newcommand{\ts}{[n]}
\newcommand{\opt}{\mathcal{O}}
\newcommand{\alg}[3]{\mathsf{ALG}(#1,#2,#3)}
\newcommand{\algset}{\mathcal{A}}
\newcommand{\thresh}{\tau}
\newcommand{\sam}{S}
\newcommand{\nsam}{S^{\mathsf{c}}}
\newcommand{\bind}{\rho}
\newcommand{\val}[1]{v_{\sig^{-1}(#1)}}
\newcommand{\ex}[1]{\mbox{\rm\bf E}\left[#1\right]}
\newcommand{\valfun}{\textrm{val}}
\newcommand{\valoptsam}{\valfun(\opt\cap \sam)}
\newcommand{\valalgsam}{\valfun(\algset\cap \sam)}
\newcommand{\valalgnsam}{\valfun(\algset\cap \nsam)}
\newcommand{\valoptnsam}{\valfun(\opt\cap \nsam)}
\newcommand{\algnumsam}{Q}
\newcommand{\valoptsamt}{\valfun([\opt\cap \sam]_{k/2})}
\newcommand{\eps}{\varepsilon}
\newcommand{\ind}[1]{\mathds{I}\left\{#1\right\}}
\newcommand{\pr}[1]{\Pr{#1}}
\DeclarePairedDelimiter\ceil{\lceil}{\rceil}
\newcommand{\items}{\mathcal{U}}
\newcommand{\algo}{{\text{\sc alg}}}
\newcommand{\natperm}{\pi}
\newcommand{\advperm}{\sigma}
\newcommand{\natpd}{{\underline{\natperm}}}
\newcommand{\advpd}{{\underline{\advperm}}}
\newcommand{\idpd}{{\iota}}
\newcommand{\revperm}{{\rho}}
\newcommand{\dists}[1]{{\mathbf{\Delta}(#1)}}
\newcommand{\npde}{\natpd_e}
\newcommand{\npdacf}{\natpd_{c,1}}
\newcommand{\npdacs}{\natpd_{c,2}}
\newcommand{\algoe}{\algo_e}
\newcommand{\algoacf}{\algo_{c,1}}
\newcommand{\algoacs}{\algo_{c,2}}
\newcommand{\algop}{\algo_p}
\newcommand{\pcs}[1]{{\text{PCS}(#1)}}
\newcommand{\gval}[1][\natpd]{{V}^{#1}}
\newcommand{\hardf}{{g}}
\newcommand{\ptz}{{\kappa}}
\newcommand{\ecc}{{A}}
\newcommand{\proj}{{\text{proj}}}
\newcommand{\trans}{{\intercal}}
\newcommand{\cone}{{\mathcal{C}}}
\newcommand{\conesph}{{\mathcal{S}}}
\begin{document}

\begin{titlepage}
\title{Secretary Problems with Non-Uniform Arrival Order}

\author{%
	Thomas Kesselheim\thanks{Max-Planck-Institut f\"ur Informatik, Campus E1 4, 66123 Saarbr\"ucken, Germany, E-Mail: \texttt{thomas.kesselheim@mpi-inf.mpg.de}.}
	\and 
	Robert Kleinberg\thanks{Department of Computer Science, Cornell University, Gates Hall, Ithaca, NY 14853, USA, E-Mail: \texttt{\{rdk, rad\}@cs.cornell.edu}.}
	\and 
	Rad Niazadeh\footnotemark[2]
}

% \date{\today}
\date{}
\maketitle

\begin{abstract}
For a number of problems in the theory of online algorithms,
it is known that the assumption that elements arrive in 
uniformly random order enables the design of algorithms
with much better performance guarantees than under worst-case
assumptions. The quintessential example of this phenomenon
is the secretary problem, in which an algorithm attempts to 
stop a sequence at the moment it observes the maximum value
in the sequence. As is well
known, if the sequence is presented in uniformly random order
there is an algorithm that succeeds with probability $1/e$,
whereas no non-trivial performance guarantee is possible if 
the elements arrive in worst-case order.

In many of the applications of online algorithms, it is reasonable
to assume there is some randomness in the input sequence,
but unreasonable to assume that the arrival ordering is uniformly
random.
This work initiates an investigation into relaxations of the 
random-ordering hypothesis in online algorithms, by focusing on
the secretary problem and asking what performance guarantees one
can prove under relaxed assumptions. Toward this end, we 
present two sets of properties of distributions over permutations as sufficient conditions, called the {\em $(p,q,\delta)$-block-independence property} and {\em$(k,\delta)$-uniform-induced-ordering property}. We show these two are asymptotically equivalent by borrowing some techniques from the celebrated {\em approximation theory}. Moreover, we show they both imply
the existence of secretary algorithms with constant probability
of correct selection, approaching the optimal constant $1/e$
as the related parameters of the property tend towards their extreme values. 
Both of these properties are significantly weaker than the usual assumption
of uniform randomness; we substantiate this
by providing several constructions of distributions that satisfy
$(p,q,\delta)$-block-independence. As one application of our
investigation, we prove that $\Theta(\log \log n)$ is 
the minimum entropy of any permutation distribution 
that permits constant probability of correct selection
in the secretary problem with $n$ elements. While our 
block-independence condition is sufficient for constant
probability of correct selection, it is not necessary;
however, we present complexity-theoretic evidence that
no simple necessary and sufficient criterion exists.
Finally, we explore the extent to which the
performance guarantees of other algorithms
are preserved when one relaxes the uniform random ordering
assumption to $(p,q,\delta)$-block-independence,
obtaining a positive result for Kleinberg's multiple-choice
secretary algorithm and a negative result for the weighted
bipartite matching algorithm of Korula and P\'{a}l.
\end{abstract}

\thispagestyle{empty}
\setcounter{page}{0}

\end{titlepage}

\section{Introduction} \label{sec:intro}

A recurring theme in the theory of online algorithms is that
algorithms may perform much better when their input is 
in (uniformly) random order than when the ordering is worst-case. 
%% one can design algorithms with much better performance 
%% guarantees if one assumes that the
%% input sequence is arranged in (uniformly) random order,
%% compared with the performance guarantees that are
%% achievable under worst-case assumptions. 
The quientessential example of this phenomenon
is the secretary problem, in which an algorithm attempts to 
stop a sequence at the moment it observes the maximum value
in the sequence. As is well
known, if the sequence is presented in uniformly random order
there is an algorithm that succeeds with probability $\frac1e$,
whereas no non-trivial performance guarantee is possible if 
the elements arrive in worst-case order.

In many of the applications of online algorithms, it is reasonable
to assume there is some randomness in the input sequence,
but unreasonable to assume that the input ordering is {\em uniformly}
random. It is therefore of interest to ask which algorithms
have {\em robust} performance guarantees, in the sense that 
the performance guarantee holds not only when the input order
is drawn from the uniform distribution, but whenever the input
order is drawn from a reasonably broad family of distributions
that includes the uniform one.
In other words, we seek relaxations of the standard random-ordering
hypothesis which are weak enough to include many distributions 
of interest, but strong
enough to enable one to prove the same (or qualitatively similar)
performance guarantees for online algorithms.

This work initiates an investigation into relaxations of the 
random-ordering hypothesis in online algorithms, by focusing on
the secretary problem and asking what performance guarantees one
can prove under relaxed assumptions. In the problems we consider
there are three parties: an {\em adversary} that assigns values
to items, {\em nature} which permutes the items into a random
order, and an {\em algorithm} that observes the items and their values
in the order specified by nature. To state our
results, let us say that a distribution over permutations,
% $\natpd \in \dists{S_n}$, 
is {\em secretary-admissible} (abbreviated {\em s-admissible})
if it is the case that when nature uses this distribution to 
sample the ordering of items,
there exists an algorithm that is guaranteed at least a constant 
probability of selecting the element of maximum value, no matter
what values the adversary assigns to elements. 
If this constant probability approaches $\frac{1}{e}$ as 
the number of elements, $n$, goes to infinity, we say that the
distribution is {\em secretary-optimal (s-optimal)}.
\begin{quotation}
\noindent {\bf Question 1:} {\em What natural properties of a distribution
suffice to guarantee that it is s-admissible? What properties
suffice to guarantee that it is s-optimal?}
\end{quotation}
For example, rather than assuming that ordering of
the entire $n$-tuple of items is uniformly random,
suppose we fix a constant $k$ and assume that for every $k$-tuple of 
distinct items, the relative order in which they
appear in the input sequence is $\delta$-close to uniform. 
Does this imply that the distribution is s-admissible?
In \secref{sec:non-uniform} we formalize this 
{\em $(k,\delta)$-uniform-induced-ordering property (UIOP)},
%---$(k,\delta)$-UIOP for short---
and we prove that it implies s-admissibility for $k \geq 3$
and approaches s-optimality as $k \to \infty$ and
$\delta \to 0$. To prove this, we relate the uniform-induced-ordering
property to another property, the {\em $(p,q,\delta)$-block-independence
property (BIP)}, which may be of independent interest. Roughly speaking,
the block-independence property asserts that the joint distribution
of arrival times of any $p$ distinct elements, when considered
at coarse enough granularity, is $\delta$-close to $p$ i.i.d.\ samples
from the uniform distribution. While this property may sound much
stronger than the UIOP, we show that it is actually implied by the UIOP
for sufficiently large $k$ and small $\delta$.

To substantiate the notion that these properties are satisfied
by many interesting distributions that are far from uniform, we
show that they apply to several natural families of permutation
distributions, including almost every uniform distribution with
support size $\omega(\log n)$, and the distribution over linear 
orderings defined by taking any $n$ sufficiently ``incoherent''
vectors and projecting them onto a random line.

%% We provide two types of properties in answer to question 1.
%% (The following informal definitions are formalized in 
%% \secref{sec:non-uniform}.)
%% \begin{definition} \label{def:uiop-intro}
%% A distribution over permutations has the 
%% {\em $(k,\delta)$-uniform-induced-ordering property,
%% or $(k,\delta)$-UIOP,}
%% if for every $k$-tuple of distinct items, the
%% relative order in which they appear in the input
%% sequence is $\delta$-close to uniform.
%% \end{definition}
%% \begin{definition} \label{def:blockind-intro}
%% For $q \leq n$, partition the $n$ time steps into 
%% $q$ blocks each containing roughly $n/q$ consecutive
%% time steps. 
%% A distribution over permutations has the
%% {\em $(p,q,\delta)$-block-independence property,
%% or $(p,q,\delta)$-BIP,}
%% if for every $p$-tuple of distinct items $i_1,\ldots,i_p$,
%% the distribution of the $p$-tuple of blocks containing
%% their arrival times is $\delta$-close to uniform.
%% \end{definition}
%% At first glance, the block-independence property appears 
%% much stronger
%% than the uniform-induced-ordering property: BIP guarantees
%% that the arrival times of any $p$ distinct items are
%% close to being independent and uniformly distributed (in a sense
%% made precise by the definition) whereas UIOP makes no 
%% assertion about the marginal distribution of arrival times
%% of items, only about the relative ordering of their arrival
%% times. 

A distinct but related topic in the theory of computing is
pseudorandomness, which shares a similar emphasis on showing
that performance guarantees of certain classes of algorithms
are preserved when one replaces the uniform distribution over
inputs with suitably chosen non-uniform distributions, specifically
those having low entropy. While our interest in 
% relaxing the uniform-random-order hypothesis to weaker properties like
s-admissibility and the $(k,\delta)$-UIOP is primarily motivated
by the considerations of robustness articulated earlier, the
analogy with pseudorandomness prompts a natural set of questions.
\begin{quotation}
\noindent {\bf Question 2:} {\em What is the minimum entropy of an 
s-admissible distribution? What is the minimum entropy of 
a distribution that satisfies the $(k,\delta)$-UIOP? Is
there an explicit construction that achieves the minimum
entropy?}
\end{quotation}
In \secref{sec:non-uniform} and \secref{sec:entropy}
we supply matching upper and lower bounds to answer the first
two questions. The answer is the same in both cases, and it is 
surprisingly small: $\Theta(\log \log n)$ bits. Moreover,
$\Theta(\log \log n)$ bits suffice not just for s-admissibility,
but for s-optimality!
We also supply an explicit construction, using Reed-Solomon codes,
of distributions with $\Theta(\log \log n)$ bits of entropy that
satisfy all of these properties.

Given that the $(k,\delta)$-UIOP is a sufficient condition for
s-admissibility, that it is satisfied in every natural construction
of s-admissible distributions that we know of, and that the minimum
entropy of $(k,\delta)$-UIOP distributions matches the minimum
entropy of s-admissible distributions, it is tempting to hypothesize
that the $(k,\delta)$-UIOP (or something very similar) is both
necessary and sufficient for s-admissibility.
\begin{quotation}
\noindent {\bf Question 3:} {\em Find a natural necessary and sufficient condition
that characterizes the property of s-admissibility.}
\end{quotation}
In \secref{sec:complexity} we show that, unfortunately, this is
probably impossible. We construct a strange
distribution over input orderings that is s-admissible,
but any algorithm achieving constant probability of 
correct selection must use a stopping rule that 
cannot be computed by circuits of size $2^{n /\log^2(n)}$.
The construction makes use of a coding-theoretic 
construction that may be of independent interest:
a binary error-correcting code of block length $n$ 
and message length $m = o(n)$, such that if one erases
any $n-2m$ symbols of the received vector, most
messages can still be uniquely decoded
even if $\Omega(m)$ of the remaining $2m$ symbols
are adversarially corrupted.

Finally, we broaden our scope and 
%% ask whether the 
%% performance guarantees of other online algorithms
%% with randomly-ordered inputs are preserved when 
%% we relax the uniform-random-ordering hypothesis to
%% one of the weaker properties introduced in this paper.
consider other online problems with randomly-ordered inputs.
\begin{quotation}
\noindent {\bf Question 4:} {\em Are the performance guarantees of other
online algorithms in the uniform-random-order model 
(approximately) preserved
when one relaxes the assumption about the input order to
the $(k,\delta)$-UIOP or the $(p,q,\delta)$-BIP? If the 
performance guarantee is not always preserved in general, what 
additional properties of an algorithm 
suffice to ensure that its performance guarantee is preserved?}
\end{quotation}
This is an open-ended question, but we take some initial steps
toward answering it by looking at two generalizations of the
secretary problem: the multiple-choice secretary problem (a.k.a.\ the
uniform matroid secretary problem) and the online bipartite weighted
matching problem. We show that the algorithm of \citet{Kleinberg05}
for the former problem preserves its performance guarantee,
and the algorithm of \citet{KorulaP09} for the latter problem does not.

\paragraph*{ Related Work.}
The secretary problem was solved by 
\citet{Lindley} and \citet{Dynkin63}.
A sequence of papers
relating secretary problems to online
mechanism design \citep{HKP04,Kleinberg05,BIK07}
touched off a flurry of CS research during the
past 10 years. Much of this research has focused
on the so-called {\em matroid secretary problem}, which remains
unsolved despite a string of breakthroughs including
a recent pair of $O(\log \log r)$-competitive 
algorithms \citep{Lachish14,FSZ15}, where $r$ 
is the matroid rank.
Generalizations are known for weighted
matchings in graphs and hypergraphs \citep{DimitrovP12,KRTV13,KorulaP09},
independent sets~\cite{GHKSV14},
knapsack constraints~\cite{BIKK07}, and
submodular payoff functions~\cite{BHZ13,FNS11},
among others. Of particular relevance to our 
work is the {\em free order model} \citep{JSZ13};
our results on the minimum entropy s-admissible
distribution can be regarded as a randomness-efficient
secretary algorithm in the free-order model.

The uniform-random-ordering hypothesis has been applied
to many other problems in online algorithms, perhaps most
visibly to the AdWords problem \citep{DevanurHayes,FHKMS10}
and its generalizations to online linear programming
with packing constraints \citep{AWY14,DJSW11,KRTV14,MR15},
and online convex programming \citep{AD15}.
Applications of the random-order hypothesis in
minimization settings are more rare; see \citep{Meyerson01,MMP01}
for applications in the context of facility location and network
design.

In seeking a middle ground between worst-case and average-case
analysis, our work contributes to a broad-based 
research program going by the name of ``beyond worst-case
analysis'' \citep{bwca-workshop}. In terms of motivation, there
are clear conceptual parallels between our paper and the
work of \citet{MV08}, who study hashing and 
identify hypotheses on the data-generating
process, much weaker than uniform randomness,
under which random hashing using a 2-universal
hash family has provably good performance, although
at a technical level our paper bears no relation to theirs.

The properties of permutation distributions that 
we identify in our work bear a resemblance to
almost $k$-wise independent permutations (e.g., \citep{KNR09}),
but the $(k,\delta)$-UIOP and $(p,q,\delta)$-BIP are 
much weaker, and consequently permutation distributions
satisfying these properties are much more prevalent 
than almost $k$-wise independent permutations.
\paragraph*{Setting and Notations.}
We consider problems in which an algorithm
selects one or more elements from a set $\items$ of $n$ items.
Items are presented sequentially, and
an algorithm may only select items at the 
time when they are presented. In the {\em secretary problem} the items are 
totally ordered by value, and the algorithm is allowed
to select only one element of the input sequence,
with the goal of selecting the item of maximum value.
Algorithms for the secretary problem 
are assumed to be comparison-based\footnote{This 
assumption of comparison-based algorithms
is standard in the literature on secretary problems.
\citet{Samuels} proved that when the input order is 
uniformly random, it is impossible to achieve
probability of correct selection $1/\e + \eps$ for
any constant $\eps>0$, even if the algorithm is allowed
to observe the values.},
meaning their decision whether to 
select the item presented at time $t$
must be based only on the relative ordering
(by value) of the first $t$ elements that arrived. 
Algorithms are evaluated according to their
{\em probability of correct selection}, \ie
the probability of selecting the item of 
maximum value. 

 We assume that the set $\items$ of items
is $[n] = \{1,\ldots,n\}$. The order in which items
are presented is then represented by a permutation
$\natperm$ of $[n]$, where $\natperm(i)$ denotes the
position of item $i$ in the input sequence. Similarly,
the ordering of items by value can be represented by
a permutation $\advperm$ of $[n]$, where $\advperm(j)=i$
means that the $j^{\mathrm{th}}$ largest item is $i$.
Then, the input sequence observed by the algorithm
is completely described by the composition $\natperm \advperm$.

%For any (possibly randomized) algorithm $\algo$, its
%probability of correct selection when presented with
%the input sequence described by $\natperm \advperm$ will
%be denoted by $\pcs{\algo,\natperm \advperm}$. 

%When stating performance guarantees for secretary problems
%it will be useful to think of the problem as a game
%between an algorithm designer and an adversary who chooses
%the ordering of items by value, $\advperm$. When the 
%algorithm designer selects randomized algorithm $\algo$,
%the adversary samples the ordering $\advperm$ from a 
%distribution $\advpd$ in $\dists{S_n}$ (the set of all
%probability distributions on the permutation group $S_n$)
%and nature samples the order in which items are presented
%from distribution $\natpd \in \dists{S_n}$, the 
%game's payoff is 
%\[
%\gval(\algo,\advpd) = \Ex{\pcs{\algo,\natperm \advperm} \growingmid
%  (\natperm,\advperm) \from \natpd \times \advpd}.
%\]
%We denote the minimax value of the game simply by $\gval$.
%If either $\algo$ or $\advpd$ is fixed, we denote by 
%$\gval(\algo,\ast)$ or $\gval(\ast,\advpd)$ (respectively)
%the payoff when the opponent plays a best response.
%Thus, for example, the famous result of \citet{Lindley}
%and \citet{Dynkin63} can be summarized by stating that 
%when $\natpd$ is the uniform distribution,
%$\gval = \frac{1}{\e} + o(1)$ as $n \to \infty$.
%

%\input{prelims}

\section{Sufficient Properties of Non-Uniform Probability Distributions}
\label{sec:non-uniform}

In \secref{sec:intro}, we introduced two properties of non-uniform
probability distributions which suffice to ensure existence of a
secretary algorithm with constant probability of correct selection.
(In other words, the two properties imply s-admissibility.) 
We begin by formally defining these two properties.

\begin{definition} \label{def:uiop} 
A distribution $\natpd$ over permutations of $[n]$ satisfies
the $(k, \delta)$-uniform-induced-ordering property, abbreviated 
$(k,\delta)$-UIOP, if and
only if, for every $k$ distinct items $x_1, \ldots, x_k \in [n]$, 
if $\pi$ is a random sample from $\natpd$ then
$\Pr{\pi(x_1) < \pi(x_2) < \cdots < \pi(x_k)} \geq (1-\delta) \frac{1}{k!}$. 
%Clearly, the uniform distribution over all permutations fulfills the $(n, 0)$-uniform-induced-ordering property. 
\end{definition}

%% Speaking about non-uniform probability distributions, one needs to characterize which properties these distributions still fulfill. In this section, we present two such relaxations, namely the $(k, \delta)$-uniform-induced-ordering property and the $(\indbound,\numblocks,\delta)$-block-independence property. The first one requires that for any $k$ items any permutation among this $k$ items occurs with probability approximately $\frac{1}{k!}$. We will also consider the $(\indbound,\numblocks,\delta)$-block-independence property. It requires that the position of an item $x$ in this input sequence is close to being uniform. That is, we divide the input sequence into  $\numblocks$ blocks of equal length and we assume that each item occurs in each block with probability roughly $\frac{1}{\numblocks}$. Furthermore, we have some restricted notion of independence between items.

The $(k, \delta)$-uniform-induced-ordering property is a very natural assumption and it is rather easy to show that it is fulfilled by a probability distribution. We will demonstrate this with a few examples in \secref{section:constructions-of-probability-distributions}. However, it is not  clear how to analyze algorithms for secretary problems based on this property. To this end, the more technical $(\indbound,\numblocks,\delta)$-block-independence property is more helpful. We show this by analyzing the classic algorithm for the secretary problem in Section~\ref{section:secretaryanalysis} and the $k$-uniform matroid secretary problem in Section~\ref{section:k-unif}. However, one of our main results in Section~\ref{section:induced-ordering-vs-block-independence} is that these two properties are in fact equivalent, in the limit as the parameters $k,p,q \to \infty$ and $\delta \to 0$.

\begin{definition} \label{def:bip}
Given a positive integer $\numblocks \leq n$, partition $[n]$ into 
$\numblocks$ consecutive disjoint blocks of size between
$\lfloor n/\numblocks \rfloor$ and $\lceil n/\numblocks \rceil$ each, 
denoted by $B_1, \ldots, B_\numblocks \subseteq [n]$.
A permutation distribution $\natpd$ satisfies the 
$(\indbound,\numblocks,\delta)$-block-independence property,
abbreviated $(\indbound,\numblocks,\delta)$-BIP,
%% if for every $\indbound$-tuple of distinct items, 
%% for any $\indbound$-tuple of blocks, the probability that 
%% those elements are assigned to those blocks is at least 
%% $(1-\delta) \numblocks^{-\indbound}$:
%% . More formally, a distribution over permutations is \emph{$(\indbound,\numblocks,\delta)$-block-independent} 
if for any distinct $x_1, \ldots, x_\indbound \in \
[n]$, and any $b_1 \ldots, b_\indbound \in [\numblocks]$
\[
\Pr{\bigwedge_{j \in [\indbound]} \pi(x_i) \in B_{b_i}} \geq (1 - \delta) \left(\frac{1}{\numblocks}\right)^\indbound \enspace,
\]
Note that $b_1 \ldots, b_\indbound$ do not necessarily have to be distinct. To simplify notation, given a permutation $\pi$ of $[n]$, 
we define a function $\pi^B\colon \items \to [\numblocks]$ by setting $\pi^B(x) = i$ if and only if $\pi(x) \in B_i$ for all $x \in \items$.
\end{definition}

\subsection{Secretary Algorithms and the $(\indbound,\numblocks,\delta)$-block-independence property}
\label{section:secretaryanalysis}
Next, we will analyze the standard threshold algorithm for the secretary problem under probability distributions that only fulfill the $(\indbound,\numblocks,\delta)$-block-independence property rather than being uniform. The algorithm only observes the first $\frac{n}{\e}$ items. Afterwards, it accepts the first item whose value exceeds all values seen up to this point. Under a uniform distribution, this algorithm picks the best items with probability at least $\frac{1}{\e} - o(1)$. We show that already for small constant values of $\indbound$ and $\numblocks$ and rather large constant values of $\delta$ this algorithm has constant success probability. At the same time, for large $\indbound$ and $\numblocks$ and small $\delta$, the probability converges to $\frac{1}{\e}$.

\begin{theorem}
\label{theorem:secretaryanalysis}
Under a $(\indbound,\numblocks,\delta)$-block-independent probability distribution, the standard secretary algorithm picks the best item with probability at least $\frac{1}{\e} - \frac{\e + 1}{\numblocks} - \delta - \left( 1 - \frac{1}{\e} \right)^{\indbound-1}$.
\end{theorem}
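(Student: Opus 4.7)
The plan is to emulate the classical analysis of the threshold-$\lfloor n/\e \rfloor$ secretary algorithm while substituting the $(\indbound,\numblocks,\delta)$-BIP applied to the top $\indbound$ items in place of the uniform distribution over positions. Set $\tau := \lfloor n/\e \rfloor$ and order the items by value as $y_1,y_2,\ldots,y_n$, with $y_1$ the maximum; write $b_i := \pi^B(y_i) \in [\numblocks]$. Standard reasoning shows the algorithm selects $y_1$ precisely when $\pi(y_1) > \tau$ and $\pi(y_{j^{**}}) \leq \tau$, where $j^{**} := \min\{j \geq 2 : \pi(y_j) < \pi(y_1)\}$ is the rank of the highest-valued item arriving before $y_1$. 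First I would discretize the threshold by letting $Q$ be the largest integer for which $B_1 \cup \cdots \cup B_Q \subseteq \{1,\ldots,\tau\}$; using $|B_i| \leq \lceil n/\numblocks \rceil$ one checks $Q \geq \lfloor \numblocks/\e \rfloor - 1$, and the symmetric argument shows $B_{Q+2}\cup\cdots\cup B_\numblocks \subseteq \{\tau+1,\ldots,n\}$, so only the single block $B_{Q+1}$ can straddle $\tau$.

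For each $b \in \{Q+2,\ldots,\numblocks\}$ and each $j \in \{2,\ldots,\indbound\}$, I would define the block-level event
\[
E_{j,b} := \{b_1 = b\}\cap\{b_j \leq Q\}\cap\{b_{j'} > b \text{ for all } 2 \leq j' < j\}.
\]
The key structural claim is that $E_{j,b}$ implies success: under $E_{j,b}$ all of $y_2,\ldots,y_{j-1}$ arrive after $y_1$, $y_j$ arrives by time $\tau$, and any other item arriving before $y_1$ has rank exceeding $\indbound \geq j$ and hence value strictly below $y_j$, so $y_j = y_{j^{**}}$ and $\pi(y_{j^{**}}) \leq \tau$. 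The events $\{E_{j,b}\}_{j,b}$ are pairwise disjoint, since distinct $b$ force distinct $b_1$, while for a shared $b$ any two indices $j_1 < j_2$ impose the contradictory constraints $b_{j_1} \leq Q < b$ (from $E_{j_1,b}$) and $b_{j_1} > b$ (from $E_{j_2,b}$). Applying Definition~\ref{def:bip} to $(y_1,\ldots,y_\indbound)$, $E_{j,b}$ is a disjoint union of $Q\,(\numblocks-b)^{j-2}\,\numblocks^{\indbound-j}$ block-vector outcomes each of probability at least $(1-\delta)/\numblocks^\indbound$, and summing the resulting geometric series in $j$ and then in $b$ yields
\[
\pr{\text{success}} \;\geq\; (1-\delta)\,\frac{Q}{\numblocks}\sum_{b=Q+2}^{\numblocks}\frac{1}{b}\left[1 - \left(1 - \frac{b}{\numblocks}\right)^{\indbound-1}\right].
\]

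It then remains to lower-bound this expression via three elementary estimates: (i) for every $b$ in the summation range, $b/\numblocks > 1/\e$, so $(1 - b/\numblocks)^{\indbound-1} \leq (1 - 1/\e)^{\indbound-1}$; (ii) the harmonic tail satisfies $\sum_{b=Q+2}^{\numblocks} 1/b \geq \ln(\numblocks/(Q+2)) \geq 1 - O(1/\numblocks)$; and (iii) $Q/\numblocks \geq 1/\e - O(1/\numblocks)$. Combining these and factoring out the $(1-\delta)$ prefactor gives a bound of the form $1/\e - \delta - (1 - 1/\e)^{\indbound-1} - O(1/\numblocks)$. The main source of friction---and the step I expect to be the most delicate---is tracking the three sources of $O(1/\numblocks)$ error (the discarded boundary block $B_{Q+1}$, the floor in the definition of $Q$, and the harmonic-sum truncation) tightly enough that they combine to exactly the constant $(\e+1)/\numblocks$ claimed in the statement; the probabilistic content of the argument is otherwise a faithful block-level port of the classical analysis, using only the $\indbound$-dimensional block marginals supplied by the BIP.
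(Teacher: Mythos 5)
Your proposal is correct and follows essentially the same route as the paper's proof: both decompose the success event into the pairwise-disjoint events indexed by the block of the top item and the rank $j$ of the best item preceding it, lower-bound each event via the $\indbound$-dimensional block marginals of the BIP, sum the geometric series in $j$, and finish with the same three elementary estimates. The final bookkeeping you flag as delicate does go through with room to spare: the paper's own computation bounds the combined $O(1/\numblocks)$ loss by $\left(3-\frac{1}{\e}\right)\frac{1}{\numblocks}$, which is comfortably below the stated $\frac{\e+1}{\numblocks}$.
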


\begin{proof}[Proof Sketch]
Let $T = \lfloor \frac{\numblocks}{\e} \rfloor$ denote the index of the block in which the threshold is located. Furthermore, let $x_j \in \items$ be the $j$th best item. We condition on the event that $x_1$ comes in block with index $i$. To ensure that our algorithm picks this item, it suffices that $x_2$ comes in blocks $1, \ldots, T-1$. Alternatively, we also pick $x_1$ if the $x_2$ comes in blocks $i+1, \ldots, \numblocks$ and $x_3$ comes in blocks $1, \ldots, T-1$. Continuing this argument, we get
\[
\Pr{\text{correct selection}} \geq \sum_{i=T+1}^\numblocks \sum_{j=2}^\indbound \Pr{ \pi^B(x_1) = i, \pi^B(x_2), \ldots, \pi^B(x_{j - 1}) > i, \pi^B(x_j) < T } \enspace.
\]
Note that the $(\indbound,\numblocks,\delta)$-BIP implies the 
$(\indbound',\numblocks,\delta)$-BIP for any $\indbound' < \indbound$,
simply by marginalizing over the remaining indices in the tuple. This gives us:
 %% use that the probability distribution is $(\indbound',\numblocks,\delta)$-block-independent not only for $\indbound' = \indbound$ but also $\indbound' < \indbound$. This gives us
\[
\Pr{\text{correct selection}} \geq \sum_{i=T+1}^\numblocks \sum_{j=2}^\indbound (1 - \delta) \frac{1}{\numblocks} \left( \frac{\numblocks - i}{\numblocks} \right)^{j-2} \frac{T - 1}{\numblocks} \enspace,
\]
and the lemma follows after manipulating the expression on the right side
and applying some standard bounds.
\end{proof}
%% We now reorder the sums and use the formula for finite geometric series. This gives us
%% \begin{align*}
%% \Pr{\text{correct selection}} \geq (1 - \delta) \frac{T - 1}{\numblocks} \left(1 - \left( \frac{\numblocks - T}{\numblocks} \right)^{\indbound-1} \right) \sum_{i=T+1}^\numblocks \frac{1}{i} \enspace.
%% \end{align*}
%% Plugging in the definition of $T$ and using some standard bounds, we get
%% \begin{align*}
%% \Pr{\text{correct selection}} & \geq \left( \frac{1}{\e} - \frac{2}{\numblocks} \right) (1 - \delta) \left(1 - \left( 1 - \frac{1}{\e} \right)^{\indbound-1} \right)\left(1 - \frac{\e - 1}{\numblocks + 1}\right) \\
%% %
%% & \geq \frac{1}{\e} - \frac{\e + 1}{\numblocks} - \delta - \left( 1 - \frac{1}{\e} \right)^{\indbound-1} \enspace.
%% \end{align*}
%% \end{proof}

\subsection{Relationship Between the Two Properties}
\label{section:induced-ordering-vs-block-independence}
We will show that the two properties 
defined in the preceding section are in some sense equivalent
in the limit as the parameters $k,p,q \to \infty$ and $\delta \to 0$.
(For $k=2$, a distribution satisfying $(k,\delta)$-UIOP is not 
even necessarily s-admissible---this is an easy consequence of the
lower bound in \secref{sec:entropy} and the fact that the $(2,0)$-UIOP
is achieved by a distribution with support size 2, that uniformly 
randomizes between a single permutation and its reverse. Already
for $k=3$ and any constant $\delta<1$, the $(k,\delta)$-UIOP implies
s-admissibility; this is shown in Appendix~\ref{app:3delta}.)

Our first result is relatively straightforward: Any probability distribution that fulfills the $(\indbound,\numblocks,\delta)$-BIP also fulfills the $(\indbound,\delta + \frac{\numblocks^2}{\indbound})$-UIOP. The (easy) proof is deferred to Appendix~\ref{app:rbtp}.
\begin{theorem} 
\label{theorem-BIP}
If a distribution over permutation fulfills the $(\indbound,\numblocks,\delta)$-BIP, then it also fulfills the $(\indbound,\delta + \frac{\indbound^2}{\numblocks})$-UIOP.
\end{theorem}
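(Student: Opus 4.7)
The plan is to decompose the event $\{\pi(x_1) < \pi(x_2) < \cdots < \pi(x_p)\}$ into simpler events whose probabilities are controlled directly by the $(p,q,\delta)$-BIP. Since the blocks $B_1, \ldots, B_q$ are consecutive disjoint intervals of $[n]$, whenever $b_1 < b_2 < \cdots < b_p$ is a strictly increasing tuple in $[q]$ and $\pi(x_j) \in B_{b_j}$ for each $j$, we automatically have $\pi(x_1) < \pi(x_2) < \cdots < \pi(x_p)$. So I would write
\[
\Pr{\pi(x_1) < \cdots < \pi(x_p)} \;\geq\; \sum_{1 \leq b_1 < b_2 < \cdots < b_p \leq q} \Pr{\bigwedge_{j \in [p]} \pi(x_j) \in B_{b_j}}.
\]

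Next, I would apply the $(p,q,\delta)$-BIP to each summand, which lower-bounds each term by $(1-\delta)/q^p$. Since there are exactly $\binom{q}{p}$ strictly increasing tuples, this gives
\[
\Pr{\pi(x_1) < \cdots < \pi(x_p)} \;\geq\; \binom{q}{p}\frac{1-\delta}{q^p} \;=\; \frac{1-\delta}{p!}\prod_{i=0}^{p-1}\Bigl(1-\tfrac{i}{q}\Bigr).
\]

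To turn this into the $(p, \delta + p^2/q)$-UIOP bound, I would use the elementary inequality $\prod_{i=0}^{p-1}(1 - i/q) \geq 1 - \sum_{i=0}^{p-1} i/q = 1 - \frac{p(p-1)}{2q} \geq 1 - \frac{p^2}{2q}$. Combining with the $(1-\delta)$ factor and using $(1-\delta)(1-p^2/(2q)) \geq 1 - \delta - p^2/(2q) \geq 1 - (\delta + p^2/q)$ yields
\[
\Pr{\pi(x_1) < \cdots < \pi(x_p)} \;\geq\; \bigl(1 - \delta - \tfrac{p^2}{q}\bigr)\frac{1}{p!},
\]
which is exactly the $(p, \delta + p^2/q)$-UIOP. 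There is no genuine obstacle here; the proof is essentially a counting argument together with one first-order Taylor-type estimate. The only thing to watch is that the BIP is allowed to be applied to tuples $(b_1, \ldots, b_p)$ with distinct entries (in fact the definition explicitly does not require distinctness, but we only need the distinct case), and that the consecutive-block structure is what lets ``strictly increasing blocks'' imply ``strictly increasing positions''.
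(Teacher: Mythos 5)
Your proposal is correct and follows essentially the same route as the paper's proof: lower-bound the ordering event by the disjoint union over strictly increasing block tuples, apply the BIP to each of the $\binom{q}{p}$ tuples, and finish with the elementary estimate $\binom{q}{p}/q^p \geq (1-p^2/q)/p!$. The only cosmetic difference is in the final first-order bound (you expand the product $\prod_i(1-i/q)$ directly, the paper uses $\binom{q}{p}\geq (q-p)^p/p!$), and both yield the stated $(p,\delta+p^2/q)$-UIOP.
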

The other direction is far less obvious. Observe that the $(k,\delta)$-uniform-induced-ordering property works in a purely local sense: even for a single item $x \in \items$, the distribution of its position $\pi(x)$ can be far from uniform. For example, the case $k=2$ is even fulfilled by a two-point distribution that only include one permutation and its reverse. Then $\pi(x)$ can only attain two different values. Nevertheless, we have the following result.
\begin{theorem} \label{thm:uiop-implies-bip}
If a distribution over permutation fulfills the $(k,\delta)$-uniform-induced-ordering property, then it also satisfies  $(p,q,\delta)$-block-independence property for $p=o(k^{\frac{1}{5}})$, $q=O({k^{\frac{1}{5}}})$ as $k$ goes to infinity.
\end{theorem}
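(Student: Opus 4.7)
The plan is to reduce the block-independence statement (about absolute positions in $[n]$) to a statement about ranks on a $k$-element subset, which is directly controlled by the UIOP. Fix target items $x_1, \ldots, x_\indbound \in [n]$ and target block labels $b_1, \ldots, b_\indbound \in [\numblocks]$, and draw an independent uniformly random $(k - \indbound)$-subset $Y \subseteq [n] \setminus \{x_1, \ldots, x_\indbound\}$. For each $i$, let $R_i \in [k]$ denote the rank of $\pi(x_i)$ in the induced ordering of $\{x_1, \ldots, x_\indbound\} \cup Y$ under $\pi \sim \natpd$. For a shrinkage parameter $\epsilon > 0$ to be chosen, define the shrunk rank interval $\tilde{I}_b = \{r \in [k] : r/k \in [(b-1)/\numblocks + \epsilon,\, b/\numblocks - \epsilon]\}$, let $E$ be the block event $\{\pi(x_i) \in B_{b_i}\text{ for all } i\}$, and let $E_R$ be the rank event $\{R_i \in \tilde{I}_{b_i}\text{ for all } i\}$.

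The first step is to lower-bound $\Pr{E_R}$ via UIOP. For any realization of $Y$, the $(k,\delta)$-UIOP ensures that the induced ordering on $\{x_1,\ldots,x_\indbound\} \cup Y$ is $\delta$-close to uniform in total variation. Under the uniform induced ordering, $(R_1,\ldots,R_\indbound)$ is a uniformly random injective $\indbound$-tuple in $[k]$; an elementary count (with an injectivity correction of order $\indbound^2\numblocks/k$ when the $b_i$'s repeat) gives $\Pr{E_R} \geq (\tfrac{1}{\numblocks} - 2\epsilon)^\indbound (1 - O(\indbound^2\numblocks/k))$ under uniform $\pi$. Combining with UIOP and averaging over $Y$ yields
\begin{equation*}
\Pr{E_R} \;\geq\; (1-\delta)\,\bigl(\tfrac{1}{\numblocks} - 2\epsilon\bigr)^\indbound\,\bigl(1 - O(\indbound^2 \numblocks / k)\bigr).
\end{equation*}

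The second step compares $E_R$ with $E$. For any fixed $\pi$, the randomness of $Y$ makes $R_i$ a (shifted) hypergeometric random variable with mean approximately $k\pi(x_i)/n$, so Hoeffding's inequality for sampling without replacement gives $\Pr{|R_i/k - \pi(x_i)/n| > \epsilon} \leq 2\exp(-\Omega(k\epsilon^2))$ (the probability here is over $Y$). Whenever this deviation is at most $\epsilon$ and $R_i \in \tilde{I}_{b_i}$, the shrinkage of $\tilde{I}_{b_i}$ is precisely what forces $\pi(x_i) \in B_{b_i}$, so a union bound over $i$ gives $\Pr{E_R \setminus E} \leq 2\indbound \exp(-\Omega(k\epsilon^2))$. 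Since $E$ depends only on $\pi$, this yields
\begin{equation*}
\Pr{E} \;\geq\; \Pr{E_R} \,-\, 2\indbound \exp(-\Omega(k\epsilon^2)).
\end{equation*}

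Finally, we balance parameters. Setting $\epsilon = \Theta(\sqrt{\indbound \log \numblocks / k})$ makes $2\indbound \exp(-\Omega(k\epsilon^2)) = o(\numblocks^{-\indbound})$, absorbing this error into the $o(1)$ slack of the BIP; meanwhile $(\tfrac{1}{\numblocks} - 2\epsilon)^\indbound \geq \numblocks^{-\indbound}(1 - 2\indbound\epsilon\numblocks)$ contributes a multiplicative loss of $1 - O(\indbound^{3/2}\numblocks\sqrt{\log\numblocks/k})$. Requiring this loss to vanish becomes the constraint $\indbound^3\numblocks^2\log\numblocks = o(k)$, which is met precisely when $\indbound = o(k^{1/5})$ and $\numblocks = O(k^{1/5})$. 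The main obstacle is this delicate parameter balancing: $\epsilon$ must be large enough that the hypergeometric concentration overcomes the $\numblocks^\indbound$ tail-bound loss, yet small enough that $|\tilde{I}_b|$ remains essentially $k/\numblocks$; the $k^{1/5}$ threshold emerges directly from this trade-off. This is the ``approximation theory'' content of the argument: the rank-based indicator is a Bernstein-polynomial-style smoothing of the true block indicator, whose approximation quality degrades near block boundaries.
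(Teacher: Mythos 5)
Your argument is correct in its essentials, but it takes a genuinely different route from the paper's. The paper proves the implication in two stages: a moment-comparison lemma (Lemma~\ref{approxtheory:lemma1}) showing that the $(k,\delta)$-UIOP forces every monomial of degree at most $k/2$ in the normalized arrival times $\pi(x_i)/n$ to have expectation at least $(1-\delta)$ times its value under i.i.d.\ uniforms, followed by a moment-to-probability lemma (Lemma~\ref{approxtheory:lemma2}) that converts this into control of rectangle probabilities by sandwiching the block indicator between trapezoidal functions and approximating those by degree-$d$ Bernstein polynomials. You instead introduce an auxiliary uniformly random $(k-p)$-set $Y$, read off the ranks $R_i$ of the $x_i$ inside $\{x_1,\dots,x_p\}\cup Y$, invoke the UIOP once to control the rank vector, and use hypergeometric concentration over $Y$ to transfer from ranks to absolute positions. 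This is a self-contained probabilistic argument that bypasses polynomial approximation entirely; your closing remark is apt, since averaging an indicator over the random sample $Y$ is exactly a sampling-without-replacement analogue of the Bernstein operator. What your route buys is a shorter path and an error term $p\exp(-\Omega(k\epsilon^2))$ that you correctly insist be $o(q^{-p})$ --- a comparison the paper's final paragraph actually glosses over, since its additive error $O(p^{5/4}/k^{1/4})$ is only required to vanish rather than to be small relative to $q^{-p}$.

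Two minor caveats, neither fatal. First, ``$\delta$-close to uniform in total variation'' overstates what the UIOP gives: it is only the one-sided bound that each induced ordering has probability at least $(1-\delta)/k!$, but that one-sided domination is all your lower bound uses. Second, your final constraint $p^3 q^2 \log q = o(k)$ is not literally implied by $p=o(k^{1/5})$ and $q=O(k^{1/5})$ because of the $\log q$ factor (e.g.\ $p = k^{1/5}(\log k)^{-1/6}$ violates it); you should either shave a polylogarithmic factor off $p$ or note that the theorem's exponents are only claimed up to such slack.
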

The proof applies the \emph{theory of approximation of functions}, which addresses the question of how well one can approximate arbitrary functions by polynomials. The main insight underlying the proof is the following. If $\natpd$ satisfies the $(k,\delta)$-UIOP, then for any $k$-tuple of distinct elements $x_1,\ldots,x_k$ if one defines random variables $X_i \triangleq \natperm(x_i)/n$, then the expected value of any monomial of total degree $k/2$ in the variables $\{X_i\}$ approximates the expected value of that same monomial under the distribution of a uniformly-random permutation. With this lemma in hand, proving \thmref{thm:uiop-implies-bip} becomes a matter of quantifying how well the indicator function of a (multi-dimensional) rectangle can be approximated by low-degree polynomials. Approximation theory furnishes such estimates readily. To make the proof sketch concrete, we start by some definitions and notations from approximation theory; see, e.g., the textbook by~\citet{Carothers}. 
\begin{definition}[\cite{Carothers}]
If $f$ is any bounded function over $[0,1]$, we define the sequence of \emph{Bernstein polynomials} for $f$ by
\begin{equation}
(B_d(f))(x)=\sum_{k=0}^df(k/d) {d\choose k}x^k(1-x)^{d-k},~~~~0\leq x\leq 1.
\end{equation} 
\end{definition}
\begin{remark}
$B_d(f)$ is a polynomial of degree at most $d$.
\end{remark}
\begin{definition}[\cite{Carothers}]
The \emph{modulus of continuity} of a bounded function $f$ over $[a,b]$ is defined by 
\begin{equation}
\omega_f(\delta)=\sup\{\lvert f(x_1)-f(x_2)\rvert: x_1,x_2\in [a,b], \lvert x_1-x_2\rvert\leq \delta\}
\end{equation}
\end{definition}
\begin{remark}
Bounded function $f$ is \emph{continuous} over interval $[a,b]$ if and only if $\omega_f(\delta)=O(\delta)$. Moreover, $f$ is \emph{uniformly continuous} if and only if $\omega_f(\delta)=o(\delta)$. 
\end{remark}
We are now ready to state our main ingredient, i.e. \emph{Bernstein's approximation theorem},  which shows bounded functions with enough continuity are well approximated by Bernstein polynomials.
\begin{theorem}[\cite{Carothers}]
\label{lemma:approximationtheory}
For any bounded function $f$ over $[0,1]$ we have
\begin{equation}
\lVert f- B_d(f) \rVert_{\infty} \leq \frac{3}{2}\omega_f
\left(\frac{1}{\sqrt{d}} \right)
\end{equation}
where for any bounded functions $f_1$ and $f_2$, $\lVert f_1-f_2\rVert_{\infty}\triangleq \sup\{\lvert f_1(x)-f_2(x)\rvert : x\in[0,1]\}$.
\end{theorem}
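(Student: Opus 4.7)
The plan is to prove Bernstein's theorem by the classical probabilistic argument: interpret $B_d(f)(x)$ as an expectation of $f$ evaluated at a scaled binomial random variable, and then control the deviation of that random variable from $x$ using the modulus of continuity.

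First I would observe that if $S_d \sim \mathrm{Binomial}(d,x)$, then $\Pr{S_d = k} = \binom{d}{k} x^k (1-x)^{d-k}$, so by the definition of $B_d(f)$ one has $B_d(f)(x) = \Ex{f(S_d/d)}$. Since $\sum_k \binom{d}{k} x^k (1-x)^{d-k} = 1$, one can also write $f(x) = \Ex{f(x)}$ and hence
\begin{equation*}
|f(x) - B_d(f)(x)| \;=\; \left| \Ex{f(x) - f(S_d/d)} \right| \;\leq\; \Ex{\left| f(x) - f(S_d/d) \right|}.
\end{equation*}
By the definition of the modulus of continuity, the integrand is bounded by $\omega_f(|x - S_d/d|)$, so the task reduces to bounding $\Ex{\omega_f(|x - S_d/d|)}$.

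The key step I would then carry out is the standard submultiplicative-type estimate $\omega_f(\lambda t) \leq (1 + \lambda)\, \omega_f(t)$, valid for all $\lambda \geq 0$ and $t > 0$; this follows by splitting any interval of length $\lambda t$ into $\lceil \lambda \rceil$ pieces of length at most $t$ and using the triangle inequality on $|f(u) - f(v)|$. Applying this with $t = 1/\sqrt{d}$ and $\lambda = \sqrt{d}\,|x - S_d/d|$ gives
\begin{equation*}
\omega_f\!\left(|x - S_d/d|\right) \;\leq\; \left(1 + \sqrt{d}\, |x - S_d/d|\right) \omega_f\!\left(\tfrac{1}{\sqrt{d}}\right),
\end{equation*}
and taking expectations yields $|f(x) - B_d(f)(x)| \leq \bigl(1 + \sqrt{d}\, \Ex{|x - S_d/d|}\bigr)\, \omega_f(1/\sqrt{d})$.

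Finally I would bound $\Ex{|x - S_d/d|}$ by Jensen's inequality: $\Ex{|x - S_d/d|} \leq \sqrt{\Ex{(x - S_d/d)^2}} = \sqrt{\mathrm{Var}(S_d/d)} = \sqrt{x(1-x)/d} \leq \tfrac{1}{2\sqrt{d}}$, using $x(1-x) \leq 1/4$ on $[0,1]$. Plugging this back gives $|f(x) - B_d(f)(x)| \leq (1 + 1/2)\, \omega_f(1/\sqrt{d}) = \tfrac{3}{2}\, \omega_f(1/\sqrt{d})$, which holds uniformly in $x \in [0,1]$ and so yields the claimed bound on $\lVert f - B_d(f)\rVert_\infty$. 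The only mildly delicate step is the modulus-of-continuity inequality $\omega_f(\lambda t) \leq (1+\lambda)\omega_f(t)$; everything else is a direct computation with the binomial mean and variance.
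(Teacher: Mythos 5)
Your proof is correct and is precisely the classical probabilistic argument (binomial expectation, the estimate $\omega_f(\lambda t)\leq(1+\lambda)\omega_f(t)$, and the variance bound $x(1-x)/d\leq 1/(4d)$) that yields the constant $\tfrac{3}{2}$; the paper itself gives no proof but cites this theorem from Carothers, whose proof is exactly this one. Nothing to add.
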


\begin{proof}[Proof of Theorem~\ref{thm:uiop-implies-bip}]
To prove our claim, we start by showing $(k,\delta)$-uniform-induced-ordering property forces the arrival time of items to have almost the same higher-order moments as uniform independent random variables. More precisely, we have the following lemma (the proof is provided in Appendix~\ref{app:rbtp}).
\begin{lemma}
\label{approxtheory:lemma1}
Suppose $\pi$ is drawn from a permutation distribution satisfying the $(k,\delta)$-uniform-induced-ordering property, and $\{x_1,\ldots,x_p\}$ is an arbitrary set of $p$ items. Let $\phi:[n]\rightarrow\{i/n:i\in [n]\}$ be a uniform random mapping, and random variables $X_i\triangleq \pi (x_i)/n$ for all $i\in[p]$. Then for every $k_i\leq \frac{k}{2p}$ we have $\Ex{\prod_{i=1}^p X_i^{k_i}} \geq(1-\delta)\Ex{\prod_{i=1}^p {\phi(i)}^{k_i}}$.
\end{lemma}
Given Lemma~\ref{approxtheory:lemma1}, roughly speaking the key idea for the rest of the proof is looking at probabilities as the expectation of the indicator functions, and then trying to approximate the indicator functions by polynomials. Now, to compute probabilities all we need are moments, which due to  Lemma~\ref{approxtheory:lemma1} are almost equal to those of uniform independent random variables. Rigorously, we prove the following probabilistic lemma using this idea. (The proof is provided in Appendix~\ref{app:rbtp}).
\begin{lemma}
\label{approxtheory:lemma2}
Let $\phi:[n]\rightarrow\{i/n:i\in [n]\}$ be a uniform random mapping. Furthermore, let $X_1,X_2,\ldots,X_p$ be random variables over $[0,1]$ such that for every $k_i\leq d$ we have $\Ex{\prod_{i=1}^p X_i^{k_i}} \geq\Ex{\prod_{i=1}^p {\phi(i)}^{k_i}}(1-\delta)$. Then for any disjoint intervals  $\{(a_i,b_i)\}_{i=1}^p$ of $[0,1]$ where $a_i$ and $b_i$ are multiples of $1/n$ and $\lvert b_i-a_i \rvert\geq d^{-\frac{1}{4}}$, we have: 
\begin{equation}
\Pr{\bigwedge_{i=1}^{p}\left(X_i\in [a_i,b_i]\right)}\geq\left(\prod_{i=1}^p{(b_i-a_i)}\right)(1-\delta)-\frac{7p}{d^{1\!/\!4}}
\end{equation}
%Let $U$ be drawn uniformly from $[0, 1]$. Furthermore, let $X \in [0, 1]$ be a random variable such that $\Ex{X^k} = \Ex{U^k}$ for all $k \leq d$. Then for any $a, b \in [0, 1]$, $a < b$, we have $\Pr{X \in [a, b]} \geq b - a - ???$.
\end{lemma}
Now, by combining Lemma~\ref{approxtheory:lemma1} and Lemma~\ref{approxtheory:lemma2}, 
we check the $(p,q,\delta)$-block-independence property. Start by setting $d=\frac{k}{2p}$. By Lemma~\ref{approxtheory:lemma2} the probability approximation error from what desired  will be $O\left(\frac{p}{d^{1\!/\!4}}\right)$=$O\left(\frac{p^{5\!/\!4}}{k^{1\!/\!4}}\right)$. This error goes to zero as $k\rightarrow \infty$ if we set $p=o(k^{\frac{1}{5}})$. Moreover, we need $\lvert b_i-a_i \rvert\geq d^{-\frac{1}{4}}$. So, $\frac{1}{q}=\Omega\left(\frac{1}{d^{1\!/\!4}}\right)$. As $d=\omega(k^{\frac{4}{5}})$, if we set $q=O({k^{\frac{1}{5}}})$ we are fine. This completes the proof.
\end{proof}
\subsection{Constructions of Probability Distributions Implying the Properties}
\label{section:constructions-of-probability-distributions}

\subsubsection{Randomized One-Dimensional Projections}
\label{sec:1d-proj}

In this section we present one natural construction
leading to a distribution that satisfies the $(k,\delta)$-UIOP.
The starting point for the construction is an $n$-tuple of
vectors $x_1,\ldots,x_n \in \RR^d$. If one sorts these vectors
according to a random one-dimensional projection (i.e., ranks
the vectors in increasing order of $w \cdot x_i$, for a random
$w$ drawn from a spherically symmetric distribution), 
when does the resulting random ordering satisfy the
$(k,\delta)$-UIOP? Note that if any $k$ of these vectors comprise 
an orthonormal $k$-tuple and one ranks them in increasing
order of $w \cdot x_i$, where $w$ is drawn from a spherically
symmetric distribution, then a trivial symmetry argument
shows that the induced ordering of the $k$ vectors is uniformly
random. Intuitively, then, if the vectors $x_1,\ldots,x_n$
are sufficiently ``incoherent'', then any $k$-tuple of them
should be nearly orthonormal and their induced ordering when
projected onto the 1-dimensional subspace spanned by $w$ should
be approximately uniformly random. The present section is devoted
to making this intuition quantitative. We begin by 
recalling the definition of the restricted
isometry property~\citep{CandesTao}.

\begin{definition} \label{def:rip}
A matrix $X$ satisfies the {restricted isometry property (RIP)}
of order $k$ with restricted isometry constant $\delta_k$ if 
the inequalities
\[
(1 - \delta_k) \|x\|^2 \leq
\| X_T x \|^2 \leq 
(1 + \delta_k) \|x\|^2
\]
hold for every submatrix $X_T$ composed of $|T| \leq k$ columns of $X$ and 
every vector $x \in \RR^{|T|}$. Here $\| \cdot \|$ denotes the Euclidean
norm.
\end{definition}

Several random matrix distributions are known to give rise
to matrices satisfying the RIP with high probability. The
simplest such distribution
is a random $d$-by-$n$ matrix with i.i.d.\ entries
drawn from the normal distribution $\mathcal{N} \big( 0,\frac{1}{d} \big)$.
It is known~\citep{baraniuk2008simple,CandesTao} that, with high
probability, such a matrix
satisfies the RIP of order $k$ with restricted isometry constant 
$\delta$ provided that $d = \Omega \big( \frac{k \log n}{\delta^2} \big)$.
Even if the columns $x_1,\ldots,x_n$ of $X$ are not random, if they
are sufficiently ``incoherent'' unit vectors, meaning that 
$x_i \cdot x_j = 1$ if $i=j$ and $x_i \cdot x_j < \delta_k/k$
otherwise, then $X$ satisfies the RIP. Using this idea,  we prove the following theorem (with proof provided in Appendix~\ref{app:rip}).

\begin{theorem} \label{thm:rip}
Let $x_1,\ldots,x_n$ be the columns of a matrix that satisfies 
the RIP of order $k$ with restricted isometry constant 
$\delta_k = \frac{\delta}{k}$. If $w$ is drawn at random
from a spherically symmetric distribution and we use $w$
to define a permutation of $[n]$ by sorting its elements in
order of increasing $w \cdot x_i$, the resulting distribution
over $S_n$ satisfies the $(k,\delta)$-UIOP.
\end{theorem}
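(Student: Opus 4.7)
My plan is to fix an arbitrary $k$-subset of items and reduce the UIOP inequality to a statement about a $k$-dimensional Gaussian on an ordering cone. First I would fix distinct indices $i_1,\dots,i_k$ and let $X_T$ be the $d \times k$ matrix whose columns are $x_{i_1},\dots,x_{i_k}$; by the RIP hypothesis with constant $\delta_k=\delta/k$, the Gram matrix $G := X_T^\top X_T$ has all eigenvalues in the interval $[1-\delta/k,\,1+\delta/k]$. Since the ordering induced by $w$ depends only on the direction of $w$, I will take $w \sim N(0,I_d)$ without loss of generality; then $y := X_T^\top w \sim N(0,G)$, and the order in which $x_{i_1},\dots,x_{i_k}$ appear under $\pi$ coincides with the ordering of the coordinates of $y$. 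The target then becomes: for every permutation $\sigma$ of $[k]$, show $\Pr{y_{\sigma(1)} < \cdots < y_{\sigma(k)}} \geq (1-\delta)/k!$.

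The key step is a polar-coordinate representation of cone probabilities under a nondegenerate Gaussian. Writing $y = r\theta$ with $r>0$, $\theta \in S^{k-1}$, and carrying out the $r$-integral of $f_G(r\theta)\,r^{k-1}$ in closed form, one obtains for any cone $C \subseteq \RR^k$
\[
\Pr{y \in C} \;=\; \frac{\int_{C\cap S^{k-1}} h(\theta)\,d\mu(\theta)}{\int_{S^{k-1}} h(\theta)\,d\mu(\theta)}, \qquad h(\theta) := \bigl(\theta^\top G^{-1} \theta\bigr)^{-k/2},
\]
where $\mu$ is the surface measure on $S^{k-1}$. Specializing to the ordering cone $C_\sigma = \{y : y_{\sigma(1)}<\cdots<y_{\sigma(k)}\}$, and using that the $k!$ ordering cones partition $\RR^k$ up to a null set (so $\mu(C_\sigma\cap S^{k-1})/\mu(S^{k-1}) = 1/k!$), yields $\Pr{y \in C_\sigma} \geq \bigl(\inf_\theta h / \sup_\theta h\bigr) \cdot (1/k!)$.

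The argument then concludes by controlling $\inf_\theta h/\sup_\theta h$ via the RIP eigenvalue bound on $G$. As $\theta$ ranges over $S^{k-1}$ the Rayleigh quotient $\theta^\top G^{-1}\theta$ lies in $[1/\lambda_{\max}(G),1/\lambda_{\min}(G)] \subseteq [(1+\delta/k)^{-1},(1-\delta/k)^{-1}]$, so
\[
\frac{\inf_\theta h(\theta)}{\sup_\theta h(\theta)} \;\geq\; \left(\frac{1-\delta/k}{1+\delta/k}\right)^{\!k/2} \;\geq\; \left(1-\frac{2\delta}{k}\right)^{\!k/2} \;\geq\; 1 - \delta,
\]
the last step being Bernoulli's inequality (valid for real exponent $k/2\geq 1$, i.e.\ $k\geq 2$; the case $k=1$ is vacuous and $\delta\geq 1$ makes the target bound trivial). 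Combining yields the desired $\Pr{y\in C_\sigma}\geq (1-\delta)/k!$. The main obstacle I anticipate is setting up the polar-coordinate identity cleanly and verifying its normalization; once the cone probability is expressed as a ratio of $h$-weighted angular integrals on the sphere, the $1/k$ scaling in the RIP constant dovetails with the $k/2$ exponent in $h$ to yield the $(1-\delta)$ factor essentially automatically.
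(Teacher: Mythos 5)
Your proposal is correct and follows essentially the same route as the paper's proof: both reduce to the Gaussian probability of an ordering cone, pass to polar coordinates, and observe that the resulting angular weight $(\theta^\top G^{-1}\theta)^{-k/2}$ (which is exactly the paper's $\|A^+u\|^{-k}$) deviates from constant by at most a $\bigl(\frac{1-\delta/k}{1+\delta/k}\bigr)^{\Theta(k)}$ factor controlled by the RIP eigenvalue bounds. Working directly with the pushforward $N(0,G)$ and normalizing by a ratio of angular integrals lets you skip the paper's $\det(X_k^\trans X_k)=1$ rescaling and kernel decomposition, and incidentally yields a marginally sharper exponent ($k/2$ versus $k$), but the substance is the same.
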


\subsubsection{Constructions with Low Entropy}
This subsection presents two constructions showing
that there exist permutation distributions with
entropy $\Theta(\log \log n)$ satisfying
the $(k,\delta)$-UIOP for arbitrarily large constant $k$
and arbitrarily small constant $\delta$. 
The proof of the first result is an easy application of 
the probabilistic method (which is in Appendix~\ref{app:small-support}). The proof of the second result
uses Reed-Solomon codes to supply an explicit construction.

\begin{theorem} \label{thm:rlec}
Fix some $\xi \geq \frac{2 (k+1)!}{\delta^2} \ln n$.
If $S$ is a random $\xi$-element 
multiset of permutations $\pi\colon [n] \to [n]$,
then the uniform distribution over $S$ fulfills 
the $(k,\delta)$-UIOP with probability at least $1-\frac1n$.
\end{theorem}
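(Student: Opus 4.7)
The plan is to use the probabilistic method together with a multiplicative Chernoff bound. The idea is to fix an arbitrary ordered $k$-tuple of distinct items and a target induced ordering, and show that the empirical frequency of permutations in $S$ that realize this induced ordering is concentrated sharply around $1/k!$, so that the lower bound $(1-\delta)/k!$ required by the $(k,\delta)$-UIOP holds for every tuple simultaneously with high probability.

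Concretely, I would fix an ordered tuple $(x_1,\ldots,x_k)$ of distinct elements of $[n]$. For each of the $\xi$ uniformly random permutations $\pi_1,\ldots,\pi_\xi \in S$, let $Y_j = \ind{\pi_j(x_1) < \pi_j(x_2) < \cdots < \pi_j(x_k)}$; by uniformity of each $\pi_j$ these are i.i.d.\ Bernoulli random variables with mean exactly $1/k!$. Setting $Z = \sum_{j=1}^{\xi} Y_j$, so $\mu = \Ex{Z} = \xi/k!$, the $(k,\delta)$-UIOP constraint for this particular tuple says $Z/\xi \geq (1-\delta)/k!$, which is equivalent to $Z \geq (1-\delta)\mu$. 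By the standard multiplicative Chernoff bound,
\[
\Pr{Z < (1-\delta)\mu} \leq \exp\!\left(-\frac{\delta^2 \mu}{2}\right) = \exp\!\left(-\frac{\delta^2 \xi}{2 \cdot k!}\right).
\]
Plugging in the hypothesis $\xi \geq \frac{2(k+1)!}{\delta^2}\ln n$ gives $\frac{\delta^2 \xi}{2 \cdot k!} \geq (k+1)\ln n$, so the probability of failure for this particular tuple is at most $n^{-(k+1)}$.

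I would then take a union bound over all ordered $k$-tuples of distinct elements of $[n]$. There are $n(n-1)\cdots(n-k+1) \leq n^k$ such tuples, so the probability that \emph{some} tuple fails the required inequality is at most $n^{k} \cdot n^{-(k+1)} = 1/n$. Hence with probability at least $1 - 1/n$, the uniform distribution over the random multiset $S$ satisfies the $(k,\delta)$-UIOP simultaneously for every $k$-tuple.

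There is no real obstacle in this proof; it is a routine application of Chernoff plus union bound. The only thing that deserves care is the choice of the Chernoff variant (multiplicative rather than additive, since the baseline probability $1/k!$ may be very small) and matching constants, so that the exponent of the tail bound exceeds $(k+1)\ln n$ and the union bound over the $\leq n^k$ ordered tuples leaves a $1/n$ slack. The fact that the definition of $(k,\delta)$-UIOP requires the inequality for each ordered tuple separately (rather than only for the identity ordering on each unordered $k$-subset) is harmless, since the total number of ordered tuples is still at most $n^k$ and is absorbed by the exponent.
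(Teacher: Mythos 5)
Your proposal is correct and follows essentially the same route as the paper's proof in Appendix~\ref{app:small-support}: fix an ordered $k$-tuple, note the indicators are i.i.d.\ Bernoulli with mean $1/k!$, apply the multiplicative Chernoff bound to get a per-tuple failure probability of $n^{-(k+1)}$, and union bound over the at most $n^k$ ordered tuples. Your write-up is in fact slightly cleaner, since the paper's displayed bound contains a typo (it writes ``$= n^{k+1}$'' where it means ``$\leq n^{-(k+1)}$'').
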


% \subsubsection{Coding Construction}
% \input{coding-construction}
\begin{theorem} \label{thm:coding-construction}
There is a distribution over permutations that has entropy $O(\log \log n)$ and fulfills the $(k, \delta)$-uniform-induced-ordering property where $\delta = O(\frac{k^2}{\log \log \log n})$.
\end{theorem}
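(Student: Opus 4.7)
The plan is to construct the distribution by drawing a random hash function $h\colon [n]\to[M]$ from an explicit $\epsilon$-almost $k$-wise independent hash family $\mathcal{H}$, and then defining the permutation $\pi$ by sorting $[n]$ in increasing order of $h(i)$, breaking ties by item index. The parameters will be chosen as roughly $M = \Theta(\log\log\log n)$ and $\epsilon = \Theta(1/\log\log\log n)$, so that the UIOP approximation error matches the target $O(k^2/\log\log\log n)$.

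To verify the $(k,\delta)$-UIOP, fix any $k$ distinct items $x_1, \ldots, x_k$. By $\epsilon$-almost $k$-wise independence, the joint distribution of $(h(x_1), \ldots, h(x_k))$ is within total variation distance $\epsilon$ of the uniform distribution over $[M]^k$. Under the uniform distribution, the event ``$h(x_1)<h(x_2)<\cdots<h(x_k)$'' occurs with probability $\binom{M}{k}/M^k \geq \frac{1}{k!}\bigl(1-\binom{k}{2}/M\bigr)$. Since strict increase in $h$-values implies the corresponding strict increase in $\pi$-values regardless of tiebreaking, transferring from uniform to $h$ gives
\[
\Pr{\pi(x_1) < \pi(x_2) < \cdots < \pi(x_k)} \;\geq\; \frac{1}{k!}\Bigl(1 - \binom{k}{2}/M\Bigr) - \epsilon \;\geq\; \frac{1-\delta}{k!}
\]
for $\delta = O(k^2/M + k!\,\epsilon)$, which evaluates to $O(k^2/\log\log\log n)$ with our parameter choices.

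For the entropy bound, I would invoke an explicit construction of $\epsilon$-almost $k$-wise independent distributions built from Reed-Solomon codes over a small alphabet (of the type pioneered by Naor-Naor and Alon-Goldreich-Hastad-Peralta, combining a small-bias generator with polynomial-based $k$-wise hashing). Viewing the output alphabet $[M]$ as $\log M$ bits, such a family has support size $|\mathcal{H}| = 2^{O(\log\log n + k\log M + \log(1/\epsilon))}$, which for our parameters is $2^{O(\log\log n)}$; hence the entropy of the uniform distribution over $\mathcal{H}$ is $O(\log\log n)$.

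The main technical obstacle is that a direct Reed-Solomon approach---picking a random degree-$<k$ polynomial over a field of size $\geq n$ and using its evaluations as sorting keys---already requires $\Omega(\log n)$ bits of seed, which is too much. The refined construction circumvents this by using a small-bias generator (itself Reed-Solomon-based over a small field) composed with polynomial hashing, so that the seed's dependence on $n$ drops from $\log n$ to $\log\log n$. Balancing $M$ against $\epsilon$ carefully is needed to ensure both the UIOP error and the entropy bound hold simultaneously.
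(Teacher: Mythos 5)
Your construction is correct in substance, but it reaches the theorem by a genuinely different route than the paper. The paper does not use almost $k$-wise independence at all: it builds, via a three-level composition of Reed--Solomon codes (with range sizes $\log^{(i)}(n)^2$ and only the three evaluation points $\alpha_1,\alpha_2,\alpha_3$ random), a hash $f\colon[n]\to[\ell]$ with $\ell=\Theta(\log^2\log\log n)$, entropy $O(\log\log n)$, and \emph{pairwise} collision probability $O(1/\log\log\log n)$; it then composes $f$ with a \emph{uniformly random} permutation of the tiny range $[\ell]$ (costing only $\log(\ell!)=o(\log\log n)$ extra entropy). Conditioned on the $k$ items having distinct $f$-values, the induced order is then exactly uniform, so the entire error $\delta$ comes from a union bound over the $\binom{k}{2}$ collision events, giving $O(k^2/\log\log\log n)$ with an absolute constant. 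You instead sort directly by an almost $k$-wise independent hash and transfer the probability of the ordering event across a total-variation bound. The trade-off: the paper's argument needs only the distance property of Reed--Solomon codes plus a union bound and is self-contained, whereas yours is shorter but outsources the hard part to the Naor--Naor/AGHP machinery, and it needs the full strength of $k$-wise near-uniformity where the paper gets away with near-pairwise-collision-freeness plus fresh randomness for the ordering.

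One quantitative caveat: with $\epsilon=\Theta(1/\log\log\log n)$ your error is $\delta = O\bigl(k^2/M + k!\,\epsilon\bigr) = O\bigl(k!/\log\log\log n\bigr)$, whose constant depends on $k$, whereas the theorem (and the paper's proof) give $O(k^2/\log\log\log n)$ with an absolute constant. This is easily repaired by taking $\epsilon = \Theta\bigl(k^2/(k!\,\log\log\log n)\bigr)$, which only adds $O(k\log k + \log\log\log\log n)$ to the seed length and so does not disturb the $O(\log\log n)$ entropy bound; but as written your parameter choice does not quite deliver the stated dependence on $k$.
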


To derive Theorem~\ref{thm:coding-construction}, we start by proving the following lemma.

\begin{lemma}
For large enough $n\in \mathbb{N}$ and some $\ell = \Omega(\log^2 \log \log n)$, there is a distribution over functions $f\colon \items \to [\ell]$ with entropy $O(\log \log n)$ such that for any $x, x' \in \items$, $x \neq x'$, we have $\Pr{f(x) = f(x')} = O(\frac{1}{\log \log \log n})$.
\end{lemma}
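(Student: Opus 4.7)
The plan is to recast the lemma as a coding problem. Identify the sought distribution with the uniform distribution over a family $\mathcal{F}$ of $N$ hash functions, or equivalently with the coordinate projections of a code $\mathcal{C}\colon [n]\to [\ell]^N$. Under this identification the entropy equals $\log_2 N$, and for every distinct $x,x'\in\items$, $\Pr{f(x)=f(x')}$ is exactly the fraction of coordinates at which $\mathcal{C}(x)$ and $\mathcal{C}(x')$ agree, which is at most $1-\delta$ whenever $\mathcal{C}$ has relative distance $\delta$. Thus the lemma reduces to exhibiting a code over alphabet $[\ell]$ with $n$ codewords, block length $N=\polylog(n)$, alphabet size $\ell=\Omega(\log^2\log\log n)$, and relative distance $\delta\geq 1-O(1/\log\log\log n)$.

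I would supply such a code by concatenating two Reed--Solomon codes. Choose a prime power $\ell=\Theta(\sqrt{\log n})$ (which exists by Bertrand's postulate) and set $Q=\ell^2$, so that $\mathbb{F}_Q\cong \mathbb{F}_\ell^2$ as an $\mathbb{F}_\ell$-vector space. Take $K=\lceil\log n/\log Q\rceil$, ensuring $Q^K\geq n$, and fix an injection that associates each item $x\in\items$ with a polynomial $P_x\in\mathbb{F}_Q[y]$ of degree less than $K$. The outer code evaluates $P_x$ at all $Q$ points of $\mathbb{F}_Q$; by the Reed--Solomon distance bound, any two distinct polynomials $P_x,P_{x'}$ agree at no more than $K-1$ of these points. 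The inner code takes each symbol $\beta\in\mathbb{F}_Q\cong\mathbb{F}_\ell^2$ to the degree-$<\!2$ polynomial $L_\beta(y)=\beta_0+\beta_1 y$ over $\mathbb{F}_\ell$ and evaluates $L_\beta$ at every element of $\mathbb{F}_\ell$; distinct such linear polynomials agree at at most one point. Concatenation therefore produces a code of block length $N=Q\ell=\ell^3$ and relative distance at least $(1-(K-1)/Q)(1-1/\ell)$. Equivalently, $\mathcal{F}$ is indexed by pairs $(\alpha,\gamma)\in\mathbb{F}_Q\times\mathbb{F}_\ell$, the hash is $f_{\alpha,\gamma}(x)=L_{P_x(\alpha)}(\gamma)$, and the distribution draws $(\alpha,\gamma)$ uniformly.

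The final step is to check the three parameters. The entropy is $\log_2(Q\ell)=3\log_2\ell=O(\log\log n)$. By a union bound over the two ``failure'' events $P_x(\alpha)=P_{x'}(\alpha)$ and $L_{P_x(\alpha)}(\gamma)=L_{P_{x'}(\alpha)}(\gamma)$, the pairwise collision probability is at most $(K-1)/Q+1/\ell=O(1/\log\log n)+O(1/\sqrt{\log n})$, comfortably within the required $O(1/\log\log\log n)$. And $\ell=\Theta(\sqrt{\log n})$ clearly satisfies $\ell=\Omega(\log^2\log\log n)$. The only delicate point is the parameter balancing: the three desiderata pull in different directions (small outer rate $K/Q$, large inner alphabet $\ell$, and polylogarithmic total block length), so the main obstacle is locating a regime where they hold simultaneously. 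The choice $\ell=\Theta(\sqrt{\log n})$ accomplishes this with plenty of slack; beyond this calculation, no substantive difficulty arises once the hashing-to-coding translation is in place.
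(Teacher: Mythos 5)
Your construction is correct for the lemma as literally stated, and it is a genuinely different (and simpler) route than the paper's: you concatenate two Reed--Solomon codes and draw a uniformly random coordinate of the concatenation, whereas the paper composes \emph{three} Reed--Solomon codes, at each level projecting onto a random coordinate. Your union bound over the two collision events is sound, the entropy count $\log(Q\ell)=3\log\ell=O(\log\log n)$ is right, and your collision probability $O(1/\log\log n)$ is even stronger than required. The structural idea (random coordinate of a composed code, with collisions controlled by the distance at each level) is the same as the paper's; the substantive difference is the number of levels and, consequently, the alphabet size.

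That difference matters, and it is why I would not call your parameter choice ``plenty of slack.'' The lemma's $\ell=\Omega(\log^2\log\log n)$ is really meant as $\Theta$: in the proof of Theorem~\ref{thm:coding-construction} the function $f$ is post-composed with a uniformly random permutation $\pi'$ of $[\ell]$, contributing $\log(\ell!)=\Theta(\ell\log\ell)$ bits of entropy, which must be $O(\log\log n)$. With your $\ell=\Theta(\sqrt{\log n})$ this is $\Theta(\sqrt{\log n}\,\log\log n)$, so your hash family cannot be used where the lemma is invoked. Moreover, two levels cannot be retuned to fix this: to keep the outer agreement $(K-1)/Q$ small with $Q^K\geq n$ you are forced into $Q=\widetilde{\Theta}(\log n)$, so the inner code must encode $\Theta(\log\log n)$ bits with agreement fraction $O(1/\log\log\log n)$, which forces $\ell\log\ell=\Omega(\log\log n\cdot\log\log\log n)$ and hence $\log(\ell!)=\omega(\log\log n)$. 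Driving the alphabet all the way down to $\Theta(\log^2\log\log n)$ while keeping the entropy at $O(\log\log n)$ is exactly what requires the third level of composition in the paper's proof. So treat your argument as a proof of the literal statement only; to support the theorem you would need to iterate your concatenation once more, at which point you essentially recover the paper's construction.
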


\begin{proof}
We will define a function $f$, parameterized by $\alpha_1$, $\alpha_2$, and $\alpha_3$, as a composition of 8 functions, which are mostly injective.

For $i = 1, 2, 3$, let $K_i = \log^{(i)} n$ and $q_i$ be a prime power such that  $2K_i^2+1\geq q_i\geq K_i^2 + 1$ (Note that for large enough $n$, we can always find a prime power between $K_i^2+1$ and $2K_i^2+1$). Let $\alpha_i$ be drawn independently uniformly from $[q_i - 1]$. This is the only randomization involved in the construction. It has entropy $\log(q_1 - 1) + \log(q_2 - 1) + \log(q_3 - 1)$.

Let $C_i$ be a Reed-Solomon code of message length $K_i$ and alphabet size $q_i$. This yields block length $N_i = q_i - 1$ and distance $d_i = N_i - K_i + 1= q_i - K_i $. In other words, $C_i$ is a function $C_i\colon D_i \to R_i$ with $D_i = [q_i]^{K_i}$ and $R_i = [q_i]^{N_i}$ such that for any $w, w' \in D_i$ with $w \neq w'$, we $C_i(w)$ and $C_i(w')$ differ in at least $d_i$ components.

Furthermore $\alpha_i$ defines one position in each code-word $R_i$. Given $\alpha_i$, let $h_i\colon R_i \to [q_i]$, be the projection of a code-word $w$ of $C_i$ to its $\alpha_i$th component, i.e., $h_i(w) = w_{\alpha_i}$.

Finally, we observe that $\lvert D_{i + 1} \rvert = q_{i + 1}^{K_{i + 1}} \geq 5^{K_{i + 1}} \geq 2(2^{K_{i + 1}})^2 + 1 \geq q_i$. So there is an injective mapping $g_i\colon [q_i] \to D_{i + 1}$, mapping alphabet symbols of $C_i$ to messages of $C_{i + 1}$.

Overall, this defines a function $f = h_3 \circ C_3 \circ g_2 \circ h_2 \circ C_2 \circ g_1 \circ h_1 \circ C_1$, mapping values of $D_1$ to $[q_3]$.

Let $f_i = g_i \circ h_i \circ C_i \circ f_{i - 1}$.

Now let $w, w' \in D_1$, $w \neq w'$. Observe that all functions except for the $h_i$ are injective. Therefore the event $f(w) = f(w')$ can only occur if $h_i(C_i(f_{i-1}(w))) = h_i(C_i(f_{i-1}(w')))$ for some $i$. As $C_i$ is a Reed-Solomon code with distance $d_i$, $C_i(f_{i-1}(w))$ and $C_i(f_{i-1}(w'))$ differ in at least $d_i$ components. Therefore, the probability that $h_i(C_i(f_{i-1}(w))) \neq h_i(C_i(f_{i-1}(w')))$ is at least $\frac{d_i}{N_i}$.

By union bound, the combined probability that this does not hold for one $i$ is bounded by
\[
\Pr{\bigwedge_{i=1}^3 h_i(C_i(f_{i-1}(w))) = h_i(C_i(f_{i-1}(w'))) } \leq \sum_{i=1}^3 \left( 1 - \frac{d_i}{N_i} \right) \leq 3 \left( 1 - \frac{d_3}{N_3} \right) \leq \frac{3}{K_3} \enspace.
\]
\end{proof}

\begin{proof}[Proof of Theorem~\ref{thm:coding-construction}]
By the above lemma, there are constants $c_1$, $c_2$, $c_3$ such that the following condition is fulfilled. For some $\ell = c_1 \log^2 \log n$, there is a distribution over functions $f\colon \items \to [\ell]$ with entropy $c_2 \log \log n$ such that for any $x, x' \in \items$, $x \neq x'$, we have $\Pr{f(x) = f(x')} \leq \frac{c_3}{\log \log \log n}$.

Draw a permutation $\pi'\colon [\ell] \to [\ell]$ uniformly at random and define the permutation $\pi\colon \items \to [n]$ by using $\pi' \circ f$ and extending it to a full permutation arbitrarily.

Let $x_1, \ldots, x_k$ be distinct items from $\items$. Conditioned on $f(x_i) \neq f(x_j)$ for all $i \neq j$, we have $\pi(x_1) < \pi(x_2) < \ldots < \pi(x_k)$ with probability $\frac{1}{k!}$. Furthermore, applying a union bound in combination with the above lemma, the probability that there is some pair $i \neq j$, with $f(x_i) = f(x_j)$ is at most $k^2 \frac{c_2}{\log \log \log n}$. Therefore, the overall probability that $\pi(x_1) < \pi(x_2) < \ldots < \pi(x_k)$ is at least $(1 - \frac{c_2 k^2}{\log \log \log n}) \frac{1}{k!}$.

The entropy of the distribution that determines $\pi$ is $c_2 \log \log n + \log(\ell!) = O(\log \log n)$.
\end{proof}

\section{Tight Bound on Entropy of Distribution} \label{sec:entropy}
One of the consequences of the previous section is the fact that there are s-admissible---in fact, even s-optimal---distributions with entropy $O(\log \log n)$. In this section, we show that this bound is actually tight. We show that \emph{every} probability distribution of entropy $o(\log \log n)$  is not s-admissible. The crux of the proof lies in defining a notion of ``semitone sequences''---sequences which satisfy a property similar to, but weaker than, monotonicity---and showing that an adversary can exploit the existence of long semitone sequences to force every algorithm to have a low probability of success.

\begin{theorem}
\label{theorem:lowerboundentropy}
A permutation distribution $\natpd$ of entropy $H = o(\log \log n)$ cannot be s-admissible.
\end{theorem}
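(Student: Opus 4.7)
The plan is to show that for any randomized comparison-based algorithm $\algo$, there exists an adversarial value assignment under which $\algo$ selects the maximum with probability $o(1)$. The low entropy of $\natpd$ will be used to expose a long ``structured'' sub-indexing of $[n]$ along which every likely permutation behaves almost deterministically, effectively placing the algorithm in the adversarial-order secretary regime. First, I would apply Markov's inequality to the random variable $-\log p(\natperm)$, whose expectation is $H = o(\log\log n)$, to extract a subset $P$ of permutations with $|P| = 2^{o(\log\log n)} = (\log n)^{o(1)}$ and $\Pr[\natperm \in P] = 1 - o(1)$. It then suffices to defeat every algorithm against distributions supported on an arbitrary such $P$.

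Following the hint in the theorem statement, I would next introduce \emph{semitone sequences} with respect to $P$: indexed tuples $y_1,\ldots,y_m$ such that for every $\pi \in P$ the arrival-position sequence $\pi(y_1),\ldots,\pi(y_m)$ has a simple prescribed shape, strictly weaker than monotonicity---e.g.\ decomposable into $O(1)$ monotonic runs, so that the family of induced arrival orders $\{(\pi(y_1),\ldots,\pi(y_m)):\pi\in P\}$ has low combinatorial complexity. By an Erd\H{o}s--Szekeres/Ramsey-style combinatorial argument iterated across the permutations of $P$, I would establish that any such $P$ forces the existence of a semitone sequence of length $m \to \infty$ as $n \to \infty$. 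Given such a $Y$, the adversary (invoking Yao's minimax) chooses a distribution over value assignments that places the $m$ largest values on $Y$ with negligibly small values elsewhere, and calibrates the values on $Y$ so that, no matter which $\pi \in P$ nature draws, the induced view of the algorithm falls inside a worst-case family of sequences on which classical adversarial-order secretary lower bounds force the probability of correct selection to be $O(\log m / m) = o(1)$. Combined with the $1-o(1)$ mass on $P$ from the first step, this drives the overall success probability to $o(1)$.

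The principal obstacle is the combinatorial extraction of the semitone sequence. A naive iterated Erd\H{o}s--Szekeres furnishes only a common \emph{monotonic} subsequence of length $n^{1/2^{|P|}}$, which is $\omega(1)$ only when $|P| = o(\log\log n)$, whereas the entropy reduction merely guarantees $|P| = (\log n)^{o(1)}$---a regime in which $|P|$ may well be $\omega(\log\log n)$. This is precisely why the semitone notion must be genuinely weaker than monotone: rich enough in its family of canonical patterns to guarantee existence at the relevant length scale, yet rigid enough that the adversary's reduction to a worst-case secretary instance still goes through. Striking this trade-off so that the lower bound matches the $O(\log\log n)$ upper bound from \thmref{thm:coding-construction} is the delicate heart of the proof.
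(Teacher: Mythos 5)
Your high-level architecture matches the paper's: reduce low entropy to small effective support, extract a long ``semitone'' sequence common to all permutations in the support, and let the adversary concentrate the top values on that sequence. The entropy-to-support reduction via Markov on $-\log p(\natperm)$ is fine (the paper's Lemma~\ref{lemma:entropyvssupportsize} is a sharper version of the same idea). But the two load-bearing steps are missing, and you explicitly concede the first one. The ``delicate heart'' you defer is exactly where the proof lives, and your guessed formalization of semitone (``decomposable into $O(1)$ monotonic runs'') is not the right weakening: it would still be extracted by Erd\H{o}s--Szekeres-type iteration and would still suffer the doubly-exponential loss $n^{1/2^{\Theta(|P|)}}$ you correctly identify as fatal. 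The paper's definition is recursive: $(x_1,\dots,x_s)$ is semitone w.r.t.\ $\pi$ if $\pi(x_s)$ is the minimum or maximum of $\{\pi(x_1),\dots,\pi(x_s)\}$ and the prefix is semitone. This is what enables the key extraction argument (Lemma~\ref{low1:lemma-short}): fix an element $y$, hash every other element by the $k$-bit vector recording whether it precedes or follows $y$ in each of the $k$ permutations, and recurse into the most occupied bucket. Each step shrinks the universe only by a factor $2^{k+1}$, giving a common semitone sequence of length $s>\frac{\log n}{k+1}$ --- which is $\omega(1)$ even for $k$ as large as $\sqrt{\log n}$, so the support size afforded by the entropy reduction is not actually a problem once the right combinatorial lemma is in hand.

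The second gap is the adversary. You invoke ``classical adversarial-order secretary lower bounds'' to get $O(\log m/m)$, but no such off-the-shelf bound applies here: nature still randomizes over $P$, and the algorithm sees the full relative order of everything that has arrived, so one must show that the semitone structure specifically neutralizes this information. The paper does this with an explicit recursive random value assignment (Lemma~\ref{low3:lemma-short}): $x_s$ receives the largest remaining value with probability $1/s$ and the smallest otherwise, recursing on $(x_1,\dots,x_{s-1})$. Because each $x_t$ arrives either before or after all of $x_1,\dots,x_{t-1}$, an induction shows $\Pr{\mathcal{B}_t}=\frac{1}{t}\Pr{\mathcal{A}_t}$, yielding success probability at most $1/s = \frac{k+1}{\log n}$, and then Yao's principle and the $k=\sqrt{\log n}$ choice finish the theorem. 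Without constructions for both of these steps, the proposal is an outline of the intended strategy rather than a proof.
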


 Here is the proof sketch.  We use the fact for distributions of entropy $H$ there is a subset of the support of size $k$ that is selected with probability at least $1 - \frac{8 H}{\log(k - 3)}$. It then suffices to show that if the distribution's support size is at most $k$, then any algorithm's probability of success against a worst-case adversary is at most $\frac{k + 1}{\log n}$. The theorem then follows by setting $k = \sqrt{\log n}$. To bound the algorithm's probability of success, we introduce the notion of \emph{semitone sequences}, defined recursively as follows: an empty sequence is semitone with respect to any permutation $\pi$, and  a sequence $(x_1,\dots,x_s)$ is {\em semitone} w.r.t.\ $\pi$ if $\pi(x_s)\in\{\min_{i\in [s]} \pi(x_i), \max_{i\in [s]} \pi (x_i)\} $ and $(x_1,\dots, x_{s-1})$ is semitone w.r.t.\ $\pi$. We will show that given $k$ arbitrary permutations of $[n]$, there is always a sequence of length $\frac{\log n}{k + 1}$ that is semitone with respect to all $k$ permutations. Later on, we show how an adversary can exploit this sequence to make any algorithm's success probability small. To make the above arguments concrete, we start by this lemma.

\begin{lemma}
\label{low1:lemma-short}
Suppose $\Pi=\{\pi_1,\dots,\pi_k\}$, where each $\pi_i$ is a permutation over $\items$. Then there exists a sequence $(x_1,\ldots,x_s)$ that is semitone with respect to each $\pi_i$ and $s>\frac{\log n}{k + 1}$.
\end{lemma}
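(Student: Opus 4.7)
The plan is to build the sequence $x_1,x_2,\ldots$ greedily while tracking a shrinking pool of candidates. After $t$ picks, set $I_i^{(t)} = [\min_{j \leq t}\pi_i(x_j), \max_{j \leq t}\pi_i(x_j)]$ for each $i \in [k]$ and define the \emph{eligible set}
\[
A_t \;=\; \bigl\{\, y \in [n] : \pi_i(y) \notin I_i^{(t)} \text{ for every } i \in [k] \,\bigr\}.
\]
By construction, appending any $y \in A_t$ to the partial sequence keeps it semitone with respect to every $\pi_i$, and initially $A_0 = [n]$. The whole argument comes down to showing that $A_t$ does not shrink too fast as $t$ grows.

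The core step is the following claim: whenever $|A_t| \geq 2$, one can choose $x_{t+1} \in A_t$ so that $|A_{t+1}| \geq \lceil(|A_t|-1)/2^{k+1}\rceil$. Given this, a straightforward induction shows $|A_t| \geq n/2^{t(k+1)} - O(1)$, which stays positive as long as $t \leq \lfloor \log_2 n / (k+1)\rfloor$. Consequently the sequence can be extended at least $\lfloor \log_2 n/(k+1)\rfloor + 1 > \log n/(k+1)$ times, yielding \lemref{low1:lemma-short}.

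To prove the claim, I would assign to each $y \in A_t$ a \emph{direction vector} $d(y) \in \{L,R\}^k$ whose $i$-th entry records whether $\pi_i(y)$ lies to the left of $I_i^{(t)}$ ($L$) or to the right ($R$). This partitions $A_t$ into at most $2^k$ direction classes, and by pigeonhole some class $A_t^{d^\ast}$ has size at least $|A_t|/2^k$. I would then split $A_t^{d^\ast}$ into its inner and outer halves with respect to distance from $I_1^{(t)}$ along coordinate $1$, giving a $2^{k+1}$-way refinement whose largest piece has size at least $|A_t|/2^{k+1}$. Taking $x_{t+1}$ to be the innermost element of that piece makes the remaining outer members automatically survive the update in coordinate $1$, while members of direction classes $d' \neq d^\ast$ survive for free in every coordinate where they disagree with $d^\ast$. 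A short case check should combine these contributions to lower-bound $|A_{t+1}|$ by the required $\lceil(|A_t|-1)/2^{k+1}\rceil$.

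The hard part will be this last step: making the survival count fully quantitative across all $k$ coordinates simultaneously. The inner/outer split handles coordinate $1$ transparently, but rigorously bounding how many members of $A_t^{d^\ast}$ survive in coordinates $2, \ldots, k$, and accounting for the contributions of the other direction classes, requires a careful direction-by-direction case analysis together with a rank argument inside $A_t^{d^\ast}$. This bookkeeping is elementary but is where the precise $1/2^{k+1}$ contraction---and hence the $\log n/(k+1)$ bound on $s$---has to be extracted carefully.
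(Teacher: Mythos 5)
Your forward-greedy framework has a genuine gap, and it is located exactly at the step you defer as ``the hard part.'' The per-step contraction claim---that whenever $|A_t|\geq 2$ some choice of $x_{t+1}\in A_t$ yields $|A_{t+1}|\geq \lceil(|A_t|-1)/2^{k+1}\rceil$---is false. Take $k=2$ and suppose $A_t=\{z_1,\dots,z_m\}$ lies entirely to the right of both intervals, with $\pi_1(z_j)=\max I_1^{(t)}+j$ and $\pi_2(z_j)=\max I_2^{(t)}+m+1-j$ (an antichain in the coordinatewise domination order within a single direction class). Whichever $z_j$ you append, a survivor $z_{j'}$ would need $j'>j$ in coordinate $1$ and $j'<j$ in coordinate $2$, so $A_{t+1}=\emptyset$ for \emph{every} choice. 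Such states are reachable (e.g.\ put two extra items $a,b$ at positions $1,2$ in both permutations and pick them first), so the induction $|A_t|\geq n/2^{t(k+1)}-O(1)$ cannot be run as stated. Your proposed selection rule does not escape this: the coordinate-$1$ inner/outer split controls only coordinate $1$, and the innermost element of the chosen piece in coordinate $1$ may be outermost in coordinate $2$, wiping out its entire direction class. To rescue the approach you would have to prove that your rule never enters a configuration where all choices are bad, which is a substantial unaddressed burden, not bookkeeping.

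The paper's proof avoids this trap by building the sequence \emph{backward}. It picks an arbitrary $y\in U^{(t)}$, hashes every other element of $U^{(t)}$ by the $k$-bit vector recording whether $\pi_i(\cdot)>\pi_i(y)$, and keeps the largest bucket as $U^{(t+1)}$; the element $y$ is then placed \emph{after} (i.e.\ later in the semitone sequence than) everything subsequently chosen from $U^{(t+1)}$. Since all of $U^{(t+1)}$ lies on one common side of $y$ under every $\pi_i$, the semitone condition for $y$ is automatic, and the only quantitative fact needed is the pigeonhole bound $|U^{(t+1)}|\geq(|U^{(t)}|-1)2^{-k}\geq|U^{(t)}|2^{-(k+1)}$. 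The reversal matters because each element need only be extreme relative to the elements chosen \emph{after} it (which precede it in the sequence), a constraint enforced by restricting to a bucket, rather than extreme relative to an already-committed prefix, which is what forces your interval constraint and makes antichains fatal. If you want to keep a forward construction, you would need a different invariant than the eligible set $A_t$.
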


\begin{proof}
For a fixed permutation $\pi_i$ and a fixed item $y \in \items$, we define a function $h^y_i\colon \items \setminus \{ y \} \to \{ 0, 1 \}$ that indicates whether $\pi_i$ maps $x$ is to a higher than $y$ or not. Formally,
\[
h^y_i(x) = \left\{
	\begin{array}{ll}
		0  & \text{if } \pi_i(x)<\pi_i(y)  \\
		1 & \text{if } \pi_i(x) > \pi_i(y)
	\end{array} \right. \enspace.
\]
Still keeping one item $y \in \items$ fixed, we now get a $k$-dimensional vector by concatenating the values for different $\pi_i$. This way, we obtain a hash function $\mathbf{h}^y\colon \mathcal{U}\backslash\{y\}\rightarrow\{0,1\}^k$, where $\mathbf{h}^y(x)=\left( h^{y}_1(x),\ldots,h^{y}_k(x)\right)$.

Starting from $U^{(0)} = \items$, we now construct a sequence of nested subsets $U^{(0)} \supseteq U^{(1)} \supseteq \ldots$ iteratively. At iteration $t+1$, given set $U^{(t)} \neq \emptyset$, we do the following. For an arbitrary element $x_{s-t}$ of $U^{(t)}$, we hash each element of $U^{(t)}\backslash \{x_{s-t}\}$ to a value in $\{0,1\}^k$ by using $\mathbf{h}^{x_{s-t}}$. Now $U^{(t+1)}\subseteq U^{(t)}\backslash \{x_{s-t}\}$ is defined to be the set of occupants of the most occupied hash bucket in $\{0,1\}^k$.

Note that if we place $x_{s-t}$ at the end of any semitone sequence in $U^{(t+1)}$ it will remain semitone with respect to each $\pi_i$. This in turn implies that for any $t'$ the sequence $(x_{1}, \ldots, x_{t'})$ is semitone with respect to all $\pi_i$. 

It now remains to bound the length of the sequence $(x_1, \ldots, x_s)$ we are able to generate. We achieve length $s$ if and only if $U^{(s)}$ is the first empty set. At iteration $t$ of the above construction, we have $\lvert U^{(t)}\rvert-1$ elements to hash  and we have $2^k$ hash buckets, so $\lvert U^{(t+1)}\rvert\geq (\lvert U^{(t)}\rvert-1)2^{-k}\geq \lvert U^{(t)}\rvert 2^{-(k+1)}$ and therefore $\lvert U^{(t)}\rvert \geq 2^{-t(k+1)} \lvert U^{(0)}\rvert = 2^{-t(k+1)} n$. As $\lvert U^{(s)} \rvert < 1$, this implies $2^{-s(k+1)} n < 1$. So $s > \frac{\log n}{k + 1}$.
\end{proof}
We now turn to showing that
an adversary can exploit a semitone sequence and force any algorithm to only have $\frac1s$ probability of success. To show this we look at the performance of the best deterministic algorithm against a particular distribution over assignment of values to items.
\begin{lemma}
\label{low3:lemma-short}
Let $\mathcal{V}=\{1,2,\ldots,s\}$. Assign values from $\mathcal{V}$ to items $(x_1,\dots,x_s)$ at random by 
\[
\textrm{value}(x_s)=\begin{cases}
		\max (\mathcal{V} ) & \text{with probability }  1/s \\
		\min (\mathcal{V}) & \text{with probability } 1-1/s
	\end{cases}
\]
and then assigning values from $\mathcal{V}\backslash \{\textrm{value}(x_s)\}$ to items $(x_1,\dots,x_{s-1})$ recursively. Assign a value $0$ to all other items. 

Consider an arbitrary algorithm following permutation $\pi$ such that $(x_1,\dots,x_s)$ is semitone with respect to $\pi$. This algorithm selects the best item with probability at most $\frac{1}{s}$.
\end{lemma}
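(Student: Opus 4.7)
My plan is to prove the claim by induction on $s$, leveraging the alignment between the recursive definition of a semitone sequence and the recursive structure of the adversary's random value assignment. The base case $s=1$ is trivial since $1/s=1$; the work is in the inductive step.

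For the inductive step, the semitone property guarantees $\pi(x_s)\in\{\min_{i\in[s]}\pi(x_i),\max_{i\in[s]}\pi(x_i)\}$; I carry out the argument in the case $\pi(x_s)=\min_{i\in[s]}\pi(x_i)$, i.e., $x_s$ arrives first in $\pi$-order among the sequence, and indicate the symmetric case at the end. The opening observation is that at the moment $x_s$ arrives, the algorithm has seen only value-$0$ items, so its decision whether to pick $x_s$ depends on $\pi$ alone and is independent of the random value assignment; let $p_s$ denote this fixed probability.

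Conditioning on $\textrm{value}(x_s)$: with probability $1/s$ we have $\textrm{value}(x_s)=s$ (the global maximum), and the algorithm wins iff it picks $x_s$, contributing $p_s$ to the conditional success probability. With probability $(s-1)/s$ we have $\textrm{value}(x_s)=1$, and the global maximum lies in $(x_1,\ldots,x_{s-1})$. Two facts close the recursion: (i) by the recursive definition, $(x_1,\ldots,x_{s-1})$ is itself semitone with respect to $\pi$; and (ii) conditional on $\textrm{value}(x_s)=1$, the values $\{2,\ldots,s\}$ are distributed on $(x_1,\ldots,x_{s-1})$ by the same recursive rule at parameter $s-1$, a relabeling that is invisible to any comparison-based algorithm. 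The inductive hypothesis then bounds the conditional probability of selecting the subsequence's maximum by $1/(s-1)$, giving
\[
\Pr[\text{correct selection}] \;\le\; \tfrac{1}{s}\cdot p_s \;+\; \tfrac{s-1}{s}\cdot (1-p_s)\cdot \tfrac{1}{s-1} \;=\; \tfrac{1}{s}.
\]

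The most delicate step, and the main point requiring care, is justifying that the algorithm's play on the remainder of the stream after skipping $x_s$ really corresponds to a valid strategy for the $(s-1)$-item semitone secretary problem to which the inductive hypothesis applies. Although later items can still be compared with $x_s$, the conditioning fixes $\textrm{value}(x_s)$ as the global minimum, so every such comparison returns a deterministic answer given the conditioning and therefore conveys no information beyond what a standalone $(s-1)$-item comparison-based algorithm would see. The symmetric case $\pi(x_s)=\max_{i\in[s]}\pi(x_i)$ is handled by an analogous argument in which the algorithm plays on $(x_1,\ldots,x_{s-1})$ before encountering $x_s$, and a single comparison upon $x_s$'s arrival reveals whether $\textrm{value}(x_s)$ is the global max or min; the same case split and inductive bookkeeping then yield the $1/s$ upper bound.
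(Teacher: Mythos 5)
Your induction and case split follow the same outline as the paper's proof, and your ``$x_s$ arrives first'' case is sound. The gap is in the case you dismiss as symmetric, where $x_s$ arrives \emph{after} $x_1,\ldots,x_{s-1}$: there the bookkeeping is not analogous, and the induction hypothesis you carry, namely $\Pr{\text{select best of }(x_1,\ldots,x_{s-1})}\le \tfrac{1}{s-1}$, is too weak to close the argument. Write $a$ for the probability that the algorithm has selected some item of $x_1,\ldots,x_{s-1}$ before $x_s$ arrives and $b$ for the probability that it has selected the best of them (both are independent of $\textrm{value}(x_s)$, since the induced ordering of the prefix has the same distribution whether the remaining value set is $\{1,\ldots,s-1\}$ or $\{2,\ldots,s\}$). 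The natural bound in this case is
\[
\Pr{\text{correct selection}}\;\le\;\tfrac{1}{s}(1-a)\;+\;\tfrac{s-1}{s}\,b ,
\]
and with only $b\le\tfrac{1}{s-1}$ this evaluates to at most $\tfrac{2-a}{s}$, which exceeds $\tfrac1s$ whenever $a<1$. Intuitively, your hypothesis does not rule out an algorithm that rarely commits on the prefix (keeping the option of grabbing $x_s$ at the end) yet is ``too accurate'' conditional on having committed. What is actually needed is $b\le\tfrac{a}{s-1}$, i.e., the \emph{conditional} statement that, given the algorithm has selected one of $x_1,\ldots,x_t$, its selection is the best of them with probability at most $\tfrac1t$.

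This is precisely how the paper proceeds: it strengthens the induction hypothesis to the invariant $\Pr{\mathcal{B}_t}=\tfrac1t\Pr{\mathcal{A}_t}$, where $\mathcal{A}_t$ (resp.\ $\mathcal{B}_t$) is the event of selecting some (resp.\ the best) item among $x_1,\ldots,x_t$, and verifies it in both semitone cases. Your first case already establishes the strengthened form: there $\Pr{\mathcal{A}_s}=p_s+(1-p_s)\Pr{\mathcal{A}_{s-1}}$, and the same computation you performed yields $\Pr{\mathcal{B}_s}=\tfrac1s\Pr{\mathcal{A}_s}$. So the fix is to prove the ratio invariant throughout and read off the lemma from $\Pr{\mathcal{B}_s}\le\tfrac1s$ at the end. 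The remaining ingredients of your write-up --- the independence of the pick-$x_s$ decision from the value assignment when $x_s$ arrives first, the order-isomorphism of $\{2,\ldots,s\}$ with $\{1,\ldots,s-1\}$ for comparison-based algorithms, and the observation that comparisons against the skipped $x_s$ are deterministic under the conditioning --- are correct and consistent with the paper's reasoning.
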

\begin{proof}
Fixing some (deterministic) algorithm and permutation $\pi$, let $\mathcal{A}_t$ be the event that the algorithm selects any item among $x_1, \ldots, x_t$ and let $\mathcal{B}_t$ be  the event that the algorithm selects the best item among $x_1, \ldots, x_t$. We will show by induction that $\Pr{\mathcal{B}_t} = \frac{1}{t} \Pr{\mathcal{A}_t}$. This will imply $\Pr{\mathcal{B}_s} = \frac{1}{s} \Pr{\mathcal{A}_s} \leq \frac{1}{s}$.

For $t=1$ this statement trivially holds. Therefore, let us consider some $t > 1$. By induction hypothesis, we have $\Pr{\mathcal{B}_{t - 1}} = \frac{1}{t - 1} \Pr{\mathcal{A}_{t - 1}}$. As $(x_1, \ldots, x_t)$ is semitone with respect to $\pi$, $x_t$ either comes before or after all $x_1, \ldots, x_{t-1}$. We distinguish these two cases.

\emph{Case 1: $x_t$ comes before all $x_1, \ldots, x_{t-1}$.} The algorithm can decide to accept $x_t$ (without seeing the items $x_1, \ldots, x_{t-1}$). In this case, we have $\mathcal{A}_t$ for sure. We only have $\mathcal{B}_t$ if $x_t$ gets a higher value than $x_1, \ldots, x_{t-1}$. By definition this happens with probability $\frac{1}{t}$. So, we have $\Pr{\mathcal{B}_t} = \frac{1}{t} \Pr{\mathcal{A}_t}$. The algorithm can also decide to reject $x_t$. Then $\mathcal{A}_t$ if and only if $\mathcal{A}_{t - 1}$. Furthermore, $\mathcal{B}_t$ if and only if $\mathcal{B}_{t - 1}$ and $x_t$ does not get the highest value among $x_1, \ldots, x_t$. These events are independent, so $\Pr{\mathcal{B}_t} = (1 - \frac{1}{t}) \Pr{\mathcal{B}_{t - 1}}$. Applying the induction hypothesis, we get $\Pr{\mathcal{B}_t} = (1 - \frac{1}{t}) \Pr{\mathcal{B}_{t - 1}} = \frac{t-1}{t} \frac{1}{t-1}\Pr{\mathcal{A}_{t - 1}} = \frac{1}{t} \Pr{\mathcal{A}_t}$.

\emph{Case 2: $x_t$ comes after all $x_1, \ldots, x_{t-1}$.} When the algorithm comes to $x_t$, it may or may not have selected an item so far. If it has already selected an item ($\mathcal{A}_{t-1}$), then this element is the best among $x_1, \ldots, x_t$ with probability $\Pr{\mathcal{B}_{t-1} \growingmid \mathcal{A}_{t-1}} = \frac{1}{t-1}$ by induction hypothesis. Independent of these events, $x_t$ is worse than the best items among $x_1, \ldots, x_{t-1}$ with probability $1 - \frac{1}{t}$. Therefore, we get $\Pr{\mathcal{B}_t \growingmid \mathcal{A}_{t - 1}} = \frac{1}{t-1} \frac{t-1}{t} = \frac{1}{t}$. It remains the case that the algorithm selects item $x_t$ ($\mathcal{A}_t \setminus \mathcal{A}_{t-1}$). This item is the better than $x_1, \ldots, x_{t-1}$ with probability $\frac{1}{t}$. That is, $\Pr{\mathcal{B}_t \growingmid \mathcal{A}_t \setminus \mathcal{A}_{t - 1}} = \frac{1}{t}$. In combination, we have $\Pr{\mathcal{B}_t} = \Pr{\mathcal{A}_{t - 1}} \Pr{\mathcal{B}_t \growingmid \mathcal{A}_{t - 1}} + \Pr{\mathcal{A}_t \setminus \mathcal{A}_{t - 1}} \Pr{\mathcal{B}_t \growingmid \mathcal{A}_t \setminus \mathcal{A}_{t - 1}} = \Pr{\mathcal{A}_{t - 1}} \frac{1}{t} + \Pr{\mathcal{A}_t \setminus \mathcal{A}_{t - 1}} \frac{1}{t} = \frac{1}{t} \Pr{\mathcal{A}_t}$.
\end{proof}

Now, to show Theorem~\ref{theorem:lowerboundentropy}, we first give a bound in terms of the support size of the distribution. In fact.  Lemmas~\ref{low1:lemma-short} and \ref{low3:lemma-short} with Yao's principle then imply that any algorithm's probability of success against a worst-case adversary is at most $\frac{k + 1}{\log n}$ (details of the proof are in Appendix~\ref{section:entropy-lower-bound-full-proof}). Later on, we will show how this transfers to a bound on the entropy.

\begin{lemma}
\label{lemma:lowerboundsupportsize}
If $\pi\colon \items \to [n]$ is chosen from a distribution of support size at most $k$, then any algorithm's probability of success against a worst-case adversary is at most $\frac{k + 1}{\log n}$.
\end{lemma}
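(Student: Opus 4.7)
The plan is to combine Lemma~\ref{low1:lemma-short} and Lemma~\ref{low3:lemma-short} via Yao's minimax principle. First, I would enumerate the support of $\natpd$ as $\{\pi_1,\ldots,\pi_k\}$ (padding with duplicates if the support is strictly smaller than $k$), and apply Lemma~\ref{low1:lemma-short} to this list of $k$ permutations. This produces a \emph{single} sequence $(x_1,\ldots,x_s)$ of length $s > \frac{\log n}{k+1}$ that is semitone with respect to every $\pi_i$ in the support simultaneously.

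Next, I would construct a randomized value assignment on $\items$ by running the recursive procedure from Lemma~\ref{low3:lemma-short} on this semitone sequence and assigning value $0$ to all other items. The key observation is that this distribution over adversarial value assignments depends only on the sequence $(x_1,\ldots,x_s)$ and not on which $\pi_i$ will eventually be drawn, so it defines a single adversarial mixed strategy playable against any algorithm. For any deterministic algorithm $A$ and any $\pi_i$ in the support, Lemma~\ref{low3:lemma-short} bounds the probability (over the random value assignment) that $A$ picks the best item when the arrival order is $\pi_i$ by $\frac{1}{s}$. Averaging over $\pi \sim \natpd$ then yields the same bound $\frac{1}{s}$ for every deterministic algorithm.

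To finish, I would invoke Yao's principle: since every deterministic algorithm succeeds with probability at most $\frac{1}{s}$ against our constructed distribution over value assignments, the best randomized algorithm against a worst-case (deterministic) adversary also succeeds with probability at most $\frac{1}{s} < \frac{k+1}{\log n}$, which is the claimed bound. The only mildly non-routine aspect — and therefore the step I would view as the main ``obstacle'' — is verifying that the random adversarial assignment can be built from a single sequence that works against all permutations in the support; this is exactly what Lemma~\ref{low1:lemma-short} provides, which is why that lemma was stated with $k$ permutations rather than just one.
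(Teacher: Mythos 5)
Your proposal is correct and follows essentially the same route as the paper: extract one sequence semitone with respect to every permutation in the support via Lemma~\ref{low1:lemma-short}, run the randomized value assignment of Lemma~\ref{low3:lemma-short} on it, and conclude by averaging over $\pi\sim\natpd$ together with Yao's principle. The point you flag as the main obstacle---that a single sequence works against all $k$ permutations simultaneously---is indeed exactly the role of Lemma~\ref{low1:lemma-short} in the paper's argument.
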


To get a bound on the entropy, we show that for a low-entropy distribution there is a small subset of the support that is selected with high probability. More precisely, we have the following technical lemma whose proof can be found in Appendix~\ref{section:entropy-lower-bound-full-proof}.

\begin{lemma}
\label{lemma:entropyvssupportsize}
Let $a$ be drawn from a finite set $\mathcal{D}$ by a distribution of entropy $H$. Then for any $k \geq 4$ there is a set $T \subseteq \mathcal{D}$, $\lvert T \rvert \leq k$, such that $\Pr{a \in T} \geq 1 - \frac{8 H}{\log(k - 3)}$.
\end{lemma}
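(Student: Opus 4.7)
The plan is to let $T$ consist of the $k$ elements of $\mathcal{D}$ with largest probability mass under the given distribution, and to bound $\Pr{a \notin T}$ by a single-line entropy calculation.

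First I would enumerate $\mathcal{D}$ as $a_1, a_2, \ldots$ so that the probabilities $p_i = \Pr{a = a_i}$ satisfy $p_1 \geq p_2 \geq \cdots$, and set $T = \{a_1, \ldots, a_k\}$ (if $|\mathcal{D}| < k$ the statement is trivial, so take $T = \mathcal{D}$). Since the top $k$ probabilities each exceed $p_k$ and sum to at most $1$, we have $p_k \leq 1/k$, and therefore $p_i \leq 1/k$ for every $i > k$. Setting $q = \Pr{a \notin T} = \sum_{i > k} p_i$, the inequality $p_i \leq 1/k$ forces $\log(1/p_i) \geq \log k$ on each tail term, so non-negativity of the summands $p_i \log(1/p_i)$ yields
\[
H \;=\; \sum_{i} p_i \log\!\frac{1}{p_i} \;\geq\; \sum_{i > k} p_i \log\!\frac{1}{p_i} \;\geq\; (\log k)\,q,
\]
and hence $q \leq H / \log k$.

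To cast this into the form stated by the lemma, I would finally observe that for every $k \geq 4$ the comparison $8 \log k \geq \log(k-3)$ holds: at $k = 4$ the right-hand side is $0$ and the claimed bound $8H/\log(k-3)$ is $+\infty$, so the inequality is vacuous; for $k \geq 5$ we simply have $\log(k-3) \leq \log k \leq 8\log k$. Combining with the estimate above gives $\Pr{a \in T} \geq 1 - H/\log k \geq 1 - 8H/\log(k-3)$, which is exactly the claim.

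I do not anticipate a real obstacle: the whole argument is a one-line entropy estimate of the same flavor as the standard Shannon-code calculation. The factor $8$ and the shift by $3$ in the lemma's statement appear to be slack that comes from fitting a cleaner downstream expression; the top-$k$ construction already delivers the strictly sharper bound $\Pr{a \in T} \geq 1 - H/\log k$, and no subtler truncation or probabilistic-method argument is needed.
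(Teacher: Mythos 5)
Your proof is correct, and it in fact establishes the strictly stronger bound $\Pr{a \in T} \geq 1 - H/\log k$, by a route that is genuinely different from (and simpler than) the paper's. The paper does not take the top-$k$ elements directly: it sets a probability threshold $\beta = H/\big((k-3)\log(k-3)\big)$, lets $T$ be the elements of mass at least $\beta$, and then must control the tail by partitioning the remaining elements into groups of total mass roughly $\beta$ and invoking the grouping (coarsening) property of entropy; this yields bounds of the form $k,\ell \leq \alpha/\beta + 3$ on both the number of above-threshold elements and the number of tail groups, and the factor $8$ and the shift to $k-3$ are artifacts of that bookkeeping. Your key observation --- that the $(k{+}1)$-st largest probability is automatically at most $1/k$, so every tail term $p_i \log(1/p_i)$ contributes at least $\log k$ per unit of tail mass --- makes the threshold and the grouping unnecessary and gives the sharper constant in two lines; the remaining step, checking $\log(k-3) \leq 8\log k$ together with the vacuous case $k=4$ and the degenerate case $H=0$, is exactly as you describe. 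Since the lemma is only used downstream with $k=\sqrt{\log n}$ to get an $o(1)$ failure probability, your cleaner bound serves equally well there.
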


Finally, Theorem~\ref{theorem:lowerboundentropy} is proven as a combination of Lemma~\ref{lemma:lowerboundsupportsize} and Lemma~\ref{lemma:entropyvssupportsize}.

\begin{proof}[Proof of Theorem~\ref{theorem:lowerboundentropy}]
Set $k = \sqrt{\log n}$. Lemma~\ref{lemma:entropyvssupportsize} shows that there is a set of permutations $\Pi$ of size at least $k$ that is chosen with probability at least $1 - \frac{8 H}{\log(k - 3)}$. The distribution conditioned on $\pi$ being in $\Pi$ has support size only $k$. Lemma~\ref{lemma:lowerboundsupportsize} shows that if $\pi$ is chosen by a distribution of support size $k$, then the probability of success of any algorithm against a worst-case adversary is at most $\frac{k + 1}{\log n}$. Therefore, we get
\begin{align*}
\Pr{\text{success}} & = \Pr{\pi \in \Pi} \Pr{\text{success} \growingmid \pi \in \Pi} + \Pr{\pi \not\in \Pi} \Pr{\text{success} \growingmid \pi \not\in \Pi} \\
& \leq \Pr{\text{success} \growingmid \pi \in \Pi} + \Pr{\pi \not\in \Pi} \\
& \leq \frac{k + 1}{\log n} + \frac{8 H}{\log(k - 3)} \\
& = o(1) \enspace.
\end{align*}
\end{proof}

\section{Easy Distributions Are Hard to Characterize}
\label{sec:complexity}

Which distributions are s-admissible, meaning that they
allow an algorithm to achieve constant probability
of correct selection in the secretary problem?
The results in \secref{sec:non-uniform} and \secref{sec:entropy}
inspire hope that the $(k,\delta)$-UIOP, the $(p,q,\delta)$-BIP, 
or something very similar, is both
necessary and sufficient for s-admissibility.
%% For which distributions $\natpd$ over arrival orders
%% does there exist an
%% algorithm that achieves constant probability of
%% correct selection in the secretary problem?
%% In the notation of \secref{sec:prelims}, for
%% which families of distributions 
%% $\natpd$ is $\gval$ bounded below by a
%%  constant? We have already seen sufficient
%% conditions (the block-independence property) 
%% and a necessary condition (the absence of 
%% probably-semitone sequences of super-constant length).
Unfortunately, in this section we show that in some sense, it is
hopeless to try formulating 
a comprehensible condition that is both 
necessary and sufficient.
%% Above we have presented a sufficient
%% condition (the block-independence property) and
%% a necessary condition (the absence of probably-semitone
%% sequences of super-constant length) which together were
%% precise enough to identify the asymptotically minimal
%% entropy of distributions that allow constant
%% probability of correct selection. This success
%% inspires a natural question: is one of these 
%% properties both necessary and sufficient? Is
%% there some other easily-stated necessary and
%% sufficient condition?
%% It is fairly easy to see that the answer to the
%% first question is negative. This section essentially
%% resolves the second question, also negatively.
We construct a family of distributions $\natpd$ 
with associated algorithms $\algo$ having constant
success probability when the items are randomly
ordered according to $\natpd$, but the complicated
and unnatural structure of the distribution and 
algorithm underscore the pointlessness of precisely characterizing
s-admissible distributions. In more objective terms, 
we construct a $\natpd$ which is s-admissible,
yet for any algorithm whose stopping rule is computable
by circuits of  size less than $2^{n / \log(n)}$, 
the probability of correct selection is $o(1)$.
%% In fact, assuming the existence
%% of pseudorandom generators, we can 
%% take $\natperm$ to be a polynomially-samplable
%% distribution.

Throughout this section (and its corresponding
appendix) we will summarize the adversary's assignment
of values to items by a permutation $\advperm$;
the $j^{\mathrm{th}}$ largest value is assigned to item $\advperm(j)$.
If $\advpd$ is any probability distribution over such
permutations, we will let $\gval(\algo,\advpd)$ denote
the probability that $\algo$ makes a correct
selection when the adversary samples the value-to-item
assignment from $\advpd$, and nature independently samples
the item-to-time-slot assignment from $\natpd$. We will
also let 
\begin{align*}
\gval(\ast,\advpd) &= \max_{\algo} \gval(\algo,\advpd) \\
\gval(\algo,\ast) &= \min_{\advpd} \gval(\algo,\advpd) \\
\gval &=\min_{\advpd} \max_{\algo} \gval(\algo,\advpd).
\end{align*}
Thus, for example, the property that $\natpd$ is s-admissible
is expressed by the formula $\gval = \Omega(1)$.

As a preview of the techniques underlying
our construction, it is instructive to first
consider a game against nature in 
which there is no adversary, and the algorithm
is simply trying to pick out the maximum element
when items numbered in order of decreasing
value arrive in the random order specified
by $\natpd$. % In the notation of \secref{sec:prelims}
This amounts to determining $\gval(\ast,\idpd)$,
where $\idpd$ is the distribution that assigns probability 
1 to the identity permutation. Our construction is based
on the following intuition. In the secretary problem with
uniformly random arrival order, the arrival order of 
items that arrived before time $t$ is uncorrelated 
with the order in which items arrive after time $t$,
and so the ordering of past elements is irrelevant to
the question of whether to stop at time $t$. However,
there is a great deal of entropy in the ordering of 
elements that arrived before time $t$; it encodes
$\Theta(t \log t)$ bits of information. We will
construct a distribution $\natpd$ in which this
information contained in the ordering of the elements
that arrived before time $t=n/2$ fully encodes the
time when the maximum element will arrive after time $t$,
but in an ``encrypted'' way that cannot be decoded by
polynomial-sized circuits. We will make use of the 
well-known fact that a random function is hard on average
for circuits of subexponential size.

\begin{lemma} \label{lem:hard-on-avg}
If $\hardf : \{0,1\}^n \to [k]$ is a random function,
then with high probability there is 
no circuit of size $s(n) = 2^{n}/(8kn)$ that outputs the
function value correctly on more than $\frac{2}{k}$
fraction of inputs.
\end{lemma}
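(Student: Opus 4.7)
The plan is a standard counting argument: (i) bound the probability that any \emph{fixed} circuit of size $s := s(n) = 2^n/(8kn)$ agrees with $\hardf$ on more than a $2/k$-fraction of inputs using a Chernoff tail bound, and then (ii) take a union bound over all such circuits, using the fact that there are only $2^{O(s\log s)}$ of them. The factor $1/(8n)$ in $s(n)$ is calibrated precisely so that the union-bound mass is dominated by the Chernoff tail.

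For step (i), fix any circuit $C$ and let $g_C\colon\{0,1\}^n\to[k]$ be the function it computes. Because $\hardf(x)$ is uniform and independent across $x\in\{0,1\}^n$, the indicators $\ind{g_C(x) = \hardf(x)}$ are i.i.d.\ Bernoulli$(1/k)$, so their sum $Y$ has mean $\mu = 2^n/k$, and the multiplicative Chernoff bound gives
\[
\Pr{Y \geq 2\mu} \;\leq\; (e/4)^{\mu} \;=\; 2^{-c\,2^n/k},
\]
where $c = \log_2(4/e) > 0.55$.

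For step (ii), I would bound the number of circuits of size at most $s$ over any standard constant-size fan-in-two basis by $(O(s+n))^{2s} = 2^{2s\log s + O(s)}$: each of the $\leq s$ gates chooses a gate type and two predecessor wires from among the $\leq s+n$ candidates, with one gate designated as the output. Substituting $s = 2^n/(8kn)$ gives $\log s \leq n$, hence the circuit count is at most $2^{2^n/(4k) + o(2^n/k)}$. A union bound then shows that the probability that \emph{some} circuit of size $\leq s$ agrees with $\hardf$ on more than a $2/k$-fraction of inputs is at most
\[
2^{2^n/(4k) + o(2^n/k)} \cdot 2^{-c\,2^n/k} \;=\; 2^{-\Omega(2^n/k)},
\]
since $c = \log_2(4/e) > 1/4$. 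This is $o(1)$ (in fact doubly-exponentially small in $n$), giving the ``with high probability'' conclusion. The only delicate point is the constant-chasing just described; the $1/(8n)$ factor in $s(n)$ is what guarantees $2s\log s \leq 2^n/(4k)$, comfortably below the Chernoff exponent. Everything else is a routine instance of the folklore ``random function is hard on average'' argument, adapted here to a $k$-ary output alphabet (agreement rate $1/k$ instead of $1/2$).
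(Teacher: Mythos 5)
Your proof is correct and takes essentially the same route as the paper's: a multiplicative Chernoff bound for a fixed circuit (exploiting that the events $\{C(x)=\hardf(x)\}$ are independent over $x$ because $\hardf$ is random), followed by a union bound over the $s^{O(s)}$ circuits of size $s$, with the $1/(8n)$ factor in $s(n)$ calibrated so the circuit-counting term $O(s\log s)$ stays below the Chernoff exponent $\Omega(2^n/k)$. The only differences are cosmetic constants: you use the $(e/4)^{\mu}$ form of Chernoff and a $(O(s+n))^{2s}$ circuit count where the paper uses $e^{-\mu/3}$ and $s^{3s}$.
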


The simple proof of \lemref{lem:hard-on-avg} is included
in the appendix, for reference. 

\begin{theorem} \label{thm:game-against-nature}
There exists a family of distributions $\natpd \in \dists{S_n}$ 
such that $\gval(\ast,\idpd) = 1$, but for any algorithm $\algo$ whose
stopping rule can be computed by circuits of size
$s(n) = 2^{n/8}$, we have $\gval(\algo,\idpd) = O(1/n)$.
\end{theorem}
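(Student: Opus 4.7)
The plan is to construct $\natpd$ so that the arrival position of item~$1$ (the max) is encoded by an average-case hard function of the observed permutation of first-half items. Setting $N = \lceil \log_2((n/2)!) \rceil = \Theta(n\log n)$ and applying Lemma~\ref{lem:hard-on-avg} with range size $k = n/2$, I obtain a function $\hardf\colon S_{n/2}\to[n/2]$ (via any fixed bit-encoding of $S_{n/2}$) such that no circuit of size at most $2^N/(8 \cdot (n/2) \cdot N) = 2^{\Omega(n\log n)}$ agrees with $\hardf$ on more than a $\tfrac{4}{n}$ fraction of inputs. Define $\natpd$ by sampling $\sigma\in S_{n/2}$ uniformly and constructing $\pi = \pi_\sigma$ as follows: for $i\in[n/2]$, position $i$ holds item $n/2+\sigma(i)$ (so the first half contains the bottom $n/2$ items in the order dictated by $\sigma$); position $n/2+\hardf(\sigma)$ holds item~$1$; and the remaining positions in $\{n/2+1,\dots,n\}$ are filled, in order of position, with items $n/2,n/2-1,\dots,2$, skipping the slot reserved for item~$1$. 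The crucial feature is that the sequence of values at second-half positions is
\[
n/2+1,\; n/2+2,\; \dots,\; n/2+\hardf(\sigma)-1,\; n,\; n/2+\hardf(\sigma),\; \dots,\; n-1,
\]
so item~$1$ (value $n$) is inserted into an otherwise strictly increasing block, and \emph{every} position in $\{n/2+1,\dots,n/2+\hardf(\sigma)\}$ contains a new maximum whose relative rank equals its position index.

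The achievability claim $\gval(\ast,\idpd) = 1$ is immediate: an unbounded algorithm observes the first $n/2$ items, reconstructs $\sigma$ from their relative order, computes $\hardf(\sigma)$, and stops at position $n/2+\hardf(\sigma)$, always selecting item~$1$.

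For the hardness claim, introduce auxiliary inputs $\pi^{(m)}(\sigma)$ (for $m\in[n/2]$) defined like $\pi_\sigma$ but with item~$1$ moved to position $n/2+m$. The pivotal observation is that for any $m,m'\geq k$, the relative orderings of the first $n/2+k$ arrivals under $\pi^{(m)}(\sigma)$ and $\pi^{(m')}(\sigma)$ are identical: the first $n/2$ items agree, and each second-half position $n/2+j$ (for $j\leq k$) is, in both inputs, the new maximum of rank $n/2+j$---whether it actually holds item~$1$ or merely the next item in the increasing sequence. Consequently, for any deterministic algorithm $\algo$ (the randomized case follows by fixing the best random tape), its decision at time $n/2+k$ depends only on $(\sigma,k)$ as long as $k\leq m$, and hence $\algo$ succeeds on the true input $\pi^{(\hardf(\sigma))}(\sigma)$ if and only if it first stops at time $n/2+\hardf(\sigma)$ on the ``phantom'' input $\pi^{(n/2)}(\sigma)$.

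This yields the reduction: let $\mathcal{C}$ be the circuit that, on input $\sigma$, constructs $\pi^{(n/2)}(\sigma)$ (easy from $\sigma$) and simulates $\algo$ on it, outputting the first-stop time minus $n/2$. Then $\Pr_\sigma[\mathcal{C}(\sigma)=\hardf(\sigma)]$ equals the success probability of $\algo$ under $\natpd$ and $\idpd$. If $\algo$'s stopping rule is computed by a circuit of size $2^{n/8}$, then $\mathcal{C}$ has size $2^{n/8}\cdot\text{poly}(n)\ll 2^{\Omega(n\log n)}$, and the hardness of $\hardf$ forces $\gval(\algo,\idpd)\leq\tfrac{4}{n}=O(1/n)$, as desired. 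The main technical obstacle is the design of $\pi_\sigma$: item~$1$, being the overall maximum, announces itself as a new maximum at its arrival time and so must be camouflaged from a comparison-based algorithm. Filling the rest of the second half with a strictly increasing block of values accomplishes this camouflage exactly---a more naive construction (for instance, placing small items around item~$1$) would make item~$1$ the unique new maximum of the second half, and a trivial threshold rule would then identify it with constant probability.
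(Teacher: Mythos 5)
Your proof is correct and follows essentially the same route as the paper's: a first half that encodes a message readable by a comparison-based algorithm, a strictly increasing second half into which the maximum item is inserted at a position given by an average-case-hard function of that message, and a reduction that simulates the algorithm on the ``phantom'' input with the maximum item removed to its default slot, using the fact that the two inputs are order-indistinguishable up to the stopping time. The only difference is that you take the hard function's domain to be all of $S_{n/2}$ (recovering $\Theta(n\log n)$ bits from the first half's relative order) rather than the paper's $n/4$ bits encoded by pairwise swaps; this is immaterial here and only strengthens the circuit-size bound, modulo the minor bookkeeping of applying Lemma~\ref{lem:hard-on-avg} to a domain whose size is not a power of two.
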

\begin{proof}
Assume for convenience that $n$ is divisible by 4. 
Fix a function $\hardf : \{0,1\}^{n/4} \to [n/2]$
such that no circuit of size $s(n) = 2^{\frac{n}{4}}/(n^2)$ outputs
the value of $\hardf$ correctly on more than $\frac{4}{n}$
fraction of inputs. The existence of such functions is 
ensured by Lemma~\ref{lem:hard-on-avg}. We use $\hardf$ to
define a permutation distribution $\natpd$ as follows.
For any binary string $x \in \{0,1\}^{n/4}$, define a 
permutation $\natperm(x)$ by performing the
following sequence of operations. First, rearrange
the items in order of increasing value by mapping
item $i$ to position $n-i+1$ for each $i$. Next, for
$i=1,\ldots,\frac{n}{4}$, swap the items in positions
$i$ and $i+\frac{n}{4}$ if and only if $x_i=1$. 
Finally, swap the items in positions
$n$ and $\frac{n}{2} + \hardf(x)$. (Note that this places
the maximum-value item in position $\frac{n}{2} + \hardf(x)$.)
The permutation distribution $\natpd$
is the uniform distribution over $\{\natperm(x) \mid
x \in \{0,1\}^{n/4}\}$.

It is easy to design an algorithm which always selects the
item of maximum value when the input sequence $\natperm$ is sampled
from $\natpd$. The algorithm first decodes the unique binary string
$x$ such that $\natperm = \natperm(x)$, by comparing the items
arriving at times $i$ and $i+\frac{n}{4}$ for each $i$ and setting the
bit $x_i$ according to the outcome of this comparison. Having
decoded $x$, we then compute $\hardf(x)$ and select the item
that arrives at time $\frac{n}{2} + \hardf(x)$. By construction,
when $\natperm$ is drawn from $\natpd$ this is always the element
of maximum value.

Finally, if $\algo$ is any secretary algorithm
% $\gval(\algo,\idpd) > \frac{4}{n}$, 
we can attempt use $\algo$ to guess the value of $\hardf(x)$ for any
input $x \in \{0,1\}^{n/4}$ by the following simulation
procedure. First, define a permutation $\natperm'(x)$ by 
performing the same sequence of operations as in $\natperm(x)$
except for the final step of swapping the items in positions
$n$ and $n/2 + \hardf(x)$; note that this means that $\natperm'(x)$,
unlike $\natperm(x)$, can be constructed from input $x$ by a circuit
of polynomial size.
Now simulate $\algo$
on the input sequence $\natperm'(x)$, observe the
time $t$ when it selects an item, and output
$t - \frac{n}{2}$. The circuit complexity of this simulation
procedure is at most $\operatorname{poly}(n)$ times
the circuit complexity of the stopping rule implemented
by $\algo$, and the fraction of inputs $x$ on which it guesses
$\hardf(x)$ correctly is precisely $\gval(\algo,\idpd)$. 
(To verify this last statement, note that $\algo$  
makes its selection at time $t = \frac{n}{2} + \hardf(x)$ when
observing input sequence $\natperm(x)$ {\em if and only if}
if also makes its selection at time $t$ when observing
input sequence $\natperm'(x)$, because the two input 
sequences are indistinguishable to comparison-based
algorithms at that time.)
Hence,
if $\gval(\algo,\idpd) > \frac{4}{n}$ then the stopping rule
of $\algo$ cannot be implemented by circuits of size $2^{n/8}$.
\end{proof}

Our main theorem in this section derives essentially
the same result for the standard game-against-adversary
interpretation of the secretary problem, rather than
the game-against-nature interpretation adopted in 
\thmref{thm:game-against-nature}. 

\begin{theorem} \label{thm:complexity}
For any function $\ptz(n)$ such that 
$\lim_{n \to \infty} \ptz(n)=0$
while $\lim_{n \to \infty} \frac{n \cdot \ptz(n)}{\log n} = \infty$,
there exists a family of distributions $\natpd \in \dists{S_n}$ 
such that $\gval = \Omega(1)$, 
but any algorithm $\algo$ whose
stopping rule can be computed by circuits of size
$s(n) = 2^{n \, \ptz(n) / 4}$ satisfies $\gval(\algo,\ast) =  O(\ptz(n))$.
\end{theorem}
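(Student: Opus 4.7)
The plan is to adapt the construction from \thmref{thm:game-against-nature} to the setting with an adversary, using the error-correcting code promised in the introduction to compensate for the distortions an arbitrary $\advperm$ can introduce into the observed comparison pattern. First I set $m = \Theta(n\ptz(n))$; the hypotheses on $\ptz$ ensure $m = \omega(\log n)$ and $m = o(n)$, so that $2^m$ is super-polynomial while $m/n \to 0$. By \lemref{lem:hard-on-avg} I pick a hard function $\hardf : \{0,1\}^m \to [n]$ not computable on more than an $O(\ptz(n))$-fraction of inputs by any circuit of size $s'(n) = 2^m/\operatorname{poly}(n)$. The other ingredient is the binary error-correcting code $C : \{0,1\}^m \to \{0,1\}^n$ advertised in the introduction, whose defining property is that after any $n - 2m$ coordinates of a codeword are erased and $\Omega(m)$ of the remaining $2m$ are adversarially corrupted, most messages remain uniquely decodable; its construction via a standard random-coding argument would be included in an appendix.

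Given these ingredients I would build $\natpd$ as the uniform distribution over permutations $\{\natperm(x) : x \in \{0,1\}^m\}$, where $\natperm(x)$ is obtained from a baseline ordering by two operations: swapping items within each of $n/2$ adjacent pairs of positions according to the bits of $C(x)$, and then executing a single long-range swap keyed by $\hardf(x)$ that relocates a distinguished ``beacon'' item. The result is a permutation in which $C(x)$ is encoded in the pairwise order of adjacent arrivals and $\hardf(x)$ is encoded in the beacon's final location.

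To show $\gval = \Omega(1)$ I would exhibit, for each adversary $\advperm$, a (possibly large-circuit) algorithm $\algo^*_\advperm$ that records the comparison outcomes of adjacent arrivals into a string $y$, decodes $x$ from $y$ using the ECC's decoder, computes $\hardf(x)$, and stops at the associated beacon position. The technical heart of the proof---and the main obstacle---is a combinatorial lemma asserting that $y$ can always be written as $C(x)$ with at most $n - 2m$ erasures and $O(m)$ adversarial corruptions, regardless of $\advperm$: an adjacent pair whose two items have widely separated ranks under $\advperm$ contributes a bit that is forced by $\advperm$ alone (an erasure and so useless for decoding), while only $O(m)$ adjacent pairs can have close-rank items (by a counting argument based on the combinatorial structure of $\natperm(x)$), contributing at most $O(m)$ adversarial bit flips. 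The ECC then recovers $x$ on a $1-o(1)$ fraction of inputs, and a short separate argument (leveraging the freedom in the final swap to place the beacon appropriately) shows that the beacon's position coincides with that of the maximum-value item with constant probability.

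Finally, the lower bound for small-circuit algorithms proceeds exactly as in \thmref{thm:game-against-nature}: fix $\advperm = \idpd$ and note that any algorithm $\algo$ whose stopping rule is computable by circuits of size $s(n) = 2^{n\ptz(n)/4}$ yields, by simulating $\algo$ on a version of $\natperm(x)$ in which the $\hardf(x)$-swap is replaced by a polynomial-size stand-in, a circuit of size $s(n)\cdot\operatorname{poly}(n) \leq s'(n)$ that outputs $\hardf(x)$ correctly on a $\gval(\algo,\idpd)$ fraction of inputs. The hardness of $\hardf$ then forces $\gval(\algo,\idpd) = O(\ptz(n))$, and hence $\gval(\algo,\ast) \leq \gval(\algo,\idpd) = O(\ptz(n))$, completing the proof.
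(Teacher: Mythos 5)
Your proposal reuses the encrypting idea of \thmref{thm:game-against-nature} but misses the central difficulty that the paper's proof is organized around, and the key lemma you posit is false as stated. You claim that \emph{regardless of $\advperm$} the observed comparison string $y$ differs from $C(x)$ by at most $n-2m$ erasures and $O(m)$ corruptions, and that a ``short separate argument'' makes the beacon coincide with the maximum-value item with constant probability. Neither holds against an arbitrary adversary. First, the adversary chooses the value-to-item assignment $\advperm$ knowing $\natpd$: it can simply give the maximum value to an item other than the beacon, in which case perfectly decoding $x$ and locating the beacon yields probability of correct selection $0$; there is no ``freedom in the final swap'' that can fix this, because the swap is part of nature's distribution and the adversary moves adversarially against it. Second, the adversary can reverse the relative value order of the two items sitting in every encoding pair (e.g.\ by an appropriate permutation of values within the relevant blocks), flipping essentially all $\Theta(n)$ read bits rather than $O(m)$ of them; your counting argument cannot rule this out because ``close-rank'' is determined by $\advperm$, which is chosen after the pairing structure is fixed.

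The paper's proof resolves exactly this by \emph{coercing} the adversary rather than decoding against all adversaries. It defines a class of \emph{decodable} orderings (item $n$ is the maximum, plus a quantitative condition on how the adversary ranks the items in $(\tfrac{n}{4},\tfrac{n}{2}]$), and takes $\natpd$ to be a uniform mixture $\tfrac13(\npde+\npdacf+\npdacs)$ of the encrypting distribution with two auxiliary distributions. The algorithm randomizes over three subroutines: $\algoe$ decodes and succeeds with probability $\ge\tfrac12$ whenever $\advperm$ is decodable (only then is the corruption rate bounded by the half-unique-decoding radius of the code), while $\algoacf$ and $\algoacs$ are simple threshold rules that each achieve constant success probability under their respective component distributions whenever $\advperm$ violates one of the two decodability conditions. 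This case split is what makes $\gval=\Omega(1)$ hold against \emph{every} $\advperm$; your single-distribution construction cannot achieve it. (Your hardness direction is essentially the paper's, but note the paper must also verify that the two coercing components themselves admit only $O(\ptz(n))$ success for the reversal ordering $\revperm$, which requires the randomized index choices inside $\npdacf$ and $\npdacs$; a deterministic baseline would leak the beacon's location information-theoretically.)
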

The full proof is provided in \appref{app:complexity}. Here we sketch
the main ideas.
\begin{proof}[Proof sketch.]
As in Theorem~\ref{thm:game-against-nature}, the algorithm and
``nature'' (\ie the process sampling the input order) 
will work in concert with each other to bring about a correct
selection, using a form of coordination that is 
information-theoretically easy but computationally hard. 
The difficulty lies in the fact that the adversary is
simultaneously working to thwart their efforts. If nature,
for example, wishes to use the first half of the input
sequence to ``encrypt'' the position where item 1 will
be located in the second half of the sequence,
then the adversary is free to assign the maximum value
to item 2 and a random value to item 1, rendering the
encrypted information useless to the algorithm.

Thus, our construction of the permutation distribution
$\natpd$ and algorithm $\algo$ will be guided by two
goals. First, we must ``tie the adversary's hands'' 
by ensuring that $\algo$ has constant probability of correct
selection unless the adversary's permutation, $\advperm$,
is in some sense ``close'' to the identity permutation.
Second, we must 
%% modify the construction in 
%% \thmref{thm:game-against-nature} so
ensure that $\algo$
has constant probability of correct selection whenever
$\advperm$ is close to the identity, not only when it
is equal to the identity as in \thmref{thm:game-against-nature}. 
To accomplish the second goal we modify the 
construction in \thmref{thm:game-against-nature}
so that the first half of the input sequence encodes 
the binary string $x$ using an error-correcting code.
To accomplish the first goal we define $\natpd$ to be
a convex combination of two distributions: the
``encrypting'' distribution described earlier, and
an ``adversary-coercing'' distribution designed to
make it easy for the algorithm to select the maximum-value
element unless the adversary's permutation $\advperm$
is close to the identity in an appropriate sense.
\end{proof}

\section{Extensions Beyond Classic Secretary Problem}
\label{extension:sec}

We look at two generalizations of the classic secretary problem in this section, namely the \emph{multiple-choice secretary problem}, studied in~\citep{Kleinberg05}, and the \emph{online weighted bipartite matching problem},  studied extensively in~\citep{KorulaP09,KRTV13}, under our non-uniform permutation distributions. We give a positive result showing that a natural variant of the algorithm in~\cite{Kleinberg05} achieves a $(1-o(1))$-competitive ratio under our pseudo-random properties defined in \secref{sec:non-uniform}, while for the latter we show the algorithm proposed by~\cite{KorulaP09} fails to achieve any constant competitive ratio under our pseudo-random properties.

\paragraph*{Multiple-choice secretary problem}\label{section:k-unif}
We consider multiple-choice secretary problem (a.k.a. $k$-uniform matroid secretary problem). In this setting not only a single secretary has to be selected but up to $k$. An algorithm observes items with non-negative values based on the ordering $\p\colon \items\rightarrow \ts$ and chooses at most $k$ items in an online fashion. The goal is to maximize the sum of values of selected items. We consider distributions over permutations $\pi$ that fulfill the $(p,q,\delta)$-BIP, for some $p\geq k$. We show that a slight adaptation of the algorithm in \cite{Kleinberg05} achieves competitive ratio $1 - o(1)$, for large enough values of  $k$ and $q$ and small enough $\delta$.

The algorithm is defined recursively. We denote by $\alg{n'}{k'}{q'}$ the call of the algorithm that operates on the prefix of length $n'$ of the input. It is allowed to choose $k'$ items and expects $q'$ number of blocks. For $k' = 1$, $\alg{n'}{k'}{q'}$ is simply the standard secretary algorithm that we analyzed in Section~\ref{section:secretaryanalysis}. For $k' > 1$, the algorithm first draws a random number $\thresh(q')$ from a binomial distribution $\textrm{Binom}(q', \frac{1}{2})$ and then executes $\alg{\frac{\tau(q')}{q'}n'}{\lfloor k'/2 \rfloor}{\tau(q')}$. After round $\frac{\tau(q')}{q'}n'$ (we assume $n'$ is always a multiplier of $q'$), the algorithm accepts every item whose value is greater than the $\lfloor k'/2 \rfloor$-highest item arrived during rounds $1, \ldots, \frac{\tau(q')}{q'}n'$, until $k'$ items are selected by the algorithm or until round $n'$. Output is the union of all items returned by the recursive call and all items algorithm picked after the threshold round. 
We now have the following theorem, which is proved in Appendix~\ref{extension:sec}.

\begin{theorem}

%\label{kunif:thm}
%Suppose $\pi$ is drawn from a permutation distribution that satisfies  $(2,\frac{n}{\blocklen},\delta)$-block-independence property, where $\delta\leq \frac{1}{\sqrt{k}}$. Then for all permutations $\sig$, $\alg{\items}{k}{\blocklen}$ is $(1-f(k,\blocklen,\delta))$-competitive for the $k$-uniform matroid secretary problem and some function $f$, where $f(k,\blocklen,\delta)$ converges to $0$ as $k$ goes to $\infty$ and $l$ goes to $1$.
%\end{theorem}

\label{kunif:thm}
Suppose $\pi$ is drawn from a permutation distribution that satisfies  $(p,q,\delta)$-BIP for some $p\geq k$ and $\delta\leq \frac{1}{k^{\frac{1}{2}}}$. Then for all permutations $\sig$, $\alg{\items}{k}{q}$ is $(1 - O(\frac{1}{k^{\frac{1}{3}}})-\epsilon)$-competitive for the $k$-uniform matroid secretary problem, where $\epsilon$ can be arbitrary small for large enough value of $q$ and small enough value of $\delta$.
\end{theorem}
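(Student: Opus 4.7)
The plan is to prove \thmref{kunif:thm} by induction on $k$, adapting the recursive analysis of~\citet{Kleinberg05} to the block-independence setting. The base case $k=1$ is the classical secretary problem, for which \thmref{theorem:secretaryanalysis} gives success probability $\geq 1/e - O(1/q) - \delta - (1 - 1/e)^{p-1}$; for $p \geq 1$, $\delta \leq 1$, and $q$ large enough this clears the competitive-ratio target. The heart of the proof is the inductive step. Let $S = \{s_1, \ldots, s_k\}$ denote the top-$k$ items (ranked under $\sig$) with values $v_1 \geq v_2 \geq \cdots \geq v_k$, and write $V_k^\star = \sum_{i=1}^{k} v_i$. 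Draw the block-threshold $\tau = \tau(q)$ from Binom($q,1/2$) independently of $\pi$; by a Chernoff bound, $\tau = q/2 \pm O(\sqrt{q \log q})$ with high probability, so the sampling and selection phases each comprise roughly half of the input blocks.

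The critical observation enabled by the $(p,q,\delta)$-BIP with $p \geq k$ is that the joint distribution of the block-indices $\bigl(\pi^B(s_1), \ldots, \pi^B(s_k)\bigr)$ is within total-variation distance $\delta$ of the product of $k$ independent uniform draws over $[q]$: the BIP lower-bounds every one of the $q^k$ atoms by $(1-\delta)/q^k$, and since the atoms sum to $1$ the total positive excess is at most $\delta$. Combined with the randomness of $\tau$, the number $K_1$ of top-$k$ items landing in the sampling phase is within TV distance $\delta + O(1/\sqrt{q})$ of a Binom($k,1/2$), so $K_1 = k/2 \pm O(\sqrt{k \log k})$ with failure probability $O(k^{-1/3})$ after parameter tuning, and similarly for the top $\lfloor k/2 \rfloor$ items individually.

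Given these concentration estimates, I would decompose the algorithm's expected reward into two contributions. First, the recursive call on the sampling phase: by the inductive hypothesis it collects $\bigl(1 - O((k/2)^{-1/3})\bigr)$ times the sum of the top $\lfloor k/2 \rfloor$ values present there, and by a linearity-of-expectation argument using BIP, each element of $S$ appears in the sampling phase with probability $\approx 1/2$, contributing at least $\tfrac{1}{2}(v_1 + \cdots + v_{\lfloor k/2 \rfloor}) - O(k^{-1/3}) V_k^\star$ in expectation. Second, the greedy selection phase: the threshold $T$ is the $\lfloor k/2 \rfloor$-th largest value in the sampling phase, which by the above concentration satisfies $T \leq v_{\lceil k/2 \rceil + O(\sqrt{k})}$ w.h.p.; every top-$k$ item with value above $T$ arriving in the selection phase is accepted unless the budget $k - \lfloor k/2 \rfloor$ is exhausted, and concentration guarantees the budget is comfortably respected, contributing at least $\tfrac{1}{2}(v_{\lfloor k/2 \rfloor+1} + \cdots + v_k) - O(k^{-1/3}) V_k^\star$. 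Summing the two contributions, the algorithm earns $(1 - O(k^{-1/3}) - \epsilon) V_k^\star$, where $\epsilon$ absorbs the $O(1/\sqrt{q})$ and $\delta$-sized deviations and vanishes as $q \to \infty$, $\delta \to 0$.

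The main obstacle I anticipate is preventing the per-level losses from accumulating across the $\log_2 k$ levels of recursion. Naive accumulation would multiply the $O(k^{-1/3})$ error by $\log k$, and the BIP-induced slack $\delta$ also contributes at each level. Making the recursion telescope cleanly — so that the final gap stays $O(k^{-1/3})$ rather than degrading to $O(k^{-1/3} \log k)$ — requires carefully comparing the recursive benchmark (the top $\lfloor k/2 \rfloor$ items \emph{within} the sampling phase) to the global top $\lfloor k/2 \rfloor$ items, and arguing via the near-product structure of BIP that these two sets have near-equal total value with high probability. The hypothesis $\delta \leq 1/\sqrt{k}$ is precisely calibrated so that BIP-induced losses at every level vanish faster than the $k^{-1/3}$ concentration loss and can be subsumed into the additive $\epsilon$ term.
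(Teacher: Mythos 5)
Your high-level architecture matches the paper's: induction on $k$, a split into the sampling phase (handled by the recursive call) and the greedy selection phase, and exploitation of the $p \geq k$ BIP to treat the block indices of the top-$k$ items as nearly i.i.d.\ uniform. Your total-variation observation (every atom is lower-bounded by $(1-\delta)q^{-k}$, hence the distribution is within TV distance $\delta$ of the product measure) is correct and is in fact a cleaner route to concentration than the paper's, which only uses pairwise ($p\geq 2$) consequences of the BIP to get Chebyshev-type bounds. However, there is a genuine gap in the value accounting of the inductive step. You credit the recursive call with $\tfrac12(v_1+\cdots+v_{\lfloor k/2\rfloor})$ and the greedy phase with $\tfrac12(v_{\lfloor k/2\rfloor+1}+\cdots+v_k)$; these sum to $\tfrac12 V_k^\star$, not $V_k^\star$, so the decomposition as written proves only a $\tfrac12$-competitive ratio. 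The correct accounting is that the recursive call must capture essentially \emph{all} of $\valfun(\opt\cap\sam)$ --- expected value $\approx \tfrac12 V_k^\star$, i.e., half of \emph{every} $v_i$, not just of the top half --- which it can because $\lvert\opt\cap\sam\rvert\approx k/2$ matches its budget; and the greedy phase must capture essentially all of $\valfun(\opt\cap\nsam)$, the other $\approx\tfrac12 V_k^\star$. The latter hinges on the threshold sitting near $v_k$ in \emph{global} rank, not near $v_{k/2}$: the threshold is the $\lfloor k/2\rfloor$-th largest value among all items of the sampling phase, and since that phase holds about half of the top $j$ items for every $j$, this value has global rank $k\pm O(\sqrt k)$. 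Your bound $T\leq v_{\lceil k/2\rceil+O(\sqrt k)}$ is true but far too weak; it only guarantees that the top half of $\opt\cap\nsam$ clears the threshold, which is exactly where the missing factor of two enters. The paper makes the rank statement precise via a random variable $Y_1$ (the global-rank displacement of the $(k/2)$-th best sampled item) and shows $\Ex{\lvert Y_1\rvert}=O(\sqrt k)$, which also controls the event that the greedy budget is exhausted prematurely.

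Two further points you would need to supply. First, the inductive hypothesis is applied to the restriction of the input to the sampling phase, so you must argue that this restriction again satisfies a block-independence property (with $\tau(q)$ blocks), and control the contribution of the low-probability event that $\tau(q)$ is too small; the paper does this by conditioning on $\tau \geq q/4$ and a Chernoff bound. Second, your worry about error accumulation across the $\log_2 k$ recursion levels is well placed, and your proposed fix --- comparing the recursive benchmark to $\opt\cap\sam$ rather than to the global top $\lfloor k/2\rfloor$ --- is the right one, but note that the per-level error must be charged against the value of the \emph{sub-instance's} optimum (which shrinks by a factor of about $2$ per level) for the geometric sum to close at $O(k^{-1/3})$.
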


\paragraph*{Online weighted bipartite matching } Next, we consider online weighted bipartite matching, where the vertices on the offline side of a bipartite graph are given in advance and the vertices on the online side arrive online in a random order (not necessarily uniform). Whenever a vertex arrives, its adjacent edges with the corresponding weights are revealed and the online algorithm has to decide which of these edges should be included in the matching. The objective is to maximize the weight of the matching selected by online algorithm. A celebrated result of \citet{KorulaP09} shows the existence of a constant competitive online algorithm under uniform random order of arrival; nevertheless, this algorithm does not achieve any constant competitive ratio under our non-uniform assumptions for permutation distributions.

\begin{theorem}
\label{kp:thm}
For every $k$ and $\delta$, there is an instance and a probability distribution that fulfills the $(k, \delta)$-uniform-induced-ordering property such that the competitive ratio of the Korula-P\'{a}l algorithm is at least $\Omega\left(\frac{\delta^2}{(k+1)!} \frac{n}{\ln n}\right)$.
\end{theorem}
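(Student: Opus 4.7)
The plan is to exhibit, for arbitrary $k$ and $\delta$, a concrete matching instance $I$ together with a permutation distribution $\natpd$ satisfying the $(k,\delta)$-UIOP on which the Korula-P\'al (KP) algorithm has $\mathbb{E}[\textsc{KP}]/\textsc{OPT}=O\!\left(\frac{(k+1)!\,\ln n}{\delta^{2}\,n}\right)$. The construction is a ``secretary-embedded-in-matching'' instance together with a carefully tuned non-uniform distribution whose deviation from uniformity is calibrated exactly against the $(k,\delta)$-UIOP budget.

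For the instance $I$, I plan to take a single offline vertex $u^{*}$ together with $n$ online vertices $v_{1},\ldots,v_{n}$, where $v_{1}$ carries the only ``heavy'' edge $w(v_{1},u^{*})=W=\Theta(n)$ and the remaining vertices carry light edges $w(v_{i},u^{*})=1$ for $i\ge 2$. Under this instance, $\textsc{OPT}=W$; meanwhile KP collects value $\Theta(W)$ only when $v_{1}$ lands in the exploitation phase and is not preempted at vertex $u^{*}$, and collects value $O(1)$ otherwise. If $v_{1}$ lands in the sample, the threshold $\tau(u^{*})$ is frozen at $W$ and no subsequent light vertex can beat it, so the algorithm gains nothing in the exploitation phase.

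To sabotage KP, the plan is to pick $\natpd = (1-p)\,U + p\,Q$, where $U$ is the uniform distribution on $S_{n}$, $Q$ is a ``bad'' distribution that symmetrically places $v_{1}$ into the sample half (uniformly over positions $1,\ldots,n/2$) and uniformly permutes the remaining items over the complementary positions, and $p$ is chosen as $p = \Theta(\delta^{2}/(k+1)!)$. The use of a symmetric $Q$ over positions in the first half and over labels of the light vertices is essential, because UIOP forces each item to have a near-uniform marginal position distribution. Concretely, I plan to verify UIOP by two case analyses: for any $k$-tuple of light vertices, both $U$ and $Q$ induce a uniform ordering, so UIOP holds trivially; for any $k$-tuple containing $v_{1}$, the worst-case ordering is one where $v_{1}$ appears strictly after all other items, which $Q$ assigns probability $0$, so the uniform component must cover this via $(1-p)/k! \ge (1-\delta)/k!$. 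The sharper bound $p = \Theta(\delta^{2}/(k+1)!)$ will come from applying this constraint at the level of $(k+1)$-tuples and union-bounding over the $(k+1)!$ orderings that $Q$ biases away from uniform, so that the cumulative error stays within the UIOP budget $\delta$.

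Finally, the competitive ratio analysis proceeds by writing $\mathbb{E}[\textsc{KP}] = (1-p)\,\mathbb{E}_{U}[\textsc{KP}] + p\,\mathbb{E}_{Q}[\textsc{KP}]$. On the constructed instance with KP's weak tie-breaking rule, the jewel is captured under the uniform component only when $v_{1}$ arrives first among the exploitation-phase vertices before some light vertex claims $u^{*}$ at threshold $1$; a harmonic-sum computation gives $\mathbb{E}_{U}[\textsc{KP}] = O(\ln n)$. Under the bad component $\mathbb{E}_{Q}[\textsc{KP}]=O(1)$ because $v_{1}$ is locked out of the exploitation phase. Plugging these two bounds into the mixture with $p=\Theta(\delta^{2}/(k+1)!)$ and dividing by $\textsc{OPT}=\Theta(n)$ yields the claimed lower bound on the competitive ratio. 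The principal obstacle I anticipate is the precise UIOP bookkeeping: because $Q$ deterministically biases $v_{1}$ toward the sample half, the verification of $(k,\delta)$-UIOP requires tracking which $k$-ary orderings of tuples containing $v_{1}$ have zero mass under $Q$ and carefully distributing the $\delta$-budget across the $(k+1)!$ orderings of each extended tuple; this is what pins down the dependence $p=\Theta(\delta^{2}/(k+1)!)$ in the final bound.
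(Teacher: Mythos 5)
There is a fatal gap at the step $\mathbb{E}_U[\mathrm{KP}]=O(\ln n)$, and it sinks the whole mixture strategy. On your instance $\mathrm{OPT}=W=\Theta(n)$, and the Korula--P\'al algorithm is constant-competitive under \emph{uniformly} random arrival order, so necessarily $\mathbb{E}_U[\mathrm{KP}]=\Omega(n)$. You can also see this directly: when $v_1$ lands in the exploitation phase (probability about $1/2$), the greedy matching on the sample together with $v_1$ certainly matches $v_1$ to $u^*$ (its weight $W$ dominates everything), so KP collects $W$ unless some earlier exploitation-phase light vertex already claimed $u^*$; but a light vertex is tentatively matched to $u^*$ only if it beats every sampled light vertex --- a ``record'' event --- and with constant probability no record occurs before $v_1$ arrives. (With all light weights exactly $1$ this degenerates into a tie-breaking artifact, which is one reason the paper perturbs all weights to be distinct.) Moreover, the UIOP forces $p=O(\delta)$ in your mixture, since the orderings that $Q$ suppresses must be covered essentially by the uniform component alone; the finer $p=\Theta(\delta^2/(k+1)!)$ bookkeeping is a red herring. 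Hence $\mathbb{E}[\mathrm{KP}]\geq(1-O(\delta))\,\mathbb{E}_U[\mathrm{KP}]=\Omega(n)$ and the competitive ratio is $O(1)$. This objection is structural and instance-independent: hiding the sabotage in an $O(\delta)$-weight component of an otherwise uniform distribution can never defeat an algorithm that is constant-competitive under the uniform order.

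The paper's construction avoids this by supporting the distribution \emph{entirely} on sabotaging permutations. It takes a random bipartite graph with pairwise-distinct weights close to $1$, selects $\xi=\frac{2(k+1)!}{\delta^2}\ln n$ offline vertices, and for each draws a permutation uniformly among those in which all neighbors of that vertex arrive last; the final distribution is uniform over these $\xi$ permutations. The UIOP is then verified not by charging against a mixture weight but by symmetry (each item is equally likely to be a neighbor, so conditioning on ``neighbors last'' does not bias the induced order of a fixed $k$-tuple) together with a Chernoff and union bound over the support --- which is precisely where the support size $\xi$, and hence the $\frac{\delta^2}{(k+1)!}$ factor in the statement, comes from. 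The sabotage then operates for \emph{every} permutation in the support: the designated offline vertex has no neighbors in the sample, so essentially every exploitation-phase vertex is tentatively matched to it, only one such edge is feasible, and the matching value collapses to $O(\xi)$. To rescue your approach you would have to let the sabotaging component carry $1-o(1)$ of the mass, at which point you are forced into exactly the symmetry-based UIOP verification the paper performs.
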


%***********************
%%\begin{definition}
%%For a given $(\p,\blocknum,n)$ and $x\in\items$,  block index of $x$ w.r.t. $\p$ , denoted by $\bindc(x)$, is defined to be the index $i\in[1:q]$ such that  $n(i-1)/\blocknum+1\leq\p(x)\leq ni/q$\todo{What does that mean?}. To shorten the notation we use $\bind(x)$ whenever $(\p,\blocknum,n)$ is known from context.
%%\end{definition}

\section{Conclusion}

In this paper we have studied how secretary algorithms
perform when the arrival order satisfies relaxations 
of the uniform-random-order hypothesis.
We presented a pair of closely-related properties 
(the $(k,\delta)$-UIOP and the $(p,q,\delta)$-BIP)
that ensure that the standard secretary algorithm
has constant probability of correct selection, and
we derived some results on the minimum amount of entropy
and the minimum circuit complexity necessary to achieve
constant probability of correct selection in 
secretary problems with non-uniform arrival order.
%% Defining
%% an s-admissible permutation distribution to be
%% one which admits a secretary algorithm with constant
%% probability of correct selection no matter how 
%% values are assigned to the items, we have shown
%% that a pair of closely-related properties 
%% (the $(k,\delta)$-UIOP and the $(p,q,\delta)$-BIP)
%% are sufficient to imply s-admissibility, supplied
%% constructions of $(k,\delta)$-UIOP distributions 
%% arising from geometry, coding theory, etc., obtained
%% sharp bounds on the minimum entropy of an s-admissible distribution,
%% and constructed permutation distributions
%% that allow only high-complexity algorithms
%% to achieve constant success probability. Finally, looking
%% beyond the secretary problem to its generalizations, we
%% presented results indicating that the $(k,\delta)$-UIOP
%% preserves the competitive ratio of some algorithms but not
%% others.

We believe this work represents a first step toward
obtaining a deeper understanding of the amount and type
of randomness required to obtain strong performance guarantees
for online algorithms. The next step is to expand this study
beyond the setting of secretary problems. A very promising
domain for future investigation is online packing LP and its
generalization, online convex programming. Our positive 
result on the uniform matroid secrerary problem constitutes 
a first step toward obtaining a general positive result
confirming that existing algorithms such as the algorithms
of~\cite{KRTV14} and~\cite{AD15} preserve their performance
guarantees when the input ordering satisfies
$(k,\delta)$-UIOP or some other relaxation of the
uniform randomness assumption.
\newpage
\bibliographystyle{apalike}
\bibliography{secretary}

\begin{thebibliography}{}

\bibitem[Agrawal and Devanur, 2015]{AD15}
Agrawal, S. and Devanur, N. (2015).
\newblock Fast algorithms for online stochastic convex programming.
\newblock In {\em Proc.\ 25th Annual ACM-SIAM Symposium on Discrete
  Algorithms}.

\bibitem[Agrawal et~al., 2014]{AWY14}
Agrawal, S., Wang, Z., and Ye, Y. (2014).
\newblock A dynamic near-optimal algorithm for online linear programming.
\newblock {\em Operations Research}, 62:867--890.

\bibitem[Arora and Barak, 2009]{AroraBarak}
Arora, S. and Barak, B. (2009).
\newblock {\em Computational complexity: {A} modern approach}.
\newblock Cambridge University Press.

\bibitem[Babaioff et~al., 2007a]{BIKK07}
Babaioff, M., Immorlica, N., Kempe, D., and Kleinberg, R. (2007a).
\newblock A knapsack secretary problem with applications.
\newblock In {\em Proc.\ 2007 Workshop on Approximation, Randomization, and
  Combinatorial Optimization. Algorithms and Techniques (APPROX)}, pages
  16--28. Springer.

\bibitem[Babaioff et~al., 2007b]{BIK07}
Babaioff, M., Immorlica, N., and Kleinberg, R. (2007b).
\newblock Matroids, secretary problems, and online mechanisms.
\newblock In {\em Proc.\ 18th Annual ACM-SIAM Symposium on Discrete
  Algorithms}, pages 434--443.

\bibitem[Baraniuk et~al., 2008]{baraniuk2008simple}
Baraniuk, R., Davenport, M., DeVore, R., and Wakin, M. (2008).
\newblock A simple proof of the restricted isometry property for random
  matrices.
\newblock {\em Constructive Approximation}, 28(3):253--263.

\bibitem[Bateni et~al., 2013]{BHZ13}
Bateni, M., Hajiaghayi, M., and Zadimoghaddam, M. (2013).
\newblock Submodular secretary problem and extensions.
\newblock {\em ACM Transactions on Algorithms (TALG)}, 9(4):32.

\bibitem[Borchardt, 1860]{Borchardt}
Borchardt, C.~W. (1860).
\newblock Uber eine {I}nterpolationsformel f\"{u}r eine art symmetrischer
  functionen und \"{u}ber deren anwendung.
\newblock {\em Math.\ Abh.\ der Akademie der Wissenschaften zu Berlin}, pages
  1--20.

\bibitem[Candes and Tao, 2005]{CandesTao}
Candes, E.~J. and Tao, T. (2005).
\newblock Decoding by linear programming.
\newblock {\em IEEE Trans.\ Information Theory}, 51(12):4203--4215.

\bibitem[Carothers, 2009]{Carothers}
Carothers, N.~L. (2009).
\newblock A short course on approximation theory.
\newblock http://personal.bgsu.edu/$\sim$carother/Approx.html.
\newblock Manuscript.

\bibitem[Cayley, 1889]{Cayley}
Cayley, A. (1889).
\newblock A theorem on trees.
\newblock {\em Quarterly J.\ Math}, 23:376--378.

\bibitem[Devanur and Hayes, 2009]{DevanurHayes}
Devanur, N. and Hayes, T.~P. (2009).
\newblock The {A}d{W}ords problem: {O}nline keyword matching with budgeted
  bidders under random permutations.
\newblock In {\em Proc.\ 10th ACM Conference on Electronic Commerce}, pages
  71--78.

\bibitem[Devanur et~al., 2011]{DJSW11}
Devanur, N.~R., Jain, K., Sivan, B., and Wilkens, C.~A. (2011).
\newblock Near optimal online algorithms and fast approximation algorithms for
  resource allocation problems.
\newblock In {\em Proc. 12th ACM Conference on Electronic Commerce}, pages
  29--38. ACM.

\bibitem[Dimitrov and Plaxton, 2012]{DimitrovP12}
Dimitrov, N.~B. and Plaxton, C.~G. (2012).
\newblock Competitive weighted matching in transversal matroids.
\newblock {\em Algorithmica}, 62(1-2):333--348.

\bibitem[Dynkin, 1963]{Dynkin63}
Dynkin, E.~B. (1963).
\newblock The optimum choice of the instant for stopping a {M}arkov process.
\newblock {\em Sov. Math. Dokl.}, 4.

\bibitem[Feldman et~al., 2010]{FHKMS10}
Feldman, J., Henzinger, M., Korula, N., Mirrokni, V.~S., and Stein, C. (2010).
\newblock Online stochastic packing applied to display ad allocation.
\newblock In {\em Algorithms--ESA 2010}, pages 182--194. Springer.

\bibitem[Feldman et~al., 2011]{FNS11}
Feldman, M., Naor, J.~S., and Schwartz, R. (2011).
\newblock Improved competitive ratios for submodular secretary problems.
\newblock In {\em Approximation, Randomization, and Combinatorial Optimization.
  Algorithms and Techniques}, pages 218--229. Springer.

\bibitem[Feldman et~al., 2015]{FSZ15}
Feldman, M., Svensson, O., and Zenklusen, R. (2015).
\newblock A simple {O}(log log(rank))-competitive algorithm for the matroid
  secretary problem.
\newblock In {\em Proc.\ 25th Annual ACM-SIAM Symposium on Discrete
  Algorithms}.

\bibitem[G{\"{o}}bel et~al., 2014]{GHKSV14}
G{\"{o}}bel, O., Hoefer, M., Kesselheim, T., Schleiden, T., and V{\"{o}}cking,
  B. (2014).
\newblock Online independent set beyond the worst-case: Secretaries, prophets,
  and periods.
\newblock In {\em Automata, Languages, and Programming - 41st International
  Colloquium, {ICALP} 2014, Copenhagen, Denmark, July 8-11, 2014, Proceedings,
  Part {II}}, pages 508--519.

\bibitem[Hajiaghayi et~al., 2004]{HKP04}
Hajiaghayi, M.~T., Kleinberg, R., and Parkes, D.~C. (2004).
\newblock Adaptive limited-supply online auctions.
\newblock In {\em Proc. 5th ACM conference on Electronic commerce}, pages
  71--80. ACM Press.

\bibitem[Jaillet et~al., 2013]{JSZ13}
Jaillet, P., Soto, J.~A., and Zenklusen, R. (2013).
\newblock Advances on matroid secretary problems: {F}ree order model and
  laminar case.
\newblock In {\em Integer Programming and Combinatorial Optimization}, pages
  254--265. Springer.

\bibitem[Kaplan et~al., 2009]{KNR09}
Kaplan, E., Naor, M., and Reingold, O. (2009).
\newblock Derandomized constructions of $k$-wise (almost) independent
  permutations.
\newblock {\em Algorithmica}, 55(1):113--133.

\bibitem[Kesselheim et~al., 2013]{KRTV13}
Kesselheim, T., Radke, K., T{\"o}nnis, A., and V{\"o}cking, B. (2013).
\newblock An optimal online algorithm for weighted bipartite matching and
  extensions to combinatorial auctions.
\newblock In {\em Algorithms--ESA 2013}, pages 589--600. Springer.

\bibitem[Kesselheim et~al., 2014]{KRTV14}
Kesselheim, T., Radke, K., T{\"{o}}nnis, A., and V{\"{o}}cking, B. (2014).
\newblock Primal beats dual on online packing {LP}s in the random-order model.
\newblock In {\em Proc.\ ACM Symposium on Theory of Computing}, pages 303--312.

\bibitem[Kleinberg, 2005]{Kleinberg05}
Kleinberg, R.~D. (2005).
\newblock A multiple-choice secretary algorithm with applications to online
  auctions.
\newblock In {\em Proc.\ 16th ACM-SIAM Symposium on Discrete Algorithms
  (SODA)}, pages 630--631.

\bibitem[Korula and P{\'a}l, 2009]{KorulaP09}
Korula, N. and P{\'a}l, M. (2009).
\newblock Algorithms for secretary problems on graphs and hypergraphs.
\newblock In {\em ICALP (2)}, pages 508--520.

\bibitem[Lachish, 2014]{Lachish14}
Lachish, O. (2014).
\newblock {O}(log log rank) competitive-ratio for the matroid secretary problem
  (the known cardinality variant).
\newblock In {\em Proc.\ 55th Annual IEEE Symposium on Foundations of Computer
  Science (FOCS)}.

\bibitem[Lindley, 1961]{Lindley}
Lindley, D.~V. (1961).
\newblock Dynamic programming and decision theory.
\newblock {\em Applied Statistics}, 10:39--51.

\bibitem[Meyerson, 2001]{Meyerson01}
Meyerson, A. (2001).
\newblock Online facility location.
\newblock In {\em Proc.\ 42nd Annual Symposium on Foundations of Computer
  Science}, pages 426--431.

\bibitem[Meyerson et~al., 2001]{MMP01}
Meyerson, A., Munagala, K., and Plotkin, S.~A. (2001).
\newblock Designing networks incrementally.
\newblock In {\em Proc.\ 42nd Annual Symposium on Foundations of Computer
  Science}, pages 406--415.

\bibitem[Mitzenmacher and Vadhan, 2008]{MV08}
Mitzenmacher, M. and Vadhan, S. (2008).
\newblock Why simple hash functions work: exploiting the entropy in a data
  stream.
\newblock In {\em Proc.\ 19th Annual ACM-SIAM Symposium on Discrete
  Algorithms}, pages 746--755. Society for Industrial and Applied Mathematics.

\bibitem[Molinaro and Ravi, 2015]{MR15}
Molinaro, M. and Ravi, R. (2015).
\newblock The geometry of online packing linear programs.
\newblock {\em Math.\ of Operations Research}.
\newblock to appear.

\bibitem[Motwani and Raghavan, 1995]{MR95}
Motwani, R. and Raghavan, P. (1995).
\newblock {\em Randomized Algorithms}.
\newblock Cambridge University Press.

\bibitem[Roughgarden and Trevisan, 2011]{bwca-workshop}
Roughgarden, T. and Trevisan, L. (2011).
\newblock Workshop on beyond worst-case analysis.
\newblock Stanford University, September 2011.
  http://theory.stanford.edu/$\sim$tim/bwca/bwca.html.

\bibitem[Samuels, 1981]{Samuels}
Samuels, S.~M. (1981).
\newblock Minimax stopping rules when the underlying distribution is uniform.
\newblock {\em J.\ Amer.\ Statist.\ Assoc.}, 76:188--197.

\end{thebibliography}
\appendix

% Bobby's to-do list.
% * write intro and related work
% * write section on how random one-dimensional projections
%   give rise to block-independent distributions
% * write proofs for complexity section
% * review and polish all other sections.
% 
% ... and in priority order
% 1. write proofs for complexity section
% 2. write intro and related work
% 3. review and polish other sections
% 4. write korula-pal section unless Rad can do it
% 5. write section on random one-dimensional projections
% 6. write stuff about OWFs and PRFs in complexity section

\section{A secretary algorithm for $(3, \delta)$-induced-ordering property}
\label{app:3delta}
\begin{theorem} 
If a probability distribution fulfills the $(3, \delta)$-induced-ordering property, there is an algorithm for the secretary problem that selects the best item with probability $\frac{(1-\delta)^2}{6(1 + \delta)}$.
\end{theorem}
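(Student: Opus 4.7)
The plan is to analyze a simple randomized threshold algorithm in which the randomness is over items rather than positions. This shift is essential because the $(3,\delta)$-UIOP gives no handle on $\Pr{\pi(x)\leq t}$ for a fixed item $x$ and an absolute time $t$, so the naive ``observe the first $n/2$ arrivals'' variant of the classical secretary algorithm cannot be analyzed under only the UIOP. Before any arrivals, the algorithm picks an item $y\in[n]$ uniformly at random. It observes the sequence in order while maintaining the running maximum value; when $y$ arrives at time $\pi(y)$ it freezes this running maximum as its threshold $M$, and thereafter accepts the first later-arriving item whose value strictly exceeds $M$.

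Let $x_1$ denote the maximum-value item and $x_2$ the second-highest. The key claim is that the algorithm is guaranteed to select $x_1$ whenever the event $E := \{\pi(x_2)\leq\pi(y)<\pi(x_1)\}$ holds. Indeed, under $E$ we have $M\geq v_{x_2}$ since $x_2$ has arrived by time $\pi(y)$; every item $u$ arriving in the window $\pi(y)<\pi(u)<\pi(x_1)$ is distinct from both $x_1$ and $x_2$ and hence has $v_u\leq v_{x_3}<v_{x_2}\leq M$, so $u$ is not accepted; and when $x_1$ finally arrives, $v_{x_1}>M$ and the algorithm selects it. The only subtle point here is this ``no-sneak-in'' argument, which crucially relies on any intermediate item having rank at least $3$ in value.

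It therefore remains to lower bound $\Pr{E}$ using the $(3,\delta)$-UIOP. First, the $(3,\delta)$-UIOP implies the $(2,\delta)$-UIOP for any pair $(x_1,x_2)$ by summing the three orderings of $(x_1,x_2,z)$ consistent with $\pi(x_1)<\pi(x_2)$, each of which has probability at least $(1-\delta)/6$. Conditioning on the random choice of $y$: when $y=x_2$, the event $E$ reduces to $\pi(x_2)<\pi(x_1)$ and has probability at least $(1-\delta)/2$; when $y\notin\{x_1,x_2\}$, the event $E$ is one specific ordering of the triple $\{x_1,x_2,y\}$, so the $(3,\delta)$-UIOP gives $\Pr{E\mid y}\geq (1-\delta)/6$. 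Averaging over $y$ yields $\Pr{E}\geq (1-\delta)(n+1)/(6n)$, which is strictly stronger than the claimed bound; to recover the precise form $\frac{(1-\delta)^2}{6(1+\delta)}$, one uses the decomposition $\Pr{E}=\Pr{\pi(x_2)<\pi(x_1)}\cdot \Pr{E\mid \pi(x_2)<\pi(x_1)}$, combining $\Pr{\pi(x_2)<\pi(x_1)}\geq (1-\delta)/2$ with the complementary upper bound $\Pr{\pi(x_2)<\pi(x_1)}\leq (1+\delta)/2$ to get $\Pr{E\mid \pi(x_2)<\pi(x_1)}\geq \frac{1-\delta}{3(1+\delta)}$.
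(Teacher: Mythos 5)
Your proof is correct, and your algorithm is in fact the same as the paper's: freezing the threshold at the arrival time $\pi(y)$ of a uniformly random item $y$ is identical in distribution to the paper's uniformly random time threshold $\tau$, and both algorithms then accept the first subsequent record. The core idea is also the same --- success is implied by the event that the random threshold point falls weakly after $x_2$ and strictly before $x_1$, and the $(3,\delta)$-UIOP applied to triples $(x_2,z,x_1)$ controls exactly the probability that a third item lands in that window. Where you genuinely differ is the decomposition used to bound this probability. The paper first conditions on $\pi(x_2)<\pi(x_1)$ and bounds the conditional expectation $\Ex{Y_i \growingmid \pi(x_2)<\pi(x_1)} \geq \tfrac{1-\delta}{3(1+\delta)}$, which injects a lossy factor of $\tfrac{1-\delta}{1+\delta}$ before multiplying back by $\Pr{\pi(x_2)<\pi(x_1)} \geq \tfrac{1-\delta}{2}$. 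You instead average over the choice of $y$ directly, applying the UIOP once per triple and once (via the derived $(2,\delta)$-bound) for $y=x_2$, obtaining $\Pr{E} \geq \tfrac{(1-\delta)(n+1)}{6n} \geq \tfrac{1-\delta}{6}$, which strictly dominates the stated constant $\tfrac{(1-\delta)^2}{6(1+\delta)}$. This makes your closing paragraph, which re-derives the weaker constant by reintroducing the paper's conditioning, redundant: it suffices to note that $\tfrac{1-\delta}{6} \geq \tfrac{(1-\delta)^2}{6(1+\delta)}$ because $\tfrac{1-\delta}{1+\delta} \leq 1$. In short, same algorithm and same key lemma, but your bookkeeping is cleaner and yields a slightly sharper guarantee.
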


\begin{proof}
Consider the following algorithm: First we draw a threshold $\tau$ uniformly at random. Then we observe all items until round $\tau$. After round $\tau$, we accept the first item that is better than all items seen so far.

To analyze this algorithms let $x_1, x_2, \ldots, x_n$ be the items in order of decreasing value. To select $x_1$ it suffices that $x_2$ comes until round $\tau$ and $x_1$ comes after round $\tau$. For $i \geq 3$, let $Y_i$ be a 0/1 random variable indicating if $\pi(x_2) < \pi(x_i) < \pi(x_1)$.

Conditioned on $\sum_{i=3}^n Y_i = a$ and $\pi(x_2) < \pi(x_1)$, the probability that $x_2$ comes until round $\tau$ and $x_1$ comes after round $\tau$ is exactly $\frac{a+1}{n}$ because there are $a$ items coming between $x_2$ and $x_1$, giving $a + 1$ positive outcomes for $\tau$.

We have $\Ex{Y_i} \geq (1 - \delta) \frac{1}{3!} = \frac{1-\delta}{6}$. As $Y_i = 1$ implies $\pi(x_2) < \pi(x_1)$, we get $\Ex{Y_i \growingmid \pi(x_2) < \pi(x_1)} \geq \frac{1-\delta}{6 \Pr{\pi(x_2) < \pi(x_1)}} = \frac{1-\delta}{6 ( 1 - \Pr{\pi(x_2) > \pi(x_1)})} \geq \frac{1-\delta}{6 ( 1 - \frac{1-\delta}{2})} = \frac{1-\delta}{3( 1 + \delta)}$.

Overall, we get
\begin{align*}
\Pr{\text{select $x_1$} \growingmid \pi(x_2) < \pi(x_1)} & \geq \sum_{a=0}^{n-3} \Pr{\sum_{i=3}^n Y_i = a \growingmid \pi(x_2) < \pi(x_1)} \frac{a+1}{n} \\
& = \frac{1}{n} \left( 1 + \sum_{a=0}^{n-3} a \Pr{\sum_{i=3}^n Y_i = a \growingmid \pi(x_2) < \pi(x_1)} \right) \\
& = \frac{1}{n} \left( 1 + \Ex{\sum_{i=3}^n Y_i \growingmid \pi(x_2) < \pi(x_1)} \right) \\
& \geq \frac{1}{n} \left( 1 + \Ex{\sum_{i=3}^n Y_i \growingmid \pi(x_2) < \pi(x_1)} \right) \\
& \geq \frac{1}{n} + \frac{n-3}{n} \frac{1-\delta}{3( 1 + \delta)} \\
& \geq \frac{1-\delta}{3( 1 + \delta)} \enspace.
\end{align*}
Multiplying with $\Pr{\pi(x_2) < \pi(x_1)} \geq \frac{1-\delta}{2}$, we get
\[
\Pr{\text{select $x_1$}} \geq \frac{(1-\delta)^2}{6(1 + \delta)} \enspace.
\]
\end{proof}

\section{Deferred proofs}

\subsection{Proofs deferred from \secref{sec:non-uniform}}
\label{app:ipfull}
In this section we restate some of the results
from \secref{sec:non-uniform} and provide complete
proofs.

\subsubsection{Full Proof of Theorem~\ref{theorem:secretaryanalysis}}
The $(\indbound,\numblocks,\delta)$-block-independence property only makes statements about $\indbound$-tuples. We will need the bound also for smaller tuples. Indeed, using a simple counting argument we can show that this is already implicit in the definition.

\begin{lemma}
\label{lemma:block-ind-shorter-tuples}
If a distribution over permutations is $(\indbound,\numblocks,\delta)$-block-independent, then it is also $(\indbound',\numblocks,\delta)$-block-independent for any $\indbound' < \indbound$.
\end{lemma}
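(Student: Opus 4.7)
The plan is to derive the $(\indbound',\numblocks,\delta)$-BIP from the $(\indbound,\numblocks,\delta)$-BIP by a straightforward marginalization argument. Fix any $\indbound' < \indbound$ distinct items $x_1,\ldots,x_{\indbound'} \in [n]$ and any blocks $b_1,\ldots,b_{\indbound'} \in [\numblocks]$. First, I would augment the tuple by choosing $\indbound - \indbound'$ additional distinct items $x_{\indbound'+1},\ldots,x_\indbound \in [n] \setminus \{x_1,\ldots,x_{\indbound'}\}$; such items exist as long as $n \geq \indbound$, which is implicit whenever the $(\indbound,\numblocks,\delta)$-BIP is non-vacuous.

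Next, I would observe the basic probabilistic identity
\[
\Pr{\bigwedge_{j=1}^{\indbound'} \pi(x_j) \in B_{b_j}} \;=\; \sum_{b_{\indbound'+1},\ldots,b_{\indbound} \in [\numblocks]} \Pr{\bigwedge_{j=1}^{\indbound} \pi(x_j) \in B_{b_j}},
\]
which simply expresses the fact that the events $\{\pi(x_j) \in B_{b_j}\}_{j > \indbound'}$ partition the sample space when the $b_j$ for $j > \indbound'$ are allowed to range over $[\numblocks]$. Applying the $(\indbound,\numblocks,\delta)$-BIP to each summand on the right lower-bounds each term by $(1-\delta)(1/\numblocks)^\indbound$, and there are exactly $\numblocks^{\indbound-\indbound'}$ summands, so the right-hand side is at least $(1-\delta)(1/\numblocks)^{\indbound'}$, which is precisely what the $(\indbound',\numblocks,\delta)$-BIP requires.

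I do not expect any serious obstacle: the argument is a one-line counting/marginalization calculation, and the key point is simply that the BIP's lower bound on a $\indbound$-fold intersection, summed over all extensions, telescopes to give the same $(1-\delta)$ factor for the $\indbound'$-fold intersection. The only mild subtlety is making sure that enough distinct items exist in $[n]$ to extend the tuple from length $\indbound'$ to length $\indbound$, which is ensured by the standing hypothesis $\indbound \leq n$.
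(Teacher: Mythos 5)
Your proposal is correct and is essentially identical to the paper's own proof: both extend the tuple with arbitrary distinct items, decompose the $\indbound'$-fold event as a disjoint union over the $\numblocks^{\indbound-\indbound'}$ choices of blocks for the added items, and apply the $(\indbound,\numblocks,\delta)$-BIP bound to each summand. No gaps.
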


\begin{proof}
Given $x_1, \ldots, x_{\indbound'} \in \items$ and $b_1 \ldots, b_{\indbound'} \in [\numblocks]$, fill up the first tuple with arbitrary distinct entries $x_{\indbound' + 1}, \ldots, x_{p} \in \items$. The event $\bigwedge_{j \in [\indbound']} \pi^B(x_i) = b_i$ can now be expressed as the union of all events $\bigwedge_{j \in [\indbound]} \pi^B(x_i) = b_i$ over all tuples $(b_{\indbound' + 1} \ldots, b_{\indbound}) \in [\numblocks]^{\indbound - \indbound'}$. Note that these events are pairwise disjoint.  Therefore, the probability of their union is the sum of their probabilities, i.e.,
\begin{align*}
\Pr{\bigwedge_{j \in [\indbound']} \pi^B(a_i) = b_i} & = \Pr{\bigvee_{(b_{\indbound' + 1}, \ldots, b_\indbound) \in [\numblocks]^{\indbound - \indbound'}} \bigwedge_{j \in [\indbound]} \pi^B(x_i) = b_i} \\
& = \sum_{(b_{\indbound' + 1}, \ldots, b_\indbound) \in [\numblocks]^{\indbound - \indbound'}} \Pr{\bigwedge_{j \in [\indbound]} \pi^B(x_i) = b_i} \enspace.
\end{align*}
Using $(\indbound,\numblocks,\delta)$-block-independence and $\lvert[\numblocks]^{\indbound - \indbound'}\rvert = \numblocks^{\indbound - \indbound'}$, we get
\begin{align*}
\Pr{\bigwedge_{j \in [\indbound']} \pi^B(x_i) = b_i} & \geq \sum_{(b_{\indbound' + 1}, \ldots, b_p) \in [\numblocks]^{\indbound - \indbound'}} (1 - \delta) \left(\frac{1}{\numblocks}\right)^\indbound \\
& = \numblocks^{\indbound - \indbound'} (1 - \delta) \left(\frac{1}{\numblocks}\right)^\indbound \\
& = (1 - \delta) \left(\frac{1}{\numblocks}\right)^{\indbound'} \enspace.
\end{align*}
\end{proof}

\begin{proof}[Proof of Theorem~\ref{theorem:secretaryanalysis}]
Let $T = \lfloor \frac{\numblocks}{\e} \rfloor$ denote the index of the block in which the threshold is located. Furthermore, let $x_j \in \items$ be the $j$th best item. We condition on the event that $x_1$ comes in block with index $i$. To ensure that our algorithm picks this item, it suffices that $x_2$ comes in blocks $1, \ldots, T-1$. Alternatively, we also pick $x_1$ if the $x_2$ comes in blocks $i+1, \ldots, \numblocks$ and $x_3$ comes in blocks $1, \ldots, T-1$. Continuing this argument, we get
\[
\Pr{\text{correct selection}} \geq \sum_{i=T+1}^\numblocks \sum_{j=2}^\indbound \Pr{ \pi^B(x_1) = i, \pi^B(x_2), \ldots, \pi^B(x_{j - 1}) > i, \pi^B(x_j) < T } \enspace.
\]
We can now use Lemma~\ref{lemma:block-ind-shorter-tuples} and apply $(j,\numblocks,\delta)$-block-independence for $j \leq \indbound$. This gives us
\[
\Pr{\text{correct selection}} \geq \sum_{i=T+1}^\numblocks \sum_{j=2}^\indbound (1 - \delta) \frac{1}{\numblocks} \left( \frac{\numblocks - i}{\numblocks} \right)^{j-2} \frac{T - 1}{\numblocks} \enspace.
\]
We now reorder the sums and use the formula for finite geometric series. This gives us
\begin{align*}
\Pr{\text{correct selection}} & \geq (1 - \delta) \frac{T - 1}{\numblocks} \sum_{i=T+1}^\numblocks \frac{1}{\numblocks} \left( \sum_{j=2}^\indbound \left( \frac{\numblocks - i}{\numblocks} \right)^{j-2} \right) \\
& = (1 - \delta) \frac{T - 1}{\numblocks} \sum_{i=T+1}^\numblocks \frac{1}{\numblocks} \frac{ 1 - \left( \frac{\numblocks - i}{\numblocks} \right)^{\indbound-1} }{\frac{i}{\numblocks}} \\
& = (1 - \delta) \frac{T - 1}{\numblocks} \sum_{i=T+1}^\numblocks \frac{1}{i} \left(1 - \left( \frac{\numblocks - i}{\numblocks} \right)^{\indbound-1} \right) \\
& \geq (1 - \delta) \frac{T - 1}{\numblocks} \left(1 - \left( \frac{\numblocks - T}{\numblocks} \right)^{\indbound-1} \right) \sum_{i=T+1}^\numblocks \frac{1}{i} \enspace.
\end{align*}
We now apply the following bounds 
\[
\frac{T - 1}{\numblocks} \geq \frac{1}{\e} - \frac{2}{\numblocks} \enspace, \qquad \frac{\numblocks - T}{\numblocks} \leq 1 - \frac{1}{\e} \enspace, \qquad \text{ and }
\]
\[
\sum_{i=T+1}^\numblocks \frac{1}{i} \geq \int_{T+1}^{\numblocks+1} \frac{1}{x} dx = \ln\left( \frac{\numblocks + 1}{T + 1} \right) \geq \ln\left( \frac{\numblocks + 1}{\frac{\numblocks}{\e} + 1} \right) = 1 - \ln\left( \frac{\numblocks + \e}{\numblocks + 1} \right) \geq 1 - \left( \frac{\numblocks + \e}{\numblocks + 1} - 1\right) = 1 - \frac{\e - 1}{\numblocks + 1} \enspace.
\]
In combination, they imply
\begin{align*}
\Pr{\text{correct selection}} & \geq \left( \frac{1}{\e} - \frac{2}{\numblocks} \right) (1 - \delta) \left(1 - \left( 1 - \frac{1}{\e} \right)^{\indbound-1} \right)\left(1 - \frac{\e - 1}{\numblocks + 1}\right) \\
& \geq \left( \frac{1}{\e} - \frac{\e + 1}{\numblocks} \right) (1 - \delta) \left(1 - \left( 1 - \frac{1}{\e} \right)^{\indbound-1} \right)  \\
& \geq \frac{1}{\e} - \frac{\e + 1}{\numblocks} - \delta - \left( 1 - \frac{1}{\e} \right)^{\indbound-1} \enspace.
\end{align*}
\end{proof}

\subsubsection{Relation between the two properties}
\label{app:rbtp}

\begin{proof}[Proof of Theorem~\ref{thm:uiop-implies-bip}]
Note that it is safe to assume $\indbound \leq \numblocks$ as the statement is trivially fulfilled otherwise. Consider $\indbound$ distinct items $x_1, \ldots, x_\indbound \in \items$. To have $\pi(x_1) < \pi(x_2) < \ldots < \pi(x_\indbound)$, it suffices that these elements are mapped to different blocks and with an increasing sequence of indices. There are $\binom{\numblocks}{\indbound}$ such sequences. So, overall the probability is at least
\[
\binom{\numblocks}{\indbound} (1 - \delta) \left( \frac{1}{\numblocks} \right)^\indbound \geq \frac{(\numblocks-\indbound)^\indbound}{\indbound!} (1 - \delta) \left( \frac{1}{\numblocks} \right)^\indbound \geq \left(1 - \frac{\indbound}{\numblocks} \right)^\indbound (1 - \delta) \frac{1}{\indbound!} \geq \left(1 - \delta \frac{\indbound^2}{\numblocks} \right) \frac{1}{q!} \enspace.
\]
\end{proof}

\begin{proof}[Proof of Lemma~\ref{approxtheory:lemma1}]
We first define random variables $I_{i,j}\triangleq \mathbf{I}_{\pi(i)\leq \pi(j)}$ and $\tilde{I}_{i,j}\triangleq \mathbf{I}_{\phi(i)\leq \phi(j)}$ for all $i,j\in [n]$. Note that for all $i\in[p]$, $X_i=\pi(x_i)/n=\frac{\sum_{j=1}^n I_{i,j}}{n}$. This implies that
\begin{align}
\Ex{\prod_{i=1}^p X_i^{k_i}}=\Ex{\prod_{i=1}^p {\left(\frac{\sum_{j=1}^n I_{i,j}}{n}\right)}^{k_i}}=\frac{\Ex{\prod_{i=1}^p {\left(\sum_{j=1}^n I_{i,j}\right)}^{k_i}}}{n^{(\sum_{i=1}^p{k_i})}}
 \end{align}
 By expanding the numerator due to linearity of expectation, we will have a sum of expectation of algebraic terms in the numerator, where each algebraic term multiplication of at most $\frac{k}{2p}\times p=k/2$ indicators $I_{i,j}$. There are at most $2\times k/2=k$ particular items involved in these indicator functions. Now, lets look at one of the terms, e.g. $\Ex{\prod_{l=1}^{k/2} I_{i_l,j_l}}$ in which $k$ items $\{x_{s_1},\ldots,x_{s_k}\}$ are involved.  The product $\prod_{l=1}^k I_{i_l,j_l}$ forces the induced ordering of elements $\{x_{s_1},\ldots,x_{s_k}\}$ be in a particular subset $S\subseteq S_k$. 
 
 Hence, $\Ex{\prod_{l=1}^{k/2} I_{i_l,j_l}}=\Pr{\textrm{induced ordering by $\pi$ over $\{x_{s_1},\ldots,x_{s_k}\}$ will be in $S$}}$. Now as $\pi$ satisfies the $(k,\delta)$-uniform-induced-ordering property, we have
 \begin{align}
 \Ex{\prod_{l=1}^{k/2} I_{i_l,j_l}}&=\Pr{\textrm{induced ordering by $\pi$ over $\{x_{s_1},\ldots,x_{s_k}\}$ will be in $S$}}\nonumber\\
 &\geq (1-\delta) \Pr{\textrm{induced ordering by $\phi$ over $\{x_{s_1},\ldots,x_{s_k}\}$ will be in $S$}}\nonumber\\
 &\label{approxtheory:eq1}= (1-\delta) \Ex{\prod_{l=1}^{k/2} \tilde{I}_{i_l,j_l}}
 \end{align}
 From (\ref{approxtheory:eq1}) one can conclude that 
 \begin{align}
\Ex{\prod_{i=1}^p X_i^{k_i}}\geq (1-\delta)\Ex{\prod_{i=1}^p {\left(\frac{\sum_{j=1}^n \tilde{I}_{i,j}}{n}\right)}^{k_i}}=(1-\delta)\Ex{\prod_{i=1}^p \left(\frac{n\phi(i)}{n}\right)^{k_i}}=(1-\delta)\Ex{\prod_{i=1}^p {\phi(i)}^{k_i}}
 \end{align}
 which completes the proof.
\end{proof}
\begin{proof}[Proof of Lemma\ref{approxtheory:lemma2}]
We define continuous functions $f_i\colon [0, 1] \to \RR$ for $i\in [p]$ by
\[
f_i(x) = \begin{cases}
0 & \text{ for $x < a_i$ or $x > b_i$} \\
\frac{x - a_i}{\gamma} & \text{ for $a_i \leq x \leq a_i + \gamma$} \\
- \frac{x - b_i}{\gamma} & \text{ for $b_i - \gamma \leq x \leq b_i$} \\
1 & \text{ for $a_i + \gamma \leq x \leq b_i - \gamma$}
\end{cases}
%\]
%\[
%f(x) = \begin{cases}
% & \text{ for $x < a$ or $x > b$} \\
%\ell(x) & \text{ for $a \leq x \leq a + \gamma$} \\
%r(x) & \text{ for $b - \gamma \leq x \leq b$} \\
%1 & \text{ for $a + \gamma \leq x \leq b - \gamma$}
%\end{cases}
\]
where $\lvert b_i-a_i \rvert \geq 2\gamma$. Note that all of these functions are continuous and satisfy condition of Theorem~\ref{lemma:approximationtheory} for $\omega_{f_i}(x)=\frac{x}{\gamma}$.
%where $\ell$ is a continuously differentiable function with $\ell(a) = 0$, $\ell(a + \gamma) = 1$, $\ell(x) \in [0, 1]$ for $x \in (a, a+\gamma)$, and $\ell'(a) = \ell'(a + \gamma) = 0$. For example $\ell(x) = \frac{1}{2} \sin(\frac{x - a - \frac{1}{2} \gamma}{\gamma} 2 \pi) + \frac{1}{2}$ fulfills these conditions.

%Analogously, $r$ is a continuously differentiable function with $r(b - \gamma) = 1$, $r(b) = 0$, $r(x) \in [0, 1]$ for $x \in (b-\gamma, b)$, and $r'(b - \gamma) = r'(b) = 0$. These conditions are, for example, fulfilled by $r(x) = \frac{1}{2} \sin(- \frac{x - b - \frac{1}{2} \gamma}{\gamma} 2 \pi) +  \frac{1}{2}$.

Observe that $f_i$ is point-wise smaller than the indicator function $\mathbf{1}_{[a_i, b_i]}$ that has value $1$ between $a_i$ and $b_i$ and $0$ otherwise. Therefore, we have $\Pr{\bigwedge_{i=1}^{p}\left(X_i\in [a_i,b_i]\right)} = \Ex{\prod_{i=1}^p\mathbf{1}_{[a_i, b_i]}(X_i)} \geq \Ex{\prod_{i=1}^pf_i(X_i)}$.

By Theorem~\ref{lemma:approximationtheory}, for every $i$ there is a polynomial function $g_i\colon [0, 1] \to \RR$ of degree $d$ such that $\lVert f_i- g_i \rVert_{\infty} \leq \frac{3}{2}\omega_{f_i}(\frac{1}{\sqrt d})\leq \frac{3}{2\gamma\sqrt d}$. We now have $g_i(\phi(i)) \geq f_i(\phi(i)) -\frac{3}{2\gamma\sqrt d} \geq \mathbf{1}_{[a_i+\gamma, b_i-\gamma]}(\phi(i)) - \frac{3}{2\gamma\sqrt d} $ and therefore 
\begin{align}
\Ex{\prod_{i=1}^p g_i(\phi(i))} &\geq \Ex{\prod_{i=1}^p\left(\mathbf{1}_{[a_i+\gamma, b_i-\gamma]}(\phi(i)) - \frac{3}{2\gamma\sqrt d}\right) } \geq \prod_{i=1}^{p}\Pr{\phi(i)\in [a_i+\gamma,b_i-\gamma]}-\frac{3p}{2\gamma\sqrt d}\nonumber\\
&= \prod_{i=1}^p (b_i-a_i-2\gamma)-\frac{3p}{2\gamma\sqrt d}\geq \prod_{i=1}^p (b_i-a_i)-(2p\gamma+\frac{3p}{2\gamma\sqrt d})\overset{(1)}{\geq}\prod_{i=1}^p (b_i-a_i)-\frac{4p}{d^{1\!/\!4}}
\end{align}
where to get inequality (1) we set $\gamma=\frac{1}{2}d^{-\frac{1}{4}}$. Furthermore, as $g_i$ is a polynomial function of degree at most $d$ and $\Ex{\prod_{i=1}^p X_i^{k_i}} \geq\Ex{\prod_{i=1}^p {\phi(i)}^{k_i}}(1-\delta)$ for all $k_i\leq d$, we get by linearity of expectation $\Ex{\prod_{i=1}^{p}g(X_i)} \geq (1-\delta)\Ex{\prod_{i=1}^p g_i(\phi(i))}$. Now we use $g_i(x) \leq f_i(x) +\frac{3}{2\gamma\sqrt d}=\frac{3}{d^{1\!/\!4}}$, giving us
\begin{align}
\Ex{\prod_{i=1}^pf_i(X_i)}& \geq \Ex{\prod_{i=1}^p (g_i(X_i)-\frac{3}{d^{1\!/\!4}})}\geq \Ex{\prod_{i=1}^p g_i(X_i)}-\frac{3p}{d^{1\!/\!4}}\geq (1-\delta)\Ex{\prod_{i=1}^p g_i(\phi(i))}-\frac{3p}{d^{1\!/\!4}}\nonumber\\
&\geq (1-\delta)\left(\prod_{i=1}^p [b_i-a_i]-\frac{4p}{d^{1\!/\!4}}\right)-\frac{3p}{d^{1\!/\!4}}\geq  (1-\delta)\left(\prod_{i=1}^p [b_i-a_i]\right)-\frac{7p}{d^{1\!/\!4}}
\end{align}

Overall, we get $\Pr{\bigwedge_{i=1}^{p}\left(X_i\in [a_i,b_i]\right)} \geq  (1-\delta)\left(\prod_{i=1}^p (b_i-a_i)\right)-\frac{7p}{d^{1\!/\!4}}$, as desired.
\end{proof}

\subsubsection{Full proof of \thmref{thm:rip}}
\label{app:rip}

\begin{proof}
For any $k$-tuple of indices $(i_1,\ldots,i_k)$
we must show that each
of the $k!$ possible orderings of $(w \cdot i_1),\ldots,(w \cdot i_k)$
has probability at least $\frac{1-\delta}{k!}$. By 
symmetry it suffices to show that the probability of the
event $\{ w \cdot x_1 < w \cdot x_2 < \cdots < w \cdot x_k \}$
is at least $\frac{1-\delta}{k!}$. This event is unchanged by
rescaling $w$, so we are free to substitute whatever spherically-symmetric
distribution we wish. Henceforth assume $w$ is sampled from the
multivariate normal distribution $\mathcal{N}(0,1)$, whose
density function is $(2 \pi)^{-d/2} \exp \big( \frac12 \|w\|^2 \big)$.

Let $X_k$ denote the 
matrix whose $k$ columns are the vectors $x_1,\ldots,x_k$
and let $A = X_k^\trans$ denote its transpose. 
Scaling $x_1,\ldots,x_k$ by a common scalar, if 
necessary, we are free to assume that $\det(X_k^\trans X_k)=1$.
The RIP implies that the ratio of the largest and smallest
right singular values of $X_k$ is at most $\frac{1+\delta_k}{1-\delta_k}$,
and since their product is 1 this means that the
smallest singular value is at least $\frac{1-\delta_k}{1+\delta_k}$.

Now let
$\cone = \{ z \in \RR^k \mid z_1 < z_2 < \cdots < z_k \}$. The
 event $\{ w \cdot x_1 < w \cdot x_2 < \cdots < w \cdot x_k \}$
can be expressed more succinctly as $\{ Aw \in \cone \}$, and its
probability is
\[
\pr{ Aw \in \cone } = \int_{w \in A^{-1}(\cone)} (2 \pi)^{-d/2} 
\exp \big( -\tfrac12 \|w\|^2 \big) \, dw.
\]
The matrix $A$ is not square, hence not invertible; the notation
$A^{-1}(\cone)$ merely means the inverse-image of $\cone$ under
the linear transformation $\RR^d \to \RR^k$ represented by $A$.
The Moore-Penrose pseudoinverse of $A$ is the matrix
$X_k (X_k^{\trans} X_k)^{-1}$, which we denote henceforth
by $A^+$.
We can write any $w \in A^{-1}(\cone)$ uniquely as 
$z + A^+ y$, where $y \in \cone$, $z \in \ker(A)$, and $z$ is
orthogonal to $A^+ y$. By our scaling assumption, the product
of the singular values of $A^+$ equals 1, which justifies
the second line in the following calculation
\begin{align*}
\pr{ Aw \in \cone } &= \int_{w \in A^{-1}(\cone)} (2 \pi)^{-d/2} 
\exp \big( -\tfrac12 \|w\|^2 \big) \, dw \\
 & = \int_{y \in \cone} \int_{z \in \ker(A) }
(2 \pi)^{-d/2}  \exp \big( - \tfrac12 \|z\|^2 - \tfrac12 \| A^+ y \|^2 \big)
 \, dz \, dy \\
  & = 
\left( \int_{y \in \cone} (2 \pi)^{-k/2} \exp \big( -\tfrac12 \|A^+ y \|^2 \big)
\, dy \right) 
\;
\left( \int_{z \in \ker(A)} (2 \pi)^{-(d-k)/2} \exp \big( -\tfrac12 \|z\|^2 \big)
\, dz \right) \\
 & = 
\left( \int_{y \in \cone} (2 \pi)^{-k/2} \exp \big( -\tfrac12 \|A^+ y \|^2 \big)
\, dy \right).
\end{align*}
We can rewrite the right side as an integral in spherical coordinates.
Let $d \omega$ denote the volume element of the unit sphere $S^{k-1}
\subset \RR^k$ and
let $\conesph = \cone \cap S^{k-1}$. Then writing $y = r u$, where
$r \geq 0$ and $u$ is a unit vector, we have
\begin{align*}
\pr{ Aw \in \cone } & = \int_{u \in \conesph} \int_{r=0}^{\infty}
 (2 \pi)^{-k/2} \exp \big( -\tfrac12 r^2 \|A^+ u \|^2 \big) r^{k-1}
\, dr \, d\omega(u) \\
 & =
\int_{u \in \conesph} (2 \pi)^{-k/2} \|A^+ u\|^{-k}
\int_{s=0}^{\infty} \exp \big( -\tfrac12 s^2 \big) s^{k-1} \, ds  \, d\omega(u).
\end{align*}
The singular values of $A^+$ are the multiplicative inverses of the
singular values of $X_k$, hence the largest singular value of $A^+$
is at most $\frac{1+\delta_k}{1-\delta_k}$. In other words,
$\|A^+ u\| \leq \frac{1+\delta_k}{1-\delta_k}$ for any unit vector $u$.
Plugging this bound into the integral above, we find that
\[
\pr{ Aw \in \cone } \geq 
\left( \tfrac{1-\delta_k}{1+\delta_k} \right)^k 
\int_{u \in \conesph} (2 \pi)^{-k/2}
\int_{s=0}^{\infty} \exp \big( -\tfrac12 s^2 \big) s^{k-1} \, ds  \, d\omega(u)
 = 
\frac{1}{k!} \left( \tfrac{1-\delta_k}{1+\delta_k} \right)^k ,
\]
where the last equation is derived by observing that the integral is
equal to the Gaussian measure of $\cone$. Finally, by our choice of
$\delta_k$, we have $\big( \frac{1-\delta_k}{1+\delta_k} \big)^k > 1-\delta$,
which concludes the proof.
\end{proof}

\subsubsection{Full proof of Theorem~\ref{thm:rlec}}
\label{app:small-support}

%%  of cardinality $\xi \geq \frac{2 (k+1)!}{\delta^2} \ln n$, such that one gets a distribution that fulfills the $(k,\delta)$-uniform-induced-ordering property by choosing one permutation from $S$ uniformly at random.
%% \end{theorem}

\begin{proof}[]
We show this claim using the probabilistic method. Permutation $\pi_i\colon \items \to [n]$ is drawn uniformly at random from the set of all permutations with replacement. We claim that the set $S = \{ \pi_1, \ldots, \pi_\xi \}$ fulfills the stated condition with probability at least $1 - \frac{1}{n}$.

Fix $k$ distinct items $x_1, \ldots, x_k \in \items$. Let $Y_i = 1$ if $\pi_i(x_1) < \pi_i(x_2) < \ldots < \pi_i(x_k)$. As $\pi_i$ is drawn uniformly from the set of all permutations, we have $\Pr{Y_i = 1} = \frac{1}{k!}$. That is, we have $\Ex{\sum_{i = 1}^\xi Y_i} = \frac{\xi}{k!}$. As the random variables $Y_i$ are independent, we can apply a Chernoff bound. This gives us
\[
\Pr{\sum_{i = 1}^\xi Y_i \leq (1 - \delta) \frac{\xi}{k!}} \leq \exp\left( - \frac{\delta^2}{2} \frac{\xi}{k!} \right) = n^{k+1} \enspace.
\]
Note that if $\sum_{i = 1}^\xi Y_i \leq (1 - \delta) \frac{\xi}{k!}$ then the respective sequence $x_1, \ldots, x_k \in \items$ has probability at least $(1 - \delta) \frac{1}{k!}$ when drawing one permutation at random from $S$.

There are fewer than $n^k$ possible sequences. Therefore, applying a union bound, with probability at least $1 - \frac{1}{n}$ the bound is fulfilled for all sequences simultaneously and therefore $S$ fulfills the stated condition.
\end{proof}

\subsection{Proofs deferred from \secref{sec:entropy}}
\label{section:entropy-lower-bound-full-proof}
In this section we provide complete
proofs of some of the results in \secref{sec:entropy}.

\begin{proof}[Proof of Lemma~\ref{lemma:lowerboundsupportsize}]
Let $\Pi$, $\lvert \Pi \rvert \leq k$, be the support of the distribution $\pi$ is drawn from. Lemma~\ref{low1:lemma-short} shows that there is a sequence $(x_1, \ldots, x_s)$ of length $s = \frac{\log n}{k + 1}$ that is semitone with respect to any permutation in $\pi$.

It only remains to apply Yao's principle: Instead of considering the performance of a random $\pi$ against a deterministic adversary, we consider the performance of a fixed $\pi$ against a randomized adversary. Lemma~\ref{low3:lemma} shows that there is a distribution over instances such that no $\pi \in \Pi$ has success probability better than $\frac{1}{s} = \frac{\log n}{k + 1}$. 
\end{proof}

\begin{proof}[Proof of Lemma~\ref{lemma:entropyvssupportsize}]
Set $\alpha = \frac{H}{\log(k - 3)}$ and $\beta = \frac{\alpha}{k - 3}$. Note that for $\alpha \geq \frac{1}{8}$, the statement becomes trivial. Therefore, we can assume without loss of generality that $\alpha < \frac{1}{8}$. This implies $\log(\alpha) < 0$. Therefore, we get
\[
\frac{H}{- \log \beta} = \frac{\alpha \log(k - 3)}{\log(k - 3) - \log(\alpha)} \leq \alpha \enspace.
\]

Let $a_1, \ldots, a_k$ be the elements of $\mathcal{D}$ such that $p_{a_i} \geq \beta$ for all $i$ and $p_{a_1} \geq p_{a_2} \geq \ldots \geq p_{a_k}$. Furthermore, partition $\mathcal{D} \setminus \{ a_1, \ldots, a_k \}$ into $S_1, \ldots, S_\ell$ such that $p_{S_i} \in [\beta, 2 \beta)$ for $i < \ell$, $p_{S_\ell} < 2 \beta$

Observe that $p_{a_3} \leq \frac{1}{\e}$ because probabilities sum up to at most $1$. Therefore, for $i \geq 3$, we have $-p_{a_i} \log(p_{a_i}) \geq - \beta \log \beta$ by monotonicity. Furthermore, for all $j < \ell$, we have $-p_{S_j} \log(p_{S_j}) \geq - \beta \log \beta$. In combination, this gives us
\[
H \geq \sum_{i=3}^k -p_{a_i} \log(p_{a_i}) + \sum_{j=1}^{\ell - 1} -p_{S_i} \log(p_{S_i}) \geq (k + \ell - 3)(- \beta \log \beta) \enspace.
\]

For $k$ and $\ell$, this implies
\[
k \leq \frac{H}{- \beta \log \beta} + 3 \leq \frac{\alpha}{\beta} + 3 \leq k \quad \text{ and } \quad \ell \leq \frac{H}{- \beta \log \beta} + 3 \leq \frac{\alpha}{\beta} + 3 \enspace.
\]
In conclusion, we have
\[
\sum_{j = 1}^\ell p_{S_j} \leq 2 \beta \ell \leq 2 \alpha + 6 \beta \leq 8 \alpha \enspace.
\]
\end{proof}

\subsection{Proofs deferred from \secref{sec:complexity}}
\label{app:complexity}

In this section we restate some of the results
from \secref{sec:complexity} and provide complete
proofs.

\begin{lemma} \label{lem:hardf}
If $\hardf : \{0,1\}^n \to [k]$ is a random function,
then with high probability there is 
no circuit of size $s(n) = 2^{n}/(8kn)$ that outputs the
function value correctly on more than $\frac{2}{k}$
fraction of inputs.
\end{lemma}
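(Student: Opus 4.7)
The plan is to apply the standard ``random function is hard on average'' template, which combines a per-circuit Chernoff bound with a union bound over the (not too large) family of small circuits. The only subtle point is constant-calibration, which is where the specific factor of $8$ in the denominator of $s(n) = 2^n/(8kn)$ comes in.

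First, I would fix any single circuit $C \colon \{0,1\}^n \to [k]$ and bound the probability (over $\hardf$) that $C$ agrees with $\hardf$ on more than a $2/k$ fraction of inputs. Because $\hardf$ is drawn uniformly, the values $\hardf(x)$ are mutually independent and uniform on $[k]$, so the indicators $Y_x = \mathbf{1}[C(x) = \hardf(x)]$, $x \in \{0,1\}^n$, are independent $\mathrm{Bernoulli}(1/k)$ random variables. Writing $N = 2^n$ and $\mu = N/k$, the multiplicative Chernoff bound with $\delta = 1$ gives
\[
\Pr{\sum_{x \in \{0,1\}^n} Y_x \geq \tfrac{2N}{k}} \;\leq\; \exp(-\mu/3) \;\leq\; 2^{-c_1 \cdot 2^n / k},
\]
for the absolute constant $c_1 = 1/(3 \ln 2) > 0.48$.

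Next, I would count fan-in-two Boolean circuits of size at most $s$ with $n$ input bits and outputs in $[k]$. A standard encoding specifies, for each of the $s$ gates, its type ($O(1)$ choices) and its two predecessor wires (each from $\leq s + n$ possibilities); the $O(\log k) \leq O(n)$ output wires needed to encode a $[k]$-valued output are absorbed. This gives an upper bound of $[O((s+n)^2)]^s = 2^{O(s \log (s+n))}$ distinct circuits. Substituting $s = s(n) = 2^n/(8kn)$ and $\log(s+n) = O(n)$, the circuit count is at most $2^{c_2 \cdot 2^n / (8k)}$ for an absolute constant $c_2$, where a careful accounting gives $c_2 \leq 2 + o(1)$.

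Finally, a union bound combines these two estimates: the probability that \emph{some} circuit of size $\leq s(n)$ agrees with $\hardf$ on more than a $2/k$ fraction of inputs is at most
\[
2^{c_2 \cdot 2^n/(8k)} \cdot 2^{-c_1 \cdot 2^n/k} \;=\; 2^{-(c_1 - c_2/8)\cdot 2^n/k}.
\]
The key (and in fact only real) obstacle is verifying the constants: we need $c_2/8 < c_1$, and this is exactly what the denominator $8$ in the definition of $s(n)$ is engineered to give. Since $c_1 > 0.48$ while $c_2/8 \leq 0.25 + o(1)$, the exponent is at most $-\Omega(2^n/k)$, so the failure probability is doubly-exponentially small in $n$ and certainly $o(1)$ — ``with high probability'' in the strongest reasonable sense. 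Every other step is a routine Chernoff-plus-counting computation.
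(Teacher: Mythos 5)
Your proof is correct and follows essentially the same route as the paper's: a multiplicative Chernoff bound for a fixed circuit (using independence of the events $\{C(x)=\hardf(x)\}$ over $x$), a union bound over the at most $2^{O(s\log s)}$ circuits of size $s$, and a check that the factor $8$ in $s(n)=2^n/(8kn)$ makes the counting term lose to the concentration term. The only cosmetic difference is that the paper bounds the number of circuits by $s^{3s}$ rather than your $2^{O(s\log(s+n))}$; the constant calibration works out either way.
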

\begin{proof}
The proof closely parallels the proof of the corresponding
statement for worst-case hardness rather than hardness-on-average,
which is presented, for example, in the textbook by \citet{AroraBarak}.
The number of Boolean circuits of size $s$ is bounded
by $s^{3s}$. For any one of these circuits, $C$, the expected
number of inputs $x$ such that $C(x)=\hardf(x)$ is 
$\frac1k \cdot 2^n$. Since the events $\{C(x)=\hardf(x)\}$
are mutually independent as $x$ varies over $\{0,1\}^n$, 
the Chernoff bound (e.g.,~\cite{MR95}) implies that the probability of  
more than $\frac2k \cdot 2^n$ of these events taking place is
less than $\exp \big( - \frac{1}{3k} \cdot 2^n \big)$.
The union bound now implies that the probability there exists
a circuit $C$ of size $s$ that correctly computes $\hardf$ on
more than $\frac2k$ fraction of inputs is bounded above by
$
\exp \big( 3s \ln(s) - \frac{1}{3k} \cdot 2^n \big).
$
When $s = 2^n/(8kn)$ this yields the stated high-probability bound.
\end{proof}

In the sequel we will need a version of the lemma
above in which the circuit, rather than being constructed from the usual
Boolean gates, is constructed from $t$ different types of gates, 
each having $m$ binary inputs and one binary output.

\begin{lemma} \label{lem:very-hardf}
Suppose we are given $t$ types of gates, each computing a 
specific function from $\{0,1\}^m$ to $\{0,1\}$. 
If $\hardf : \{0,1\}^n \to [k]$ is a random function,
then with high probability there is 
no circuit of size 
$s(n) \leq 2^n / (8k \cdot \max\{mn, \, \ln(t)\})$
that outputs the
function value correctly on more than $\frac{2}{k}$
fraction of inputs.
\end{lemma}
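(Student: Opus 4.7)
The plan is to mirror the counting-and-Chernoff argument underlying Lemma~\ref{lem:hard-on-avg}, carrying the additional parameters $m$ and $t$ through the bookkeeping. First I would enumerate the circuits of size $s$ over the given basis: each of the $s$ gates is specified by a gate type (one of $t$) together with an ordered $m$-tuple of input wires, each chosen from among the $s+n$ available sources (the other gates, or the $n$ input bits). The number of such circuits is therefore at most
\begin{equation}
\bigl(t\cdot(s+n)^m\bigr)^s \;=\; \exp\!\bigl(s\ln t + ms\ln(s+n)\bigr).
\end{equation}

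Second, I would fix any such circuit $C\colon \{0,1\}^n\to[k]$ and analyze its agreement with $\hardf$. Since $\hardf$ is a uniformly random function, the events $\{C(x)=\hardf(x)\}$ as $x$ varies over $\{0,1\}^n$ are mutually independent, each of probability $1/k$. Letting $Z_C = \bigl|\{x : C(x)=\hardf(x)\}\bigr|$ so that $\Ex{Z_C}=2^n/k$, the multiplicative Chernoff bound gives
\begin{equation}
\Pr{Z_C > \tfrac{2}{k}\cdot 2^n} \;\leq\; \exp\!\bigl(-c\cdot 2^n/k\bigr)
\end{equation}
for some absolute constant $c>0$. A union bound over all circuits then bounds the probability that some size-$s$ circuit agrees with $\hardf$ on more than a $2/k$ fraction of inputs by $\exp\!\bigl(s\ln t + ms\ln(s+n) - c\cdot 2^n/k\bigr)$.

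Finally, I would verify that the hypothesis $s\leq 2^n/\bigl(8k\cdot\max\{mn,\ln t\}\bigr)$ drives this exponent to $-\Omega(2^n/k)$. Since $s\leq 2^n$, we have $\ln(s+n)=O(n)$, so the term $ms\ln(s+n)$ is $O(smn)$, which is controlled by the $mn$-side of the maximum; analogously $s\ln t$ is controlled by the $\ln t$-side. Each contribution is at most $O(2^n/k)$ with an explicit small constant, and the whole role of the factor $8$ in the denominator of the hypothesis is to ensure that the sum of these two terms is strictly dominated by $c\cdot 2^n/k$. The only mildly delicate step is arranging the absolute constants so that this strict domination holds (one may need to invoke the sharper form $c = 2\ln 2 - 1$ of the Chernoff exponent rather than the cruder $c=1/3$); everything else is routine. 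Once this slack is secured the union-bound probability is $\exp(-\Omega(2^n/k))$, so with high probability no size-$s(n)$ circuit over the given basis agrees with $\hardf$ on more than a $2/k$ fraction of inputs.
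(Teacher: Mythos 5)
Your proposal is correct and follows essentially the same route as the paper: bound the number of circuits over the augmented basis by $\bigl(t\,s^m\bigr)^s=\exp(s\ln t+ms\ln s)$ (the paper omits the $+n$ for input wires), apply the multiplicative Chernoff bound per circuit, and union-bound. Your worry about needing a sharper Chernoff constant is unfounded: with the crude exponent $c=1/3$ the hypothesis already yields $s\ln t+ms\ln(s+n)\le\tfrac{1}{4k}\cdot 2^n<\tfrac{1}{3k}\cdot 2^n$, so the slack is there with room to spare.
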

\begin{proof}
The proof is the same except that the number of circuits,
rather than being bounded by $s^{3s}$, is now bounded
by $(t s^m)^s = \exp ( s \ln t + ms \ln s )$.
The stated high-probability bound continues to hold
if $m s \ln s < \frac{1}{8k} \cdot 2^n$ and 
$s \ln t < \frac{1}{8k} \cdot 2^n$.
The assumption 
$s(n) \leq 2^n / (8k \cdot \max\{mn, \, \ln(t)\})$
justifies these two inequalities and completes the proof.
\end{proof}

\subsubsection{Proof of \thmref{thm:complexity}}
\label{sec:complexity-proof}

Similar to the proof of \thmref{thm:game-against-nature},
our plan for proving \thmref{thm:complexity} is to 
construct a distribution over arrival orderings, $\natpd$,
in which the first half of the input sequence attempts to
encode the position where the maximum-value item occurs
in the second half of the permutation.
What makes the proof more difficult is
that the adversary chooses the 
% permutation $\advperm \in S_n$ which governs the 
ordering of items by value (as represented by
a permutation $\advperm \in S_n$), and this
ordering could potentially be chosen to thwart the
decoding process. In our construction we will make 
a distinction between {\em decodable} orderings---whose
properties will guarantee that our decoding algorithm succeeds
in finding the maximum-value item---and {\em non-decodable}
orderings, which may lead the decoding algorithm to make
an error. We will then design a separate algorithm that 
succeeds with constant probability when the adversary's ordering
is non-decodable.

We will assume throughout the proof that $n$ is divisible
by 8, for convenience. Recall, also, that the theorem
statement declares $\ptz(n)$ to be any 
function of $n$ such that $\lim_{n \to \infty} \ptz(n)=0$
while $\lim_{n \to \infty} \frac{n \cdot \ptz(n)}{\log n} = \infty$.
For convenience we adopt the 
notation $[a,b]$ to denote the subset of $[n]$
consisting of integers in the range from $a$ to $b$, 
inclusive; analogously, we may refer to subsets
of $[n]$ using open or half-open interval notation.
\begin{definition} \label{def:decodable}
A total ordering of $[n]$ is {\em decodable} if it
satisfies the following properties.
\begin{enumerate}
\item The maximal element of the ordering is $n$.
\item \label{d2-property}
When the elements of the set
$ ( \tfrac{n}{4}, \, \tfrac{n}{2} ] $
are written in decreasing order with respect to the total ordering,
the
of the first $n \cdot \ptz(n)$ elements 
that belong to $ ( \tfrac{3n}{8}, \, \tfrac{n}{2} ]$
is at least $\frac{39}{40}$.
\end{enumerate}
\end{definition}
Our proof will involve the construction of three
distributions over permutations, and three 
corresponding algorithms.
\begin{itemize}
\item an ``encrypting'' distribution $\npde$ that hides item $n$ in the 
second half of the permutation while arranging the first
half of the permutation to form a ``clue''
that reveals the location of item $n$, 
but does so in a way that 
cannot be decrypted by small circuits;
\item a first ``adversary-coercing'' distribution $\npdacf$
that forces the adversary to make item $n$ the
most valuable item;
\item a second ``adversary-coercing'' distribution $\npdacs$
that forces the adversary to satisfy the second property
in the definition of a decodable ordering.
\end{itemize}
Corresponding to these three distributions we will define
algorithms $\algoe,\algoacf,\algoacs$ such that:
\begin{itemize}
\item $\algoe$ has constant probability of correct selection
when the adversary chooses a decodable ordering;
\item $\algoacf$ has constant probability of correct selection
when the adversary chooses an ordering that violates the first
property of decodable orderings;
\item $\algoacs$ has constant probability of correct selection
when the adversary chooses an ordering that satisfies the first
property of decodable orderings but violates the second.
\end{itemize}
Combining these three statements, one can easily conclude
that when
nature samples the arrival order using the
permutation distribution $\natpd = 
\frac13 (\npde + \npdacf + \npdacs)$, and 
when the algorithm $\algo$ is the one that randomizes
among the three algorithms $\{\algoe,\algoacf,\algoacs\}$ 
with equal probability, then $\algo$ has constant 
probability of correct selection no matter how the
adversary orders items by value.

We begin with the construction of the distribution
$\npdacf$ and algorithm $\algoacf$. The following
describes the procedure of drawing a random sample
from $\npdacf$.
\begin{algorithm}[H]
\caption{Sampling procedure for $\npdacf$}
\begin{algorithmic}[1]
\State Sample an $\big( \frac{n}{2} \big)$-element
set $L \subset [n-1]$ uniformly at random.
\State Let $R = [n-1] \setminus L$.
\State Let $\natperm'$ by the permutation that lists the
elements of $L$ in increasing order,
followed by the elements of $R$ 
increasing order, followed by $n$.
\label{step:npdacf3}
\State
Choose a uniformly random $i \in \big[ 1, \, \frac{n}{2} \big]$
and let $\tau_i$ be the transposition that swaps elements $n$ and $i$.
(If $i=n$ then $\tau_i$ is the identity permutation.)
\State Let $\natperm = \tau_i \circ \natperm'$.
\end{algorithmic}
\end{algorithm}
Define $\algoacf$ to be an algorithm that observes the
first $\frac{n}{2}$ elements, sets a threshold equal to
the maximum of the observed elements, and selects the next element
whose value exceeds this threshold. In the following lemma
and for the remainder of this section, $\revperm$ denotes
the permutation
that lists the items
in order of increasing value, i.e., $\revperm(i) = n-i$.

\begin{lemma} \label{lem:npdacf}
If the adversary's ordering $\advperm$ assigns the maximum value
to any item other than $n$, then $\gval[\npdacf](\algoacf,\advperm) >
\frac{1}{4}$. On the other hand, 
$\gval[\npdacf](\ast,\revperm) = \frac{2}{n}.$
\end{lemma}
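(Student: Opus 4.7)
Plan. Let $m = \advperm(1)$ and $m_2 = \advperm(2)$ denote the items with highest and second-highest values $v_1 > v_2$. My plan for the first claim rests on the following sufficient structural condition for $\algoacf$ to select $m$: whenever $m$ lies in the second half and the threshold $T$ (the maximum value in the first half) satisfies $T \geq v_2$, no other second-half item can exceed $T$ (every non-$m$ item has value at most $v_2$), so $m$ is selected as soon as it arrives. I will split on the identity of $m_2$. If $m_2 = n$, then because $\npdacf$ always places item $n$ in the first half at position $i \in [1, n/2]$, we have $T \geq v(n) = v_2$ deterministically, and the success probability is at least $\Pr{m \text{ in second half}} = \tfrac{n/2 - 1}{n-1} + \tfrac{1}{n-1} = \tfrac{n/2}{n-1} > \tfrac{1}{2}$, accounting for the two disjoint cases $m \in R$ and $m \in L$ with $m = \ell_i$.

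If instead $m_2 \in [n-1]$, then $T \geq v_2$ holds exactly when $m_2 \in L \setminus \{\ell_i\}$, i.e., $m_2$ is a first-half item not displaced to position $n$ by the swap. I consider the two disjoint events $E_1 = \{m \in R,\, m_2 \in L \setminus \{\ell_i\}\}$ and $E_2 = \{m \in L,\, m = \ell_i,\, m_2 \in L \setminus \{\ell_i\}\}$, both of which imply success by the structural condition above. Using that $L$ is a uniformly random $(n/2)$-subset of $[n-1]$ and that $i$ is independent of $L$, a hypergeometric computation yields $\Pr{m \in R,\, m_2 \in L} = \Pr{m, m_2 \in L} = \tfrac{(n/2)(n/2-1)}{(n-1)(n-2)}$, whence $\Pr{E_1} = \tfrac{(n/2)(n/2-1)}{(n-1)(n-2)} \cdot \tfrac{n-2}{n} = \tfrac{n-2}{4(n-1)}$ and $\Pr{E_2} = \tfrac{(n/2)(n/2-1)}{(n-1)(n-2)} \cdot \tfrac{2}{n} = \tfrac{1}{2(n-1)}$. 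Summing gives $\Pr{\text{success}} \geq \tfrac{n}{4(n-1)} > \tfrac{1}{4}$.

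For the second claim, the key observation is that under $\revperm$ the maximum-value item is $n$, which $\npdacf$ places at a uniformly random position $i \in [1, n/2]$. For any $t \leq i$ the observed value sequence is strictly increasing: positions $1,\ldots,i-1$ carry the $L$-items in sorted order (whose values are correspondingly sorted under $\revperm$), and position $i$ holds item $n$, a fresh maximum. Because any prefix of length $t \leq i$ yields the identical strictly increasing comparison pattern, a comparison-based algorithm's stopping decision during this phase depends only on $t$ and its private randomness, independently of $i$. Letting $\tilde\tau$ denote its stopping time on an indefinitely increasing input, independence of $i$ from $\tilde\tau$ together with the uniformity of $i$ on $[1, n/2]$ gives
\[
\Pr{\text{success}} = \sum_{i_0=1}^{n/2} \Pr{i = i_0}\,\Pr{\tilde\tau = i_0} = \tfrac{2}{n}\,\Pr{\tilde\tau \leq n/2} \leq \tfrac{2}{n},
\]
with equality attained by any algorithm that stops at time $1$. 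I expect the main obstacle to lie in the first part: cleanly verifying that $E_1$ and $E_2$ imply correct selection in every subcase (so the structural condition is applied legitimately, and no earlier second-half arrival diverts $\algoacf$), and tracking the independence of $i$ from the random set $L$ throughout the hypergeometric product.
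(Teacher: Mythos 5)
Your proof is correct and follows essentially the same route as the paper: for the first claim you condition on the second-best item landing in the first half and the best in the second half (total probability $\frac{n}{4(n-1)} > \frac14$), and for the second claim you observe that against the increasing value assignment any comparison-based algorithm can succeed only by guessing the uniformly random index $i \in [1, n/2]$, giving exactly $\frac{2}{n}$. In fact your handling of the first claim is slightly more careful than the paper's one-line argument: by explicitly excluding the event that $m_2$ is the element $\ell_i$ swapped to position $n$ (where it no longer contributes to the threshold) and compensating with the disjoint event $E_2$ in which the maximum itself is the swapped element, you close a corner case the paper glosses over while recovering the same bound.
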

\begin{proof}
Suppose that $\advperm$ assigns the maximum value to item $i \neq n$,
and suppose that item $j$ receives the second-largest value among
items in $[n-1]$. In the sampling procedure for $\npdacf$, the 
event that $j \in L$ and $i \in R$ has probability
$$
\frac{n/2}{n-1} \, \cdot \, \frac{(n/2)-1}{n-2} > \frac{1}{4},
$$
and when this event happens the algorithm $\algoacf$ is guaranteed
to select item $i$.

To prove the second part of the lemma, suppose the adversary
assigns values to items in increasing order and observe that
this guarantees that the first $n/2$ items in the 
permutation $\natperm'$ (defined in Step~\ref{step:npdacf3}
of the sampling procedure) are listed in increasing order
of value, and that the first $n/2$ items in $\natperm$ are
the same except that the value at index $i$ is replaced
by the maximum value. Now consider any algorithm and let 
$t$ denote the time when it makes its selection when facing
a monotonically increasing sequence of $n$ values. If $t > n/2$,
then the algorithm assuredly makes an incorrect selection when
facing the input sequence $\natperm \revperm$. If $t \leq n/2$,
then the algorithm makes a correct selection if and only if 
$t$ matches 
the random index $i$ chosen in the sampling procedure for
$\natperm$, an event with probability $2/n$.
\end{proof}

We next present the construction of $\npdacs$.
\begin{algorithm}[H]
\caption{Sampling procedure for $\npdacs$}
\begin{algorithmic}[1]
\State \label{step:acs1}
With probability $\frac12$, reverse the order
of the first $\frac{n}{4}$ items in the list.
\State Initialize $I = \emptyset$.
\For{$i = 1,\ldots,\frac{n}{4}$}
\State With probability $\frac{1}{n \, \ptz(n)}$:
\State \hspace*{5em} Swap
the items in positions $i$ and $i + \frac{n}{4}$.
\State \hspace*{5em} Add $i$ to the set $I$.
\State \hspace*{5em} If $i > \frac{n}{8}$ then add $i$ into the set $I^+$.
\EndFor
\State \label{step:acs3}
Let $\natperm'$ denote the permutation of items defined
at this point in the procedure.
\If{$I^+$ is non-empty}
\State \label{step:acs4}
Choose a uniformly random index $i \in I^+$
\Else
\State Choose a uniformly random index $i \in \big( \frac{n}{8}, 
\frac{n}{4} \big]$.
\EndIf
\State Let $\tau_i$ be the transposition that swaps elements $n$ and $i$.
\State Let $\natperm = \tau_i \circ \natperm'$.
\end{algorithmic}
\end{algorithm}
Define $\algoacs$ to be an algorithm that observes the
first $\frac{n}{8}$ elements, sets a threshold equal to
the maximum of the observed elements, and selects the next element
whose value exceeds this threshold.

\begin{lemma} \label{lem:npdacs}
If the adversary's ordering $\advperm$ assigns the maximum value
to item $n$ but violates Property~\ref{d2-property} in the 
definition of a decodable ordering, then 
$\gval[\npdacs](\algoacs,\advperm) >
\frac{1}{250}$. 
On the other hand, 
% if $\revperm$ denotes the permutation
% that lists the items
% in order of increasing value (i.e., $\revperm(i) = n-i$), 
% then 
$\gval[\npdacf](\ast,\revperm) =
O(\ptz(n))$.
\end{lemma}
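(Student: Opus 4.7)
The lemma contains two claims, and the second, $\gval[\npdacf](\ast,\revperm) = O(\ptz(n))$, is an immediate consequence of \lemref{lem:npdacf}: the earlier lemma already establishes the sharper $\gval[\npdacf](\ast,\revperm) = 2/n$, and the hypothesis $n\,\ptz(n)/\log n \to \infty$ of \thmref{thm:complexity} forces $\ptz(n) = \omega(1/n)$, so $2/n = O(\ptz(n))$ trivially. The substance of the proof therefore lies in the first claim, and I concentrate on it below.

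Fix an adversarial $\advperm$ that makes item $n$ the overall maximum yet violates property~\ref{d2-property}. The violation supplies a set $T \subseteq (\tfrac{n}{4},\tfrac{3n}{8}]$ of size at least $n\,\ptz(n)/40$ whose elements all rank among the top $n\,\ptz(n)$ most valuable items of $(\tfrac{n}{4},\tfrac{n}{2}]$. My plan is to condition on the ``no reversal'' branch at Step~\ref{step:acs1} of $\npdacs$, which has probability $\tfrac12$, and within that branch to prove that the conditional probability of correct selection by $\algoacs$ exceeds a concrete constant larger than $2/250$. Tracing the for-loop and the final transposition $\tau_i$ shows that item $n$ lands at position $p_n = i + n/4 \in (\tfrac{3n}{8},\tfrac{n}{2}]$, and that for each $p \in [1,n/4]$ the position $p$ of $\natperm'$ holds the original item $p$ unless $p \in I$, in which case it holds the swapped-in item $p + n/4 \in (\tfrac{n}{4},\tfrac{n}{2}]$. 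Crucially, the events $\{p \in I\}$ for $p \in [1,n/4]$ are independent Bernoullis with parameter $1/(n\,\ptz(n))$. Define the witness event $E_1 = \{\exists\, t \in T : t - n/4 \in I \cap [1,n/8]\}$; then
\[
\Pr{\overline{E_1}} \le \left(1 - \tfrac{1}{n\,\ptz(n)}\right)^{|T|} \le e^{-1/40},
\]
so $\Pr{E_1} \ge 1-e^{-1/40}$. On $E_1$ some witness $t^\star \in T$ is deposited at a position $p^\star \in [1,n/8]$, and so the threshold of $\algoacs$ is at least $v(t^\star)$, a value from the top-$n\,\ptz(n)$ band of $(\tfrac{n}{4},\tfrac{n}{2}]$. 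The remaining step is to upper bound the conditional probability that some item of value strictly larger than $v(t^\star)$ occupies the window $(n/8, p_n - 1]$; there are at most $n\,\ptz(n)$ such dangerous items inside $(\tfrac{n}{4},\tfrac{n}{2}]$, and each of them appears in that window only through an independent swap event on $(n/8, n/4]$ or through the uniform draw $i \in I^+$, which a standard union bound shows happens with a small constant probability. Multiplying $\tfrac12 \cdot \Pr{E_1} \cdot \Pr{\text{no interloper}\mid E_1}$ then yields a success probability strictly exceeding $1/250$.

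The main obstacle is that both $E_1$ and the ``no interloper'' event depend on the same random set $I$, so some decoupling is needed to get clean independence-based tail bounds. My plan is to reveal $I$ in two phases---first on $[1, n/8]$ (which determines $E_1$) and then on $(n/8, n/4]$ together with the uniform draw $i \in I^+$ (which governs interlopers)---so that conditioning on $E_1$ leaves fresh independent randomness available for the second bound. The constant $1/250$ then emerges from bookkeeping of $\tfrac12 \cdot (1 - e^{-1/40})$ times the interloper-avoidance probability.
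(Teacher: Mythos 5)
The core step of your plan for the first claim does not go through. By conditioning on the ``no reversal'' branch of Step~\ref{step:acs1} and then confining the ``dangerous'' items to those lying in $(\tfrac{n}{4},\tfrac{n}{2}]$, you overlook the items that originally occupy positions $(\tfrac{n}{8},\tfrac{n}{4}]$. The hypothesis constrains only item $n$ and the relative order of the block $(\tfrac{n}{4},\tfrac{n}{2}]$, so the adversary may assign the second through $(\tfrac{n}{8})$-th largest values to items $\tfrac{n}{8}+1,\dots,\tfrac{n}{4}$. Conditioned on no reversal, each such item remains at its own position with probability $1-\tfrac{1}{n\,\ptz(n)}$, it then beats every item in the first $\tfrac{n}{8}$ positions (hence beats the threshold no matter what $E_1$ delivered), and it precedes item $n$ --- which, incidentally, lands at position $i\in(\tfrac{n}{8},\tfrac{n}{4}]$, not at $i+\tfrac{n}{4}$ as you write: $\tau_i$ swaps positions $n$ and $i$ and the for-loop never touches position $n$. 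Against this adversary your conditional success probability is $o(1)$, so multiplying by $\tfrac12$ cannot rescue the bound. The reversal coin is precisely the device that handles these items: the paper's success event is the conjunction of ``the most valuable \emph{un-swapped} item among the first $\tfrac{n}{4}$ positions sits in the first $\tfrac{n}{8}$ positions'' (probability $\tfrac12$, supplied by the reversal you conditioned away) with ``the most valuable \emph{swapped-in} item exists and sits in the first $\tfrac{n}{8}$ positions'' (probability $>\tfrac{1}{100}$ from the violation of Property~\ref{d2-property}, computed over the \emph{first} index of the decreasing-value listing of $(\tfrac{n}{4},\tfrac{n}{2}]$ that falls in $I$, not over an arbitrary witness as in your $E_1$). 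Note also that even for the $(\tfrac{n}{4},\tfrac{n}{2}]$-interlopers your union bound is too weak: there are up to $n\,\ptz(n)$ items more valuable than $t^\star$, each entering $(\tfrac{n}{8},\tfrac{n}{4}]$ with probability $\tfrac{1}{n\,\ptz(n)}$, so the union bound only gives $\le 1$, which is why the paper works with the top swapped-in item directly rather than decoupling the two events.

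On the second claim, you have taken the superscript $\npdacf$ at face value; it is a typo for $\npdacs$ (read literally, the claim is a weaker restatement of \lemref{lem:npdacf} and contributes nothing new to \thmref{thm:complexity}, which should be a warning sign). The intended statement $\gval[\npdacs](\ast,\revperm)=O(\ptz(n))$ is not immediate: the paper's proof shows $|I^+|\ge\tfrac{1}{16\,\ptz(n)}$ with probability $1-O(\ptz(n))$ by a Chernoff bound, and then argues that against the increasing-value adversary any algorithm succeeds only if its stopping time coincides with the index $i$ drawn uniformly from $I^+$, an event of probability $1/|I^+|\le 16\,\ptz(n)$. Your write-up would need to supply this argument.
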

\begin{proof}
First suppose that $\advperm$ assigns the maximum value
to item $n$ but violates Property~\ref{d2-property} in the 
definition of a decodable ordering. To prove that
$\gval[\npdacs](\algoacs,\advperm) >
\frac{1}{250}$, note first that $\algoacs$ is guaranteed
to make a correct selection if the permutation $\natperm'$
(defined in step~\ref{step:acs3} of the sampling procedure)
has the property that the highest-value item found among
the first $n/4$ positions in $\natperm'$
belongs to one of the first $\frac{n}{8}$ positions. 
Recalling the set $I$ defined in the sampling procedure,
let $J = \big[ 1, \frac{n}{4} \big] \setminus I$ and
let $K = \big\{ i + \frac{n}{4} \mid i \in I \big\}$.
Note that $J \cup K$ is the set of items found among the
first $n/4$ positions in $\natperm'$. Let $j,k$ denote the
highest-value elements of $J$ and $K$, respectively. 
(If $K$ is empty then $k$ is undefined.)
Step~\ref{step:acs1} of the sampling procedure ensures
that with probability $\frac12$, item $j$ belongs to one
of the first $\frac{n}{8}$ positions, and this event is
independent of the event that $K$ is non-empty and 
$k$ belongs to one of the 
first $\frac{n}{8}$ positions. To complete the proof, we
now bound the probability of that event from below by $\frac{1}{100}$.

Let $i_1,i_2,\ldots,i_{n/4}$ denote a listing of the 
elements of the set $\big( \frac{n}{4}, \frac{n}{2} \big]$
in decreasing order of value. For $1 \leq \ell \leq \frac{n}{4}$, the 
probability that $k = i_\ell$ is 
$\big(1 - \frac{1}{n \, \ptz(n)} \big)^{\ell-1} \frac{1}{n \, \ptz(n)}$.
Let $L$ denote the set of $\ell \leq n \, \ptz(n)$ 
such that $i_\ell \leq \frac{3n}{8}$. By our hypothesis
that $\advperm$ violates Property~\ref{d2-property} in the 
definition of a decodable ordering, we know that 
$\frac{|L|}{n \, \ptz(n)} > \frac{1}{40}$.
If $k = i_\ell$ for any $\ell \in L$, then $k$ belongs to one
of the first $\frac{n}{8}$ positions in $\natperm'$. The 
probability of this event is 
\[
\sum_{\ell \in L} \Pr{k = i_\ell} = 
\sum_{\ell \in L} \left(
1 - \frac{1}{n \, \ptz(n)} \right)^{\ell-1} \frac{1}{n \, \ptz(n)} 
\geq
\frac{1}{n \, \ptz(n)} \sum_{\ell \in L} \left(
1 - \frac{1}{n \, \ptz(n)} \right)^{n \, \ptz(n) - 1}
\geq \frac{|L|}{n \, \ptz(n)} \cdot \frac{1}{e} > \frac{1}{100},
\]
as desired.

The second half of the lemma asserts that 
$\gval[\npdacf](\ast,\revperm) = O(\ptz(n))$,
where $\revperm$ denotes the permutation
that lists the items in order of increasing value.
To prove this, first recall the set $I^+$ defined
in the sampling procedure; note that $|I^+|$ is
equal to the number of successes in $\frac{n}{8}$
i.i.d.\ Bernoulli trials with success probability
$\frac{1}{n \, \ptz(n)}$. Hence $\Ex{|I^+|} = \frac{1}{8 \ptz(n)}$
and, by the Chernoff Bound,
\[
\Pr{|I^+| < \tfrac{1}{16 \ptz(n)}} < 
\exp \left( - \frac{1}{128 \ptz(n)} \right) <
128 \ptz(n).
\]
Conditional on the event that $|I^+| \geq \frac{1}{16 \ptz(n)}$,
the conclusion of the proof is similar to the 
conclusion of the proof of \lemref{lem:npdacf}.
Consider any algorithm and let $t$ denote the time when
it makes its selection when the items are presented in
the order $\natperm'$. Also, let $s$ denote the time
when item $n$ is presented in the order $\natperm$.
The first $s$ items in $\natperm$ and 
$\natperm'$ have exactly the same relative
ordering by value since, by construction, 
$s \in I^+$ and hence the element that arrives 
at time $s$ in $\natperm'$ has the maximum value
observed so far. Hence, the algorithm
makes a correct selection only when $t=s$. 
However, if $t \not\in I^+$ then this event
does not happen, and if $t \in I^+$ the 
event $t=s$ happens only if $s$ is the random index
$i$ selected in Step~\ref{step:acs4}, an event
whose probability is $1/|I^+|$, which is at most
$16 \ptz(n)$ since we are conditioning on 
$|I^+| \geq \frac{1}{16 \ptz(n)}$.
Combining our bounds for the cases 
$|I^+| < \frac{1}{16 \ptz(n)}$
and
$|I^+| \geq \frac{1}{16 \ptz(n)}$,
we find that for any algorithm $\algo$,
\begin{align*}
\pcs{\algo,\natperm \revperm} & \leq
\Pr{|I^+| < \tfrac{1}{16 \ptz(n)}} + 
\Pr{|I^+| \geq \tfrac{1}{16 \ptz(n)}} \cdot
\Pr{\text{correct selection} \growingmid |I^+| \geq \tfrac{1}{16 \ptz(n)}} 
 \\ & \leq
\Pr{|I^+| < \tfrac{1}{16 \ptz(n)}} + 
\Pr{\text{correct selection} \growingmid |I^+| \geq \tfrac{1}{16 \ptz(n)}} 
 \\ & < 128 \ptz(n) + 16 \ptz(n),
\end{align*}
as desired.
\end{proof}

Finally, we present the construction of the permutation 
distribution $\npde$. A crucial ingredient is 
a coding-theoretic construction that may be of independent 
interest.

\begin{definition} \label{def:half-uniq-decoding}
We say that a function $\ecc : \{0,1\}^k \to \{0,1\}^n$ 
has half-unique-decoding radius $r$ if at least half
of the inputs $x \in \{0,1\}^k$ satisfy the property
that for all $x' \neq x$, the Hamming distance from
$\ecc(x)$ to $\ecc(x')$ is greater than $2r$.
\end{definition}

Codes with half-unique-decoding radius $r$ are
useful because they allow a receiver to decode
messages with probability at least $\frac12$,
in a model with random messages and adversarial
noise. The following easy lemma substantiates
this interpretation. Here and subsequently, we 
use $\| y - y' \|$ to denote the Hamming distance
between strings $y,y'$.

\begin{lemma} \label{lem:half-uniq-decoding}
Suppose that:
\begin{itemize}
\item a uniformly random string $x \in \{0,1\}^k$
is encoded using a function $\ecc$ whose 
half-unique-decoding radius is $r$, 
\item an adversary is allowed to corrupt any
$r$ bits of the resulting codeword, and
\item a decoding algorithm receives the corrupted
string $\hat{y}$, finds the nearest codeword
(breaking ties arbitrarily),
and applies the function $\ecc^{-1}$ to 
produce an estimate $\hat{x}$ of the original
message.
\end{itemize}
The $\Pr{\hat{x}=x} \geq \frac12$ regardless
of the adversary's policy for corrupting the
transmitted codeword.
\end{lemma}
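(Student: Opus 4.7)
The plan is to split the messages into ``good'' and ``bad'' according to the half-unique-decoding property, then show that whenever $x$ is good the decoder succeeds with probability~$1$ no matter what the adversary does. Since at least half the messages are good and $x$ is drawn uniformly, this will give the claimed bound.

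Concretely, I would begin by calling a message $x \in \{0,1\}^k$ \emph{good} if $\|\ecc(x) - \ecc(x')\| > 2r$ for every $x' \neq x$. By the definition of half-unique-decoding radius, at least $2^{k-1}$ messages are good, so $\Pr{x \text{ is good}} \geq \tfrac12$ when $x$ is uniform on $\{0,1\}^k$.

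Next, I would fix any good $x$ and any adversarial corruption pattern affecting at most $r$ bits, producing $\hat{y}$ with $\|\hat{y} - \ecc(x)\| \leq r$. For any other message $x' \neq x$, the triangle inequality for the Hamming metric yields
\[
\|\hat{y} - \ecc(x')\| \;\geq\; \|\ecc(x) - \ecc(x')\| - \|\hat{y} - \ecc(x)\| \;>\; 2r - r \;=\; r \;\geq\; \|\hat{y} - \ecc(x)\|.
\]
Hence $\ecc(x)$ is the unique nearest codeword to $\hat{y}$, so the decoder's nearest-codeword step returns $\ecc(x)$ and the subsequent application of $\ecc^{-1}$ returns $\hat{x}=x$. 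Thus, conditional on $x$ being good, $\hat{x}=x$ with probability $1$, regardless of the adversary's (possibly $x$-dependent) choice of which $r$ bits to corrupt.

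Combining the two observations, $\Pr{\hat{x}=x} \geq \Pr{x \text{ is good}} \geq \tfrac12$, which is the desired bound. No step is a real obstacle here: the only point worth being careful about is that the adversary may see $\ecc(x)$ before choosing the corruption pattern, but the uniqueness argument above handles \emph{every} pattern of up to $r$ corruptions simultaneously, so adaptivity of the adversary is irrelevant once $x$ is known to be good.
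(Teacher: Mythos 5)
Your proposal is correct and follows essentially the same argument as the paper: condition on $x$ being one of the at least $2^{k-1}$ messages whose codeword has no other codeword within Hamming distance $2r$, then use the triangle inequality to conclude the decoder succeeds deterministically for any corruption of at most $r$ bits. Nothing to add.
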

\begin{proof}
Let $y = \ecc(x)$. The constraint on the adversary
implies that $\| y - \hat{y} \| \leq r$.
The definition of half-unique-decoding 
radius implies that, with probability at least $\frac12$
over the random choice of $x$, the nearest codeword to 
$y$ is at Hamming distance greater than $2r$. By the
triangle inequality, this event implies that $y$ is the
unique nearest codeword to $\hat{y}$, in which case the
decoder succeeds.
\end{proof}

The particular coding construction that our proof 
requires is a code with the property that, roughly
speaking, all of its low-dimensional projections
have large half-unique-decoding radius. The following
definition and lemma make this notion precise.

\begin{definition} \label{def:s-proj}
For $S \subseteq [n]$, let $\proj_S : \{0,1\}^n \to \{0,1\}^{|S|}$
denote the function that projects a vector
onto the coordinates indexed by $S$. In other words,
letting $(i_1,i_2,\ldots,i_s)$ denote a sequence containing
each element of $S$ once, we define
$\proj_S(y) = (y_{i_1},y_{i_2},\ldots,y_{i_s})$.
For any function $\ecc : \{0,1\}^k \to \{0,1\}^n$, we 
introduce the notation $\ecc_S$ to denote the 
composition $\proj_S \cdot \ecc : X \to \{0,1\}^{|S|}$.
\end{definition}

\begin{lemma} \label{lem:coding}
For all sufficiently large $m$, if $2m \leq n < \frac{m}{2} \cdot 2^{2^{m-4}}$, 
there exists a function $\ecc : \{0,1\}^m \to
\{0,1\}^n$ such that for every set $S \subseteq
[n]$ of cardinality $2m$, the function 
$\ecc_S$ has half-unique-decoding radius 
 $\frac{m}{10}$.
\end{lemma}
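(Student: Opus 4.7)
The plan is to prove the lemma by the probabilistic method: sample $\ecc \colon \{0,1\}^m \to \{0,1\}^n$ by drawing each codeword $\ecc(x)$ independently and uniformly at random from $\{0,1\}^n$, and show that with positive probability the resulting $\ecc$ has the stated property for every $S \subseteq [n]$ with $|S| = 2m$. Fix one such $S$. The restricted code $\ecc_S$ then consists of $2^m$ i.i.d.\ uniform strings in $\{0,1\}^{2m}$, and the failure of half-unique-decoding for $\ecc_S$ is implied by the event that the number of \emph{close pairs}---pairs $\{x,x'\}$ with $\|\ecc_S(x)-\ecc_S(x')\| \leq m/5$---exceeds $2^{m-2}$, since any close pair contributes at most two vertices to the set of ``bad'' (non-uniquely-decodable) messages.

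For each fixed pair, a Chernoff/entropy bound on the tails of $\mathrm{Bin}(2m,1/2)$ gives $\Pr{\|\ecc_S(x)-\ecc_S(x')\| \leq m/5} \leq p$ with $p \leq 2^{-2m(1-H(1/10))}$. Hence the expected number of close pairs is $\mu \leq \binom{2^m}{2} p$, which is exponentially (in $m$) smaller than the threshold $2^{m-2}$. The crux is therefore to obtain a tail bound on the probability that the number of close pairs exceeds $2^{m-2}$, sharp enough to absorb a union bound over the $\binom{n}{2m}$ choices of $S$. Under the hypothesis $n < (m/2)\cdot 2^{2^{m-4}}$ one has $\log_2 \binom{n}{2m} \leq 2m \log_2 n \lesssim m \cdot 2^{m-3}$, so the required per-$S$ failure probability is roughly $\exp(-\Omega(m \cdot 2^m))$---i.e., doubly exponential in~$m$.

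To obtain this concentration I would exploit a structural observation about the close-pair indicators $I_{xx'} = \mathbf{1}[\|\ecc_S(x)-\ecc_S(x')\| \leq m/5]$. For two \emph{distinct} pairs $\{x,x'\} \neq \{y,y'\}$, the indicators $I_{xx'}$ and $I_{yy'}$ turn out to be pairwise independent: conditioning on any shared codeword $\ecc_S(z)=c$, the probability that another random codeword lies within Hamming distance $m/5$ of $c$ equals $p$ regardless of $c$, and the remaining codewords involved in the two indicators are independent. A similar conditioning argument, applied inductively, shows that $\Ex{\prod_{i=1}^{k} I_{e_i}} = p^k$ whenever the edges $e_1,\ldots,e_k$ form a forest in the graph on $\{0,1\}^m$ whose edges are the close pairs. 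Feeding this forest-structured near-independence into a moment-method bound (or a Janson/Suen-type inequality on the ``shared-vertex'' dependency graph of the $\binom{2^m}{2}$ pairs) should upgrade the polynomial Chebyshev estimate to the desired doubly-exponential tail bound.

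The main obstacle I anticipate is precisely this concentration step: a naive Chebyshev application using only pairwise independence gives only $2^{-\Theta(m)}$ failure probability per~$S$, which falls far short of the doubly-exponential target. Pushing the bound down will require a careful accounting of the ``non-forest'' (cyclic) contributions to the higher moments, exploiting that such configurations are relatively rare in the sparse regime $\mu \ll \binom{2^m}{2}$. Once the per-$S$ concentration is established, a straightforward union bound over $S$, together with the elementary estimate $\binom{n}{2m} \leq \bigl(en/(2m)\bigr)^{2m}$, completes the argument.
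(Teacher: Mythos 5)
Your setup matches the paper's: a uniformly random $\ecc$, the observation that failure of half-unique-decoding for $\ecc_S$ forces at least $2^{m-2}$ close pairs (equivalently, fewer than $2^{m-1}$ isolated vertices in the graph $G_S$ on $\{0,1\}^m$ whose edges are the close pairs), and the key fact that the close-pair indicators factorize exactly, $\Ex{\prod_{e\in F} I_e}=p^{|F|}$, whenever the edges of $F$ form a forest. But you then aim for an upper-tail concentration bound on the \emph{number} of close pairs, and you yourself identify that this is where the argument is incomplete: Chebyshev gives only $2^{-\Theta(m)}$ per $S$, and controlling higher moments of the edge count requires handling the cyclic configurations, for which the forest factorization gives you nothing. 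This is a genuine gap---upper tails for edge counts under this kind of dependency are exactly the hard case for Janson/Suen-type machinery, and the proposal does not explain how the non-forest contributions to the higher moments would be absorbed to reach a doubly exponential bound.

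The paper closes the gap with a combinatorial step rather than an analytic one. If $G_S$ has fewer than $2^{m-1}$ isolated vertices, then it has at most $2^m-2^{m-2}$ connected components, so a spanning forest of $G_S$ has at least $2^{m-2}$ edges; hence $G_S$ \emph{contains some specific forest $F$ with exactly $2^{m-2}$ edges}. One then union-bounds over all pairs $(S,F)$: by the forest factorization you already have, $\Pr{F\subseteq G_S}\le p^{2^{m-2}}$, which is doubly exponentially small, while the number of forests with $2^{m-2}$ edges on $2^m$ vertices (bounded via Cayley's formula times a binomial coefficient) and the number of sets $S$ (controlled by the hypothesis $n<\tfrac{m}{2}\cdot 2^{2^{m-4}}$) contribute only matching doubly exponential factors. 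In other words, the first-moment method applied to \emph{forests} rather than to \emph{edges} is what converts your forest-independence observation into the required failure probability; no upper-tail inequality for the edge count is needed. Replacing your concentration step with ``few isolated vertices $\Rightarrow$ a large forest is contained in $G_S$'' plus this union bound makes the rest of your outline go through.
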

\begin{proof}
We prove existence of $\ecc$ using the probabilistic
method, by showing that a uniformly random function
$\ecc : \{0,1\}^m \to \{0,1\}^n$ has the property with
positive probability. To do so, we need to estimate the
probability, for a given set $S$, that 
$\ecc_S$ fails to have half-unique-decoding
radius $\frac{m}{10}$. 

Define a graph $G_S$ with vertex set $\{0,1\}^m$ by 
drawing an edge between every two vertices $x,x'$
such that $\| \ecc_S(x) - \ecc_S(x') \| \leq \frac{m}{5}$.
The event
that $\ecc_S$ has half-unique-decoding
radius $\frac{m}{10}$ corresponds precisely to
the event that $G_S$ has at least $2^{m-1}$ 
isolated vertices. When this event does not 
happen, the number of connected components in
$G_S$ is at most $2^m - 2^{m-2}$, so a spanning
forest of $G_S$ has at least $2^{m-2}$ edges.

Our plan is to bound---for every set $S \subseteq [n]$ of size
$2m$ and every forest $F$ with $2^{m-2}$ edges---the 
probability that $G_S$ contains 
all the edges of $F$. Summing over $S$ and $F$ we will
find the sum is less than 1, which implies, by the
union bound, that with positive probability over the
random choice of $\ecc$ no such pair $(S,F)$ exists.
By the arguments in the preceding paragraph, it follows
that when no such pair $(S,F)$ exists the half-unique-decoding
radius of $\ecc_S$ is $\frac{m}{10}$ for every $S$ of size
$2m$, yielding the lemma.

To begin, let us fix $x,x' \in \{0,1\}^m$
and $S \subseteq [n]$ with $|S|=2m$, and let us estimate
the probability that
$\| \ecc_S(x) - \ecc_S(x') \| \leq \frac{m}{5}$.
The strings
$\ecc_S(x)$ and $\ecc_S(x')$ are independent
uniformly-random binary strings of length $2m$.
The number of binary strings within Hamming distance
$\frac{m}{5}$ of $\ecc_S(x)$ is bounded above by
 $2^{(1+o(1)) \cdot H(1/10) \cdot 2m}$, 
where $H(p)$ denotes the binary entropy function
$-p \log_2(p) - (1-p) \log_2(1-p)$.
Using the fact that $2 H(\frac{1}{10}) < 0.95$
we can conclude that for large enough $m$, fewer
than $2^{(0.95)m}$ binary strings belong to the
Hamming ball of radius $\frac{m}{5}$ around
$\ecc_S(x)$. Hence the probability that 
$\ecc_S(x')$ is one of these strings is less
than $2^{-m/20}$. If $F$ is the edge set of a 
forest on vertex set $\{0,1\}^m$, then the
random variables $\ecc_S(x) - \ecc_S(x')$
are mutually independent 
as $(x,x')$ ranges over the edges of $F$.
Consequently the probability that all the
edges of $F$ are contained in $G_S$ is 
less than $\big( 2^{-m/20} \big)^{|F|}$.

Let $N = 2^m$. The number of spanning trees
of an $N$-element vertex set is $N^{N-2}$
\citep{Borchardt,Cayley}
and the number of forests with $N/4$ edges
contained in any one such tree is 
$\binom{N}{N/4}$. Thus, the number of 
pairs $(S,F)$ where $S \subseteq [n]$ 
has $2m$ elements and $F$ is the edge
set of a forest with vertex set $\{0,1\}^m$
is bounded above by $\binom{n}{2m}
\binom{N}{N/4} N^{N-2}$. Applying the
union bound, we conclude that the
probability of failure for our construction
is bounded above by
\[
\binom{n}{2m} \binom{N}{N/4} N^{N-2} \big( 2^{-m/20} \big)^{N/4}
<
\big( \tfrac{2n}{m} \big)^{2m} 2^N N^{N} \big( 2^{-m/20} \big)^{N/4},
\]
where we have used the inequalities $\binom{n}{k} \leq 
\big( \tfrac{4n}{k} \big)^{k}$ (valid for all $0 \leq k \leq n$)
and $\binom{N}{N/4} \leq 2^N$ (valid for all $N$).
The base-2 logarithm of the probability of failure is
bounded above by
\[
2m [1 + \log(n) - \log(m) ] + N \left[ 1 + \log(N) - \tfrac{m}{80} \right].
\]
(All logs are base 2.) Substituting $N = 2^{m-2}$ and
rearranging terms, we find that this expression is negative 
(i.e., the probability of failure is strictly less than 1) when 
\[
\log(n) < \log(m) - 1 + \frac{2^{m-3}}{m} \left[ \frac{79m}{80} - 1 \right] 
< \log(m) - 1 + 2^{m-4},
\]
provided $m > 2$. This inequality is satisfied when 
$n < \frac{m}{2} \cdot 2^{2^{m-4}}$, which completes
the proof.
\end{proof}

We now continue with the construction of the permutation
distribution $\npde$. Let 
$m = \left\lfloor \frac12 n \, \ptz(n) \right\rfloor$.
By \lemref{lem:coding} there exists a 
function $\ecc : \{0,1\}^m \to \{0,1\}^{n/8}$
such that for all $S \subseteq [n]$ with $|S| = 2m$,
the half-unique-decoding radius of $\ecc_S$ is 
$\frac{m}{10}$. Let us choose one such function $\ecc$
for the remainder of the construction. Define an
{\em $\ecc$-augmented circuit} to be a circuit
constructed from the usual {\sc and, or, not} gates
along with $n/8$ additional types of gates that 
take an $m$-bit input $x$ and output one of the bits
of $\ecc(x)$.
By \lemref{lem:very-hardf} there exists a 
function $\hardf : \{0,1\}^m \to \big[ \frac{n}{2} \big]$
such that no $\ecc$-augmented circuit of size $s(n) < 2^{m}/(4 m^2 n)$ 
computes the value of $\hardf$ correctly on more than $\frac{4}{n}$ 
fraction of inputs. Let us choose one such function and
denote it by $\hardf$ for the remainder of the construction.
(To justify the application of \lemref{lem:coding},
note that our assumption that $n \, \ptz(n) / \log(n) \to \infty$
implies $\frac{n}{8} < 2^{2^{m-4}}$ for all sufficiently large $n$.)
Armed with the functions $\hardf$ and $\ecc$ we are ready
to present the construction of $\npde$.

\begin{algorithm}[H]
\caption{Sampling procedure for $\npde$}
\label{alg:npde}
\begin{algorithmic}[1]
\State 
Sample $x \in \{0,1\}^m$ uniformly at random.
\State
Let $y = \ecc(x) \in \{0,1\}^{n/8}$.
\For{$i = 1,\ldots,\frac{n}{8}$}
\If{$y_i=1$}
\State Swap the items in positions $\frac{3n}{8}+i$
and $\frac{n}{4}+i$.
\Else
\State Leave the permutation unchanged.
\EndIf
\EndFor
\State Swap the items in positions $n$ and $\frac{n}{2} + \hardf(x)$.
\end{algorithmic}
\end{algorithm}

The corresponding algorithm $\algoe$ works as follows.
\begin{algorithm}[H]
\caption{Algorithm $\algoe$}
\begin{algorithmic}[1]
\State 
Observe the first $\frac{n}{2}$ elements of the input sequence.
\State \label{step:algoe2}
Let $J$ denote the set of items with arrival times in the 
interval $\big( 
\frac{n}{4} , \frac{n}{2} \big]$, i.e.\ 
$J = \natperm^{-1} \big( \frac{n}{4} , \frac{n}{2} \big].$
\State 
Let $j_1,\ldots,j_{n/4}$ denote a listing of the elements of $J$
in order of decreasing value.
\For{$\ell = 1,\ldots,2m$}
\If{$\natperm(j_\ell) \leq \frac{3n}{8}$}
\State Set $\hat{y}_\ell = 1$ and $i_\ell = \natperm(j_\ell) - \frac{n}{4}$.
\Else
\State Set $\hat{y}_\ell = 0$ and $i_\ell = \natperm(j_\ell) - \frac{3n}{8}$.
\EndIf
\EndFor
\State Set $S = (i_1,i_2,\ldots,i_{2m})$.
\State Find the $x \in \{0,1\}^m$ that minimizes $\| \ecc_S(x) - \hat{y} \|$,
breaking ties arbitrarily.
\State Select the item that arrives at time $t = \frac{n}{2} + g(x)$.
\end{algorithmic}
\end{algorithm}

\begin{lemma} \label{lem:npde}
If the adversary's ordering $\advperm$ is
a decodable ordering, then 
$\gval[\npde](\algoe,\advperm) \geq \frac12$.
On the other hand, 
% If $\revperm$ denotes the permutation
% that lists the items
% in order of increasing value (i.e., $\revperm(i) = n-i$), 
for any algorithm $\algop$ whose stopping rule
can be computed by circuits of size $s(n) = 2^{n \, \ptz(n) / 4}$,
we have $\gval[\npdacf](\ast,\revperm) \leq 4/n$.
\end{lemma}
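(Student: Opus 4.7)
\textbf{First part.} Fix a decodable $\advperm$. The key structural observation is that the middle swap loop in Algorithm~\ref{alg:npde} only permutes items within the positions $(n/4, n/2]$, and the final $\hardf$-swap touches only positions $n$ and $n/2 + \hardf(x)$, both outside that range. Consequently, the set of items observed at positions $(n/4, n/2]$ equals $\{n/4 + 1, \ldots, n/2\}$ regardless of $x$. A direct case analysis of whether $j_\ell$ lies in $(3n/8, n/2]$ or $(n/4, 3n/8]$ shows that $i_\ell$ depends only on $j_\ell$, that $\hat{y}_\ell = y_{i_\ell}$ in the former case, and that $\hat{y}_\ell = 1 - y_{i_\ell}$ in the latter. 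Property~\ref{d2-property} of a decodable ordering then guarantees that fewer than $2m/40 = m/20$ of the top-$2m$ items $j_\ell$ lie in $(n/4, 3n/8]$, so $\hat{y}$ differs from $\ecc_S(x)$ in at most $m/20$ coordinates when both are viewed as length-$2m$ sequences. (If the underlying set of indices in $S$ has cardinality strictly less than $2m$ due to collisions, one extends $S$ to a $2m$-subset; the sequence Hamming distance is always no smaller than the corresponding set Hamming distance, so the error bound persists.)

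By \lemref{lem:coding} applied to the resulting $2m$-set, $\ecc_S$ has half-unique-decoding radius $m/10 > m/20$, and by \lemref{lem:half-uniq-decoding} the nearest-codeword decoder therefore recovers the uniformly random $x \in \{0,1\}^m$ with probability at least $1/2$. Upon correct recovery of $x$, $\algoe$ selects the item arriving at time $n/2 + \hardf(x)$, which by construction of $\natperm(x)$ is item $n$, the global maximum under any decodable ordering.

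\textbf{Second part.} Let $\algop$ be any algorithm whose stopping rule is computed by circuits of size $s(n) = 2^{n\ptz(n)/4}$. The plan is to use $\algop$ as the core of an $\ecc$-augmented circuit that computes $\hardf$ on a $\gval[\npde](\algop,\revperm)$ fraction of inputs. On input $x \in \{0,1\}^m$, the circuit uses one $\ecc$-gate per output bit to compute $y = \ecc(x)$, then executes the middle swap loop of Algorithm~\ref{alg:npde} to produce $\natperm'(x)$ (the permutation obtained by omitting the final $\hardf$-swap), simulates $\algop$ on the input sequence $\natperm'(x)\revperm$, and outputs $t - n/2$ where $t$ is the observed selection time. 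The total circuit size is $\operatorname{poly}(n) \cdot s(n)$, which is strictly below $2^m/(4m^2 n)$ for all sufficiently large $n$ because $m = \lfloor n\ptz(n)/2 \rfloor$ and $n\ptz(n)/\log n \to \infty$.

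The correctness of the reduction rests on a comparison-based indistinguishability argument: the two permutations $\natperm(x)\revperm$ and $\natperm'(x)\revperm$ agree in every position other than $n$ and $n/2 + \hardf(x)$, and at position $n/2 + \hardf(x)$ the arriving item is the maximum among the first $n/2+\hardf(x)$ items seen in both sequences, for $\natperm(x)\revperm$ because it is item $n$ (the global maximum under $\revperm$), and for $\natperm'(x)\revperm$ because it is item $n/2 + \hardf(x)$, which under $\revperm$ (ranking items by index) is the largest-valued item among $\{1, \ldots, n/2+\hardf(x)\}$. Hence the relative ordering of values up to and including time $n/2+\hardf(x)$ is identical in the two sequences, so $\algop$ stops at the same time in both. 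Whenever $\algop$ correctly selects on $\natperm(x)\revperm$ (which forces the stopping time to be $n/2+\hardf(x)$, the arrival time of item $n$), the simulated run on $\natperm'(x)\revperm$ stops at the same time and the circuit outputs $\hardf(x)$. If $\gval[\npde](\algop,\revperm) > 4/n$, this would contradict the hardness-on-average of $\hardf$ guaranteed by \lemref{lem:very-hardf}. The main obstacle is formalizing the indistinguishability claim precisely under $\revperm$ and carefully accounting for the sizes of the $\ecc$-augmented circuit so that the hardness bound for $\hardf$ applies.
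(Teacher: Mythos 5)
Your proposal is correct and follows essentially the same two-step route as the paper: decodability bounds the number of corrupted positions in $\hat{y}$ so that \lemref{lem:coding} and \lemref{lem:half-uniq-decoding} yield correct decoding of $x$ with probability at least $\frac12$, and the hardness half simulates $\algop$ on $\natperm'(x)\revperm$ inside an $\ecc$-augmented circuit and invokes \lemref{lem:very-hardf}. The only cosmetic difference is the error accounting in the first part: the paper charges each misplaced item with two errors to reach a corruption bound of exactly $\frac{m}{10}$, whereas you charge one bit error per misplaced item ($\frac{m}{20}$ total) and handle the resulting index collisions in $S$ separately; both land within the half-unique-decoding radius.
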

\begin{proof}
Note that a permutation $\natperm$ sampled from $\npde$
always maps the set $\big( \frac{n}{4}, \, \frac{n}{2} \big]$ to itself,
though it may permute the elements of that set. Consequently, when one
runs $\algoe$ on an input sequence ordered using $\natperm$ in the
support of $\npde$, it sets $J = \big( \frac{n}{4}, \, \frac{n}{2} \big]$.
The definition of a decodable permutation now implies that
the fraction of items in $\{j_1,\ldots,j_{2m}\}$ that belong
to $\big( \frac{3n}{8}, \, \frac{n}{2} \big]$ is at least 
$\frac{39}{40}$; let us call the remaining items in 
$\{j_1,\ldots,j_{2m}\}$ ``misplaced''. 
For each $j_\ell$ that is not misplaced, $\algoe$ correctly
deduces the corresponding value $y_\ell$ {\em unless} item
$j_{\ell} - \frac{n}{8}$ also belongs to $\{j_1,\ldots,j_{2m}\}$
(in which case it is a misplaced item). Hence each misplaced
item contributes to potentially two errors, meaning that 
at most $\frac{1}{20}$ fraction of the bits in $\hat{y}$ 
differ from the corresponding bit in $\ecc(x)$. These strings
have length $2m$, so we have shown their Hamming distance
is at most $\frac{m}{10}$. \lemref{lem:half-uniq-decoding}
now ensures that with probability at least $\frac12$,
$\algoe$ decodes the appropriate value of $x$. When this
happens, it correctly selects item $n$ from the second half
of the input sequence. Our assumption that $\advperm$ is 
decodable means that item $n$ is the item with maximum
value, which completes the proof that 
$\gval[\npde](\algoe,\advperm) \geq \frac12$.

To prove the second statement in the lemma,
we can use $\algop$ to guess the value of $\hardf(x)$ for any
input $x \in \{0,1\}^{m}$ by the following simulation
procedure. First, define a permutation $\natperm'(x)$ by 
running Algorithm~\ref{alg:npde} with random string $x$,
omitting the final step of swapping the items in positions
$n$ and $n/2 + \hardf(x)$; note that this means that $\natperm'(x)$,
unlike $\natperm(x)$, can be constructed from input $x$ by an
$\ecc$-augmented circuit of polynomial size.
Now simulate $\algo$
on the input sequence $\natperm'(x)$, observe the
time $t$ when it selects an item, and output
$t - \frac{n}{2}$. The $\ecc$-augmented
circuit complexity of this simulation
procedure is at most $\operatorname{poly}(n)$ times
the $\ecc$-augmented circuit complexity of the stopping rule implemented
by $\algo$, and the fraction of inputs $x$ on which it guesses
$\gval(x)$ correctly is precisely $\gval(\algo,\idpd)$. 
(To verify this last statement, note that $\algo$  
makes its selection at time $t = \frac{n}{2} + \hardf(x)$ when
observing input sequence $\natperm(x)$ {\em if and only if}
if also makes its selection at time $t$ when observing
input sequence $\natperm'(x)$, because the two input 
sequences are indistinguishable to comparison-based
algorithms at that time.)
Hence,
if $\gval(\algo,\idpd) > \frac{4}{n}$ then the stopping rule
of $\algo$ cannot be implemented by circuits of size $2^{m/2} = 
2^{n \, \ptz(n) / 4}$.
\end{proof}
\subsection{Proofs deferred from \secref{extension:sec}}
\subsubsection{Full proof of Theorem~\ref{kunif:thm}\secref{kunif:sec}}
We start by proving the following lemmas which turn out to be critical for  the analysis of $\alg{\items}{k}{q}$ under non-uniform permutation distributions. In fact, these lemmas capture the fact that if membership random variables of different items for the random set $\sam$ (and $\nsam$) are almost pairwise independent (rather than mutually independent), then we still preserve enough of probabilistic properties that are needed in the analysis of algorithm proposed by~\cite{Kleinberg05}.
\begin{lemma}
\label{k-unif:lemma1}
Suppose $\p$ is drawn from a permutation distribution satisfying $(p,q,\delta)$-BIP for $p\geq 2$ and $\sam\triangleq\{x\in \items:\bind(x)\leq \thresh(q)\}$ where $\thresh(q)$ is independently drawn from $\textrm{Binom}(q,1/2)$. Then for any $T\subseteq \items$ such that $\delta\leq \frac{1}{\sqrt {\lvert T \rvert} }$  we have
\begin{enumerate}
\item $\ex{\lvert T\cap \sam\rvert}\in [(1-\delta)\lvert T\rvert/2,(1+\delta)\lvert T\rvert/2]$
\item $\ex{\valfun(T\cap\sam)}\in[(1-\delta)\valfun(T)/2,(1+\delta)\valfun(T)/2]$
\item $\pr{\lvert T\cap \sam\rvert\geq\lvert T\rvert/2+\alpha}\leq \frac{\lvert T\rvert}{2\alpha^2} $
\end{enumerate}
\begin{proof}
For $x\in\items$, let $Y_x$ be a 0/1 variable indicating if $x \in S$. We have 
\begin{align*}
\Ex{Y_x} &= \sum_{i=1}{q} \Pr{\text{$x$ is in block $i$}} \Pr{\thresh_b \geq i} 
\\ & \geq (1-\delta) \frac{1}{q} \sum_{i=1}{q} \Pr{\thresh_b \geq i} \\
& = (1-\delta) \frac{1}{q} \Ex{\thresh_b} \\
& = (1 - \delta) \frac{1}{q} \frac{q}{2} \\
& = \frac{1 - \delta}{2}.
\end{align*} 
Analogously, we get $\Ex{Y_x} \leq \frac{1 + \delta}{2}$.

Claims 1 and 2 now follow from linearity of expectation, e.g., $\Ex{\lvert T\cap \sam\rvert} = \Ex{\sum_{x \in T} Y_x} \geq \frac{1-\delta}{2} \lvert T \rvert$.

To show Claim 3, we use that for $x \neq x'$, we have $\Ex{Y_x Y_{x'}} \leq \frac{1 + \delta}{4}$. This implies $\Ex{\lvert T \cap \sam\rvert^2} = \Ex{\sum_x Y_x} + \Ex{\sum_{x \neq x'} Y_x Y_{x'}} \leq \Ex{\lvert T\cap \sam\rvert} + \lvert T \rvert(\lvert T \rvert - 1) \frac{1 + \delta}{4} \leq \Ex{\lvert T\cap \sam\rvert}$.

By Markov's inequality, we get
\[
\Pr{\lvert T\cap \sam\rvert\geq\lvert T\rvert/2+\alpha} \leq \Pr{\left( \lvert T\cap \sam\rvert - \lvert T\rvert/2 \right)^2 \geq \alpha^2} \leq \frac{1}{\alpha^2} \Ex{\left( \lvert T\cap \sam\rvert - \lvert T\rvert/2 \right)^2} \enspace.
\]

Using linearity of expectation and the bounds obtained so far, we get 
\begin{align*}
\Ex{\left( \lvert T\cap \sam\rvert - \lvert T\rvert/2 \right)^2} & = \Ex{\left( \lvert T\cap \sam\rvert \right)^2} - \lvert T\rvert \Ex{\lvert T\cap \sam\rvert} + \left(\frac{\lvert T\rvert}{2} \right)^2 \\
& \leq \frac{1 + \delta}{2} \lvert T \rvert^2 - \frac{1 + \delta}{4} \lvert T \rvert - (\lvert T \rvert - 1) \Ex{\lvert T\cap \sam\rvert} \\
& \leq \frac{1 + \delta}{2} \lvert T \rvert^2 - \frac{1 + \delta}{4} \lvert T \rvert - (\lvert T \rvert - 1) \frac{1 - \delta}{2} \lvert T \rvert \\
& \leq \delta \lvert T \rvert^2 + \frac{1-3\delta}{4}\lvert T\rvert\leq \frac{\lvert T\rvert}{2}  \enspace.
\end{align*}
where the last inequality is true because of $\delta\leq \frac{1}{\sqrt {\lvert T \rvert} }$.
\end{proof}
\end{lemma}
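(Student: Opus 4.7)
The plan is to reduce all three claims to computing first and second moments of the indicator variables $Y_x = \mathbf{1}\{x \in \sam\}$ for $x \in \items$, where randomness comes jointly from $\p$ and the independent threshold $\thresh(q)$. Since $\thresh(q)$ is independent of $\p$, conditioning and summing gives
\[
\ex{Y_x} \;=\; \sum_{i=1}^q \pr{\bind(x) \in B_i}\,\pr{\thresh(q) \geq i}.
\]
The $(1,q,\delta)$-BIP, which follows from $(p,q,\delta)$-BIP with $p\geq 2$ by marginalization (the same argument as in Lemma~\ref{lemma:block-ind-shorter-tuples}), lower bounds each block probability by $(1-\delta)/q$; combining with $\ex{\thresh(q)} = q/2$ immediately yields $\ex{Y_x} \geq (1-\delta)/2$. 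For the matching upper bound, I would observe that the uniform lower bound on every block probability together with the constraint $\sum_i \pr{\bind(x) \in B_i} = 1$ forces each block probability to lie in $[(1-\delta)/q,\ 1/q + \delta]$, and a short optimization (pushing the ``extra'' mass onto block~$1$, which maximizes the weight $\pr{\thresh(q) \geq i}$) gives $\ex{Y_x} \leq (1+\delta)/2$. Linearity of expectation, weighted by $v_x$ for Claim~2, then delivers Claims~1 and~2.

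For Claim~3 I would apply Chebyshev's inequality, which reduces the problem to bounding $\textrm{Var}(\lvert T \cap \sam \rvert)$. The key input is approximate pairwise independence of the $Y_x$'s, which is exactly what the $(2,q,\delta)$-BIP supplies. Namely, for $x \neq x'$, conditioning on $\thresh(q)$ gives
\[
\ex{Y_x Y_{x'}} \;=\; \sum_{i,j=1}^q \pr{\bind(x)\in B_i,\ \bind(x')\in B_j}\,\pr{\thresh(q)\geq \max(i,j)},
\]
and a two-sided optimization analogous to the one-dimensional case, together with the identity $\sum_{i,j}\pr{\thresh(q)\geq \max(i,j)} = \ex{\thresh(q)^2} = q^2/4 + q/4$, yields $\ex{Y_x Y_{x'}} \leq 1/4 + O(\delta) + O(1/q)$. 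Expanding gives $\textrm{Var}(\lvert T \cap \sam \rvert) \leq \lvert T\rvert/4 + O(\lvert T \rvert^2 \delta)$, which under the hypothesis $\delta \leq 1/\sqrt{\lvert T\rvert}$ is $O(\lvert T \rvert)$; Chebyshev then produces the stated bound with leading factor $\lvert T \rvert /(2\alpha^2)$.

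The main obstacle will be extracting clean two-sided bounds from a one-sided BIP hypothesis: because BIP only lower-bounds block probabilities, the corresponding upper bounds must be obtained indirectly by exploiting that the probabilities sum to one. Careful optimization is needed both in the first-moment step (to achieve the clean $(1+\delta)/2$ constant rather than a weaker pointwise bound like $1/q + \delta$) and in the second-moment step (to ensure the variance estimate matches the clean constant in Claim~3). I do not expect any conceptual difficulty, but the constant-chasing in the pairwise step will require some care with the second moment of the binomial threshold $\thresh(q)$.
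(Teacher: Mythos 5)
Your proposal follows essentially the same route as the paper's proof: indicator variables $Y_x$, first moments obtained by conditioning on the independent binomial threshold and invoking the marginalized $(1,q,\delta)$- and $(2,q,\delta)$-BIP, and a second-moment inequality for Claim~3 --- and you are in fact more careful than the paper about the one-sidedness of the BIP, since the paper dismisses the upper bound on $\Ex{Y_x}$ with ``analogously,'' whereas your argument via $\sum_i \Pr{\pi(x)\in B_i}=1$ and pushing the excess mass onto the heaviest weight is the honest way to obtain $(1+\delta)/2$. The one point to fix is that Claim~3 measures deviation from $\lvert T\rvert/2$ rather than from $\Ex{\lvert T\cap \sam\rvert}$, so plain Chebyshev does not directly apply: you must either bound $\Ex{(\lvert T\cap \sam\rvert-\lvert T\rvert/2)^2}$ as the paper does, or add the bias term $(\Ex{\lvert T\cap \sam\rvert}-\lvert T\rvert/2)^2 \le \delta^2\lvert T\rvert^2/4 \le \lvert T\rvert/4$ to the variance before applying Markov to the squared deviation.
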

\begin{lemma}
\label{k-unif:lemma2}
Suppose $\p$ is drawn from a permutation distribution satisfying $(p,q,\delta)$-BIP for some $p\geq 2$ and $\sam$ is as defined in Lemma~\ref{k-unif:lemma1}. Let $Y_1$ be the (possibly negative) random variable such that $(k/2)^{\textrm{th}}$ item in the sorted-by-value list of items in $\sam$ is the $(k+Y_1)^{\textrm{th}}$ in the sorted-by-value list of items in $\items$. Then $\ex{\lvert Y_1 \rvert }=O(\sqrt k)$.
\end{lemma}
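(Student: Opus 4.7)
The plan is to translate tail probabilities for $Y_1$ into deviation events for simple counting random variables that Lemma~\ref{k-unif:lemma1}(3) already controls. Sort the items in decreasing order of value as $x_1, \ldots, x_n$, let $T_j = \{x_1, \ldots, x_j\}$, and set $N_j \triangleq |T_j \cap \sam|$. If $Q_1$ denotes the position (in the global value-sorted list) of the $(k/2)^{\text{th}}$ best item of $\sam$, then $Y_1 = Q_1 - k$. Because $N_j$ is non-decreasing and changes by at most one per step,
\[
\{Y_1 \geq t\} = \{N_{k+t-1} \leq k/2 - 1\}, \qquad \{Y_1 \leq -t\} = \{N_{k-t} \geq k/2\},
\]
so each tail event forces $N_j$ to deviate from the ``ideal'' centering $j/2$ by at least $\Theta(t)$: on the upper side, $N_{k+t-1} - (k+t-1)/2 \leq -(t-1)/2$, and on the lower side, $N_{k-t} - (k-t)/2 \geq t/2$.

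Next I invoke Lemma~\ref{k-unif:lemma1}(3). The second-moment estimate at the heart of its proof is symmetric, so it bounds both tails: $\pr{|N_j - j/2| \geq \alpha} \leq j/(2\alpha^2)$ whenever the hypothesis $\delta \leq 1/\sqrt{j}$ applies. Under the theorem's assumption $\delta \leq 1/\sqrt{k}$ this holds for every $j = O(k)$. Plugging in $j = k+t-1$ with $\alpha = (t-1)/2$, and symmetrically $j = k-t$ with $\alpha = t/2$, yields $\pr{|Y_1| \geq t} = O(k/t^2)$ for $t$ in the relevant range.

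The expectation is then obtained via the layer-cake identity, split at $t = \sqrt{k}$:
\[
\ex{|Y_1|} = \sum_{t \geq 1} \pr{|Y_1| \geq t} \;\leq\; \sqrt{k} + \sum_{t > \sqrt{k}} \frac{C k}{t^2} \;=\; O(\sqrt{k}) + O\!\left(k / \sqrt{k}\right) \;=\; O(\sqrt{k}).
\]

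The main technical point is verifying that the hypothesis $\delta \leq 1/\sqrt{|T|}$ of Lemma~\ref{k-unif:lemma1}(3) is in force across every $|T|$ that arises. For $|T| \leq k$ this is automatic from $\delta \leq 1/\sqrt{k}$; for the range $t = \Omega(k)$ one can either re-run the pairwise second-moment computation directly (which is all that Lemma~\ref{k-unif:lemma1} actually consumes), or simply note that the tail beyond $t = k$ contributes only $O(\sqrt{k})$ thanks to the $1/t^2$ decay, so a slight weakening of the per-term bound there does not affect the final $O(\sqrt{k})$. Apart from this bookkeeping, I do not anticipate a deeper obstacle, and in particular the argument appeals only to pairwise near-independence, in line with the fact that Lemma~\ref{k-unif:lemma1} is stated under the very mild assumption $p \geq 2$.
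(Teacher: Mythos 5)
Your proposal follows the paper's proof quite closely: the same tail-sum decomposition $\ex{|Y_1|}=\sum_{t\geq 1}\pr{Y_1\geq t}+\sum_{t\geq1}\pr{Y_1\leq -t}$, the same reduction of each tail event to a deviation event for $|T_j\cap\sam|$ on a prefix $T_j$ of the value ordering, the same Chebyshev-type bound from Lemma~\ref{k-unif:lemma1}(3), and the same split of the sum at $t=\sqrt{k}$. The lower tail is handled identically (the paper takes $T=T_{k-i}$ with $\alpha=i/2$, exactly as you do), and your observation that the estimate inside Lemma~\ref{k-unif:lemma1}(3) passes through $\Ex{(|T\cap\sam|-|T|/2)^2}$ and is therefore two-sided is a legitimate, slightly cleaner substitute for the paper's device of applying the one-sided bound to $\nsam$ instead of $\sam$.

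The genuine gap is the upper tail for large $t$. Your identity $\{Y_1\geq t\}=\{N_{k+t-1}\leq k/2-1\}$ is the correct one, but it forces you to apply the deviation bound to a set whose size $k+t-1$ grows with $t$, and there the hypothesis $\delta\leq 1/\sqrt{|T|}$ genuinely fails (the theorem only supplies $\delta\leq k^{-1/2}$). Neither of your patches closes the hole: re-running the pairwise second-moment computation gives $\Ex{(N_j-j/2)^2}\lesssim \delta j^2+j$, and for $j=k+t\gg k$ the term $\delta j^2$ dominates, so Chebyshev with deviation $\Theta(t)$ yields only $O(\delta)=O(k^{-1/2})$ per term rather than $O(k/t^2)$, and the sum over $t$ up to $n-k$ is $O(n/\sqrt{k})$. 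Even if the second moment were $O(j)$ in that range, the per-term bound for $t>k$ is $O((k+t)/t^2)=O(1/t)$, whose sum is $O(\log n)$, again not $O(\sqrt{k})$ in general. You should also know that the paper escapes this regime only by asserting that $Y_1\geq i$ forces $|T_k\cap\nsam|\geq k/2+i$ for the \emph{fixed} set $T_k$ of the top $k$ items, which keeps $|T|=k$ and makes $\sum_i k/(2i^2)$ converge; but that implication does not follow ($Y_1\geq i$ only yields $|T_k\cap\nsam|\geq k/2+1$, and the deviation $k/2+i$ lives in $T_{k+i-1}$, exactly as in your version). So the step you could not patch is the same step at which the paper's own argument is unsound; your write-up localizes the difficulty correctly, but as it stands it does not establish the claimed $O(\sqrt{k})$ bound for the positive tail.
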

\begin{proof}
We have $\ex{\lvert Y_1\rvert }=\sum_{i=1}^{\infty}\pr{\lvert Y_1\rvert\geq i}=\sum_{i=1}^{\infty} \pr{Y_1\geq i}+\sum_{i=1}^{\infty}\pr{Y_1\leq -i} $. Now we bound each of the terms separately. For a fixed $i$, look at the event $Y_1\leq -i$. This event is equivalent to the event that the number of items in $\sam$ among $k-i$ highest-valued items is at least $k/2$. Let us define $r\triangleq k-i$. Furthermore, let $T_r$ be the set of the $r$-highest valued items. Using Lemma~\ref{k-unif:lemma1} we have:
\begin{equation}
\pr{ Y_1 \leq -i}=\pr{\lvert T_r\cap\sam\rvert \geq r/2+i/2}\leq \frac{2(k-i)}{i^2}
\end{equation}
So we have 
\begin{align}
\sum_{i=1}^{\infty}\pr{Y_1 \leq -i}&\leq \sum_{i=1}^{\ceil{\sqrt k}}1+\sum_{i=\ceil {\sqrt k} +1}^{k}\frac{2(k-i)}{i^2}\leq 1+\sqrt k+2k\sum_{i=\ceil{\sqrt k}+1}^{\infty}\frac{1}{i^2}\nonumber\\
&\leq 1+\sqrt k+2k\int_{\sqrt k}^{\infty}\frac{1}{x^2}dx=3\sqrt k+1=O(\sqrt k)
 \end{align}

Now, let's consider the event $Y_1\geq i$. This event implies that number of items of $\nsam$ among the $k$ highest valued items is at least $k/2+i$. Again, using Lemma~\ref{k-unif:lemma1} we have:
\begin{equation}
\pr{ Y_1 \geq i}\leq\pr{\lvert T_{k}\cap\nsam\rvert \geq k/2+i}\leq \frac{k}{2i^2}
\end{equation}
and hence we have 
\begin{align}
\sum_{i=1}^{\infty}\pr{ Y_1 \geq i}&\leq \sum_{i=1}^{\ceil{\sqrt k}}1+\sum_{i=\ceil {\sqrt k} +1}^{\infty}\frac{k}{2i^2}\leq 1+\sqrt k+\frac{k}{2}\sum_{i=\ceil{\sqrt k}+1}^{\infty}\frac{1}{i^2}\nonumber\\
&\leq 1+\sqrt k+\frac{k}{2}\int_{\sqrt k}^{\infty}\frac{1}{x^2}dx=\frac{3}{2}\sqrt k+1=O(\sqrt k)
 \end{align}
 which completes the proof.
 \end{proof}

 Now we start proving the theorem. Basically, we prove for any fixed $k$ there exists a function $\epsilon(k,q,\delta)$, non-increasing w.r.t. $q$, such that $\alg{\items}{k}{q}$ is $\left(1-O(\frac{1}{k^{\frac{1}{3}}})-\epsilon(k,q,\delta)\right)$-competitive, and $\epsilon$ goes to $0$ as $q\rightarrow \infty$ and $\delta \rightarrow 0$ for a fixed $k$.
 First without loss of generality we modify values so that if the value is among  $k$ highest it remains the same, otherwise it is set to $0$. This doesn't change sum of the values of the $k$ highest items, and just weakly decreases the values of items picked by any algorithm. Now, run the algorithm with modified values. Let $\algset$ be the set of items picked by $\alg{\items}{k}{q}$ and $\opt$ be the subset of $k$ highest value items under $\sig$. Define $\sam\triangleq\{x\in \items:\pi(x)\leq \frac{\tau(q)}{q}n\}$ to be the set sampled before threshold and $\nsam\triangleq \items \backslash \sam$ be its complement. Suppose $v_0$ is the value of the ${\frac{k}{2}}^{\textrm{th}}$  highest valued item in $\sam$ (if $\lvert \sam\rvert<\frac{k}{2}$, set $v_0=0$).  Moreover, define the value function $\valfun(.)$ to be the sum of values of the input set of items under $\sig$. 
 
 Fixing $\sig$, we prove the claim by induction over $k$.  The case $k=1$ is exactly the case of a single secretary, which is analyzed in \secref{section:secretaryanalysis}. For general $k$, we first run $\alg{\items\cap\sam}{k/2}{\tau(q)}$ to give us $\algset\cap\sam$. Note that the ordering of arrivals of items in $\sam$ satisfies $(p,\tau(q),\delta)$-BIP. So, by induction hypothesis and conditioned on set $\sam$ we have
 \begin{align}
 \ex{\valalgsam|\sam}\geq \ex{\valoptsamt\left(1-O(\frac{1}{k^{\frac{1}{3}}})-\epsilon(k/2,\tau,\delta)\right)|\sam}
 \end{align}
We can lower-bound the right-hand side further as follows.
\begin{align}
&\ex{\valoptsamt\left(1-O(\frac{1}{k^{\frac{1}{3}}})-\epsilon(k/2,\tau,\delta)\right)|\sam}\geq\nonumber\\
& \ex{\valoptsamt|\sam} -\valfun(\opt)\left(O(\frac{1}{k^{\frac{1}{3}}})+\ex{\epsilon(k/2,\tau,\delta)|\sam}\right), 
\end{align}
and by taking expectation with respect to $\sam$ we have
\begin{align}
&\ex{\valalgsam} \geq \ex{\valoptsamt} -\valfun(\opt)\left(O(\frac{1}{k^{\frac{1}{3}}})+\ex{\epsilon(k/2,\tau,\delta)}\right)\label{eqaution17}
\end{align}
 Now suppose $\ind{.}$ is the indicator function. One can easily decompose $\epsilon(k/2,\tau,\delta)$ as follows.
 \begin{align}
 \label{equation18}
 \epsilon(k/2,\tau,\delta)&=\epsilon(k/2,\tau,\delta)\ind{\tau\geq q/4}+\epsilon(k/2,\tau,\delta)\ind{\tau<q/4}\leq \epsilon(k/2,q/4,\delta)+ \ind{\tau<q/4}
 \end{align}
 where the last inequality is true because $\epsilon(k/2,\tau,\delta)$ is non-increasing w.r.t. $\tau$. Now by taking expectations from both hand sides of  (\ref{equation18}), we have:
 \begin{align}
 \ex{\epsilon(k/2,\tau,\delta)}&\leq \epsilon(k/2,q/4,\delta)+ \pr{\tau<q/4}\nonumber\\
 &=\epsilon(k/2,q/4,\delta)+\pr{\tau<q/2(1-1/2)}\leq \epsilon(k/2,q/4,\delta)+e^{-\frac{q}{16}}\label{equation19}
 \end{align}
 where in the last inequality we used  Chernoff bound, as $\tau$ is drawn from $\textrm{Binom}(q,1/2)$.
 Now, fix $\eps'$. For a given $\eps'$ we have
 \begin{align}
 &\ex{\valfun([\opt\cap\sam]_{k/2})}=\ex{\valfun(\opt\cap\sam)\ind{\lvert\opt\cap\sam\rvert<k/2}}
 +\ex{\valfun([\opt\cap\sam]_{k/2})\ind{\lvert\opt\cap\sam\rvert\geq k/2}}\nonumber\\
 &\geq \ex{\valfun(\opt\cap\sam)\ind{\lvert\opt\cap\sam\rvert<k/2}}+\ex{\frac{k/2}{\lvert\opt\cap\sam\rvert}\valfun(\opt\cap\sam)\ind{\lvert\opt\cap\sam\rvert\geq k/2}}\nonumber\\
&\geq \ex{\valfun(\opt\cap\sam)\ind{\lvert\opt\cap\sam\rvert<k/2}}+\frac{1}{1+\eps'}\ex{\valfun(\opt\cap\sam)\ind{\frac{k}{2}(1+\eps')\geq\lvert\opt\cap\sam\rvert\geq k/2}}\label{k-unif:eq1}
\end{align}
Also, we have:
\begin{align}
&\frac{1}{1+\eps'}\ex{\valfun(\opt\cap\sam)\ind{\frac{k}{2}(1+\eps')\geq\lvert\opt\cap\sam\rvert\geq k/2}}\nonumber\\
&=\frac{1}{1+\eps'}\ex{\valfun(\opt\cap\sam)\ind{\lvert\opt\cap\sam\rvert\geq k/2}}-\frac{1}{1+\eps'}\ex{\valfun(\opt\cap\sam)\ind{\lvert\opt\cap\sam\rvert\geq \frac{k}{2}(1+\eps')}}\nonumber\\
&\geq \frac{1}{1+\eps'}\ex{\valfun(\opt\cap\sam)\ind{\lvert\opt\cap\sam\rvert\geq k/2}}-\valfun(\opt)\pr{\lvert\opt\cap\sam\rvert\geq \frac{k}{2}(1+\eps')}\nonumber\\
&\overset{(1)}{\geq}  \frac{1}{1+\eps'}\ex{\valfun(\opt\cap\sam)\ind{\lvert\opt\cap\sam\rvert\geq k/2}}-\frac{1}{k{\eps'}^2}\valfun(\opt)
\label{k-unif:eq2}
 \end{align}
in which (1) is true because of Lemma~\ref{k-unif:lemma1}. Combining (\ref{k-unif:eq1}) with (\ref{k-unif:eq2}) we have:
\begin{equation}
\ex{\valfun([\opt\cap\sam]_{k/2})}\geq \frac{1}{1+\eps'}\ex{\valoptsam}-\frac{1}{k{\eps'}^2}\valfun(\opt)\geq \ex{\valoptsam}-(\eps'+\frac{1}{k{\eps'}^2})\valfun(\opt)\label{equation22}.
\end{equation}
Finally, by combining (\ref{eqaution17}), (\ref{equation19}) and (\ref{equation22}) and setting $\eps'=\frac{1}{k^{\frac{1}{3}}}$, we have
\begin{equation}
\label{equation23}
\ex{\valalgsam} \geq \ex{\valoptsam}-\valfun(\opt)\left(O(\frac{1}{k^{\frac{1}{3}}})+e^{-\frac{q}{16}}+\epsilon(k/2,q/4,\delta)\right)
\end{equation}

Next, we try to lower-bound $\ex{\valalgnsam}$ by $\ex{\valoptnsam}$. Lets define random variable  $\algnumsam\triangleq\lvert\algset\cap\nsam\rvert$ to be number of items algorithm picked from $\nsam$. We have $\ex{\valalgnsam}=\sum_{x=0}^{k/2}\ex{\valalgnsam\ind{Q=x}}$.  Now we look at each term $\ex{\valalgnsam\ind{Q=x}}$, and we try to lower-bound it with $\ex{\valoptnsam\ind{Q=x}}$ for different values of $x$. Consider two cases:\\

%\begin{proof}
 \emph{Case 1, when $x<\frac{k}{2}:$} In this case $v_0>0$ and all items in $\sam$ with value more than $v_0$ are in $\opt$. We know the number of items in $\nsam$ that have value at least $v_0$ is $x$. If we look at items in $\opt\cap\nsam$, all items in $\algset\cap\nsam$ are also in $\opt\cap\nsam$ and in addition we have at most $k-(k/2+x)=k/2-x$ items in $\opt\cap\nsam$, all of which have value at most $v_0$. Hence, as the value of any item in $\algset\cap\nsam$ is at least $v_0$, the followings hold deterministically :
 \begin{align}
 &\valoptnsam-\valalgnsam=\valfun(\{x\in\opt\cap\nsam: \val{x}\leq v_0\})\nonumber\\
 &\leq \frac{\lvert \opt\cap\nsam\rvert-\lvert \mathcal{A}\cap\nsam\rvert}{\lvert  \mathcal{A}\cap\nsam\rvert}\valalgnsam\leq (\frac{k}{2x}-1)\valalgnsam
 \end{align}
 which implies $\valalgnsam\geq \frac{2x}{k}\valoptnsam$ when $x<k/2$. So for $x<k/2$,
 \begin{equation}
  \label{k-unif:eq6}
 \ex{\valalgnsam \ind{Q=x}}\geq \ex{\frac{2Q}{k}\valoptsam \ind{Q=x}}
 \end{equation}
 \emph{Case 2, when $x=\frac{k}{2}$}:  In this case either $v_0>0$, which implies at least there are $k/2$ items in $\opt\cap\sam$.  As algorithm also picks $k/2$ items  and so $\algset\cap\nsam=\opt\cap\nsam$ for which we are done. Otherwise, suppose $v_0=0$. We know the permutation distribution generating $\pi$ satisfies the $(p,q,\delta)$-BIP some $p\geq k$, and hence it satisfies $(k,q,\delta)$-BIP. So, based on Theorem~\ref{theorem-BIP} it also satisfies $(k,\delta+\frac{k^2}{q})$-UIOP. Roughly speaking, if you look at any subset of elements with cardinality at most $k$, their induced ordering is almost uniformly distributed (within an error of $\delta+\frac{k^2}{q}$). We know in this case algorithm picks $\frac{k}{2}$ items (all of them in $\opt\cap \nsam$), and in fact it picks the first $\frac{k}{2}$ elements of $\opt\cap \nsam$ in the ordering of elements in $\opt\cap \nsam$ induced by the permutation $\pi$. Suppose $X\triangleq\lvert\opt\cap\nsam\rvert-k/2$. Then 
 \begin{align}
 \ex{\valalgnsam\ind{Q=\frac{k}{2}}}&\geq \ex{\valfun(\textrm{first $k/2$ elements of $\opt\cap\nsam$ in the ordering $\pi$})\ind{Q=\frac{k}{2}}}\nonumber\\
 &= \ex{\valoptnsam\ind{Q=\frac{k}{2}}}\nonumber\\
 &~~-\ex{\valfun(\textrm{last $X$ elements of $\opt\cap\nsam$ in the ordering  $\pi$})\ind{Q=\frac{k}{2}}}\nonumber\\
 &\geq  \ex{\valoptnsam\ind{Q=\frac{k}{2}}}\nonumber\\
 &~~-\ex{\valfun(\textrm{last $X$ elements of $\opt\cap\nsam$ in the ordering  $\pi$})}
\label{equation26}
 \end{align}
 
 For a fixed set $\sam$, we have $(k,\delta+\frac{k^2}{q})$-UIOP for elements in $\opt\cap\nsam$ (this is an order oblivious fact), and hence the induced ordering of the elements in $\opt\cap\nsam$ is almost uniform. So, we have
 \begin{align}
 &\ex{\valfun(\textrm{last $X$ elements of $\opt\cap\nsam$ in the ordering  $\pi$})|\sam}\leq (1+\delta+\frac{k^2}{q})\ex{\valoptnsam \frac{X}{\lvert\opt\cap\nsam \rvert}|\sam}\nonumber\\
 &\leq (1+\delta+\frac{k^2}{q})\textrm{val}(\opt)\ex{\frac{\lvert\opt\cap\nsam\rvert-k/2}{\lvert \opt\cap\nsam\rvert}|\sam}
 \label{equation28}
 \end{align} 
 
 Now by taking expectation w.r.t. $\sam$ and combining it with (\ref{equation26}) we have 
 \begin{equation}
 \ex{\valalgnsam\ind{Q=\frac{k}{2}}}\geq  \ex{\valoptnsam\ind{Q=\frac{k}{2}}}-(1+\delta+\frac{k^2}{q})\textrm{val}(\opt)\ex{\frac{\lvert\opt\cap\nsam\rvert-k/2}{\lvert \opt\cap\nsam\rvert}}
 \end{equation}
 Moreover, one can use Lemma~\ref{k-unif:lemma1} to find an upper-bound on the error term in (\ref{equation28}). Fix any $\eps'$, Now we have 
 \begin{equation}
 \frac{\lvert\opt\cap\nsam\rvert-k/2}{\lvert \opt\cap\nsam\rvert}=\frac{\lvert\opt\cap\nsam\rvert-k/2}{\lvert \opt\cap\nsam\rvert}\ind{\lvert\opt\cap\nsam \rvert< k/2+\eps'}+\frac{\lvert\opt\cap\nsam\rvert-k/2}{\lvert \opt\cap\nsam\rvert}\ind{\lvert\opt\cap\nsam \rvert\geq k/2+\eps'}
 \label{equation29}
 \end{equation}
 By taking expectation from both sides of (\ref{equation29}), setting $\eps'=k^{\frac{1}{3}}$ and using Lemma~\ref{k-unif:lemma1} we have 
 \begin{align}
 \ex{\frac{\lvert\opt\cap\nsam\rvert-k/2}{\lvert \opt\cap\nsam\rvert}}&\leq \frac{\eps'}{k/2+\eps'}+\pr{\lvert\opt\cap\nsam \rvert\geq k/2+\eps'}\leq \frac{\eps'}{k/2}+\frac{k}{2\eps'}\leq \frac{3}{k^{\frac{1}{3}}}.
 \label{equation30}
 \end{align}
By combining (\ref{equation30}) and (\ref{equation28}) we have (note that $\delta \leq 1$) 
\begin{align}
\ex{\valalgnsam \ind{ Q=k/2}}\geq \ex{\frac{2Q}{k}\valoptnsam \ind{ Q=k/2}}-\valfun(\opt)(\frac{6}{k^{\frac{1}{3}}}+\frac{3k^{\frac{5}{3}}}{q})
\end{align}
 As we desired. 
 
 Now, by combining the above cases with each other we have
 \begin{align}
 \label{equation32}
 \ex{\valalgnsam}&\geq  \ex{\frac{2Q}{k}\valoptnsam}-\valfun(\opt)(\frac{6}{k^{\frac{1}{3}}}+\frac{3k^{\frac{5}{3}}}{q})\nonumber\\
&\geq  \ex{\valoptnsam}-\valfun(\opt)\left(\frac{6}{k^\frac{1}{3}}+\frac{3k^{\frac{5}{3}}}{q}+\ex{\frac{k/2-Q}{k/2}}\right)
 \end{align}
 Finally, by combining equations (\ref{equation23}) and (\ref{equation32}) we have 
 \begin{equation}
  \ex{\valfun(\algset)}\geq \valfun(\opt)\left(1-\ex{\frac{k/2-Q}{k/2}}-O(\frac{1}{k^{\frac{1}{3}}})-\frac{3k^{\frac{5}{3}}}{q}-e^{\frac{-q}{16}}+\epsilon(k/2,q/4,\delta)\right)
 \end{equation}

 As it can be seen, the question of finding the competitive ratio of $1-o(1)$ boils down to upper-bounding $\ex{k/2-Q}$. To do so, we define random variable $Y_1$ such that $(k/2)^{\textrm{th}}$ item in sorted-by-value list of items in $\sam$ will be the $(k+Y_1)^{\textrm{th}}$ item in $\items$. Now we claim that $Q\geq k/2-\lvert Y_1\rvert$. The proof is easy. If $v_0=0$ then algorithm picks $k/2$ items from $\nsam$ and we are done. Otherwise, there are $k+Y_1-k/2=k/2+Y_1$ items in $\nsam$ such that their values is at least $v_0$. By a simple case analysis, if $Y_1\leq 0$ then algorithm picks all of those,  and hence $Q\geq k/2+ Y_1=k/2-\lvert Y_1\rvert$. If $Y_1\geq 0$ then algorithm picks $k/2$ items which is $\geq k/2-\lvert Y_1\rvert$ we are again down. So $\ex{k/2-Q}\leq \ex{\lvert Y_1\rvert}$. Lemma~\ref{k-unif:lemma2} shows that $\ex{\lvert Y_1\rvert}= O(\sqrt k)$, and hence $\ex{\frac{k/2-Q}{k/2}}\leq O(1/\sqrt k)$. Hence, we have 
\begin{align}
\ex{\valfun(\algset)}&\geq \valfun(\opt)\left(1-O(\frac{1}{k^{\frac{1}{3}}})-\frac{3k^{\frac{5}{3}}}{q}-e^{\frac{-q}{16}}-\epsilon(k/2,q/4,\delta)\right)\nonumber\\
 &\geq  \valfun(\opt)\left(1-O(\frac{1}{k^{\frac{1}{3}}})-\epsilon(k,q,\delta)\right)
 \end{align}
  which completes the proof, as $\epsilon$ can be arbitrary small for large enough $q$ and small enough $\delta$ .

\subsubsection{Full proof of Theorem~\ref{kp:thm}}
\label{sec:korula-pal}
We define a randomized construction that defines an input and a probability distribution simultaneously. By the probabilistic method this implies the statement.

Our bipartite graph has $n$ vertices on the online and the offline side each. For each pair $(j, i)$, we add the connecting edge with probability $\frac{1}{2} - 8 \sqrt{\frac{\ln n}{n}}$ independently. In case $j$ and $i$ are connected, the edge weight is set to $w(j, i) = 1 - \epsilon(j + i)$ for $\epsilon = \frac{1}{n^2}$. This way, the expected weight of the optimal solution is $\Omega(n)$.

To define the distribution over permutations $\pi_i\colon \items \to [n]$, we draw for the first $\xi = \frac{2 (k+1)!}{\delta^2} \log n$ offline vertices $i \in R$ one permutation uniformly at random from the set of all permutations in which the neighbors of node $i$ come last. Afterwards, we draw one of these permutations $\pi_1, \ldots, \pi_\xi$ at random. We claim that this way, the probability distribution fulfills the $(k, \delta)$-uniform-induced-ordering property.

Fix $k$ distinct items $x_1, \ldots, x_k \in \items$. Note that we can ignore the fact that in any permutation neighbors come last as all $x_1, \ldots, x_k$ have the same probability of corresponding to a neighbor. Therefore, we can steadily follow the argument from Theorem~\ref{thm:rlec}. Let $Y_i = 1$ if $\pi_i(x_1) < \pi_i(x_2) < \ldots < \pi_i(x_k)$. As $\pi_i$ is drawn uniformly from the set of all permutations, we have $\Pr{Y_i = 1} = \frac{1}{k!}$. That is, we have $\Ex{\sum_{i = 1}^\xi Y_i} = \frac{\xi}{k!}$. As the random variables $Y_i$ are independent, we can apply a Chernoff bound. This gives us
\[
\Pr{\sum_{i = 1}^\xi Y_i \leq (1 - \delta) \frac{\xi}{k!}} \leq \exp\left( - \frac{\delta^2}{2} \frac{\xi}{k!} \right) = n^{k+1} \enspace.
\]
Note that if $\sum_{i = 1}^\xi Y_i \leq (1 - \delta) \frac{\xi}{k!}$ then the respective sequence $x_1, \ldots, x_k \in \items$ has probability at least $(1 - \delta) \frac{1}{k!}$ when drawing one permutation from $\pi_1, \ldots, \pi_\xi$.

There are fewer than $n^k$ possible sequences. Therefore, applying a union bound, with probability at least $1 - \frac{1}{n}$ the bound is fulfilled for all sequences simultaneously and therefore $S$ fulfills the stated condition.

It now remains to show that the Korula-P\'{a}l algorithm has a poor performance on this type of instance. The algorithm draws a transition point $\tau \sim \mathrm{Binom}(n, \frac{1}{2})$, before which it only observes the input and after which it starts a greedy allocation based on the vertices seen until round $\tau$ and the current vertex. It is important to remark that for the tentative allocation the other vertices seen between round $\tau$ and the current round are ignored. Only after a tentative edge has been selected, their allocation is taken into consideration in order to check whether the matching would still be feasible.

Let $\pi_i$ be the chosen permutation. That is, the neighbors of $i$ come last. Let $i$ have $n-A$ neighbors. We now claim that with high probability no neighbor of $i$ comes before $\tau$, i.e., $A < \tau$. Furthermore, after $\tau$ essentially only neighbors of $i$ arrive. This has the consequence that almost all vertices are tentatively matched to $i$. However, only the first such edge is feasible.

Using Chernoff bounds, we get
\[
\Pr{\tau < \frac{n}{2} - 2 \sqrt{n \ln n}} = \Pr{n - \tau > \frac{n}{2}\left(1 + 4 \sqrt{\frac{\ln n}{n}}\right)} \leq \exp\left( - \frac{1}{3} 16 \frac{\ln n}{n} \frac{n}{2} \right) < \frac{1}{n} \enspace.
\]
\[
\Pr{A > \frac{n}{2} - 2 \sqrt{n \ln n}} \leq \Pr{R > \left(1 + 4 \sqrt{\frac{\ln n}{n}}\right)\left(\frac{n}{2} - 8 \sqrt{n \ln n}\right)} < \frac{1}{n} \enspace.
\]
In case $A < \tau$, the value of the solution is upper-bounded by $j$ because every node is tentatively matched to a vertex of index at most $i$. As $i \leq \tau$, this gives a value bounded by $\xi = \frac{2(k+1)!}{\delta^2} \ln n$.

\end{document}